\renewcommand\footnotetextcopyrightpermission[1]{}
\definecolor{mygreen}{rgb}{0,0.6,0}
\definecolor{mygray}{rgb}{0.98,0.98,0.98}
\definecolor{mymauve}{rgb}{0.58,0,0.82}
\definecolor{sangria}{rgb}{0.57, 0.0, 0.04}
\definecolor{denim}{rgb}{0.08, 0.38, 0.74}
\newlength{\MaxSizeOfLineNumbers}%
\lstdefinestyle{customc}{
  belowcaptionskip=0\baselineskip,
  breaklines=true,
  xleftmargin=\MaxSizeOfLineNumbers,
  language=C,
  numbers=left,
  showstringspaces=false,
  basicstyle=\footnotesize\ttfamily,
  keywordstyle=\bfseries\color{green!40!black},
  morekeywords={ASSUME}, 
  commentstyle=\itshape\color{purple!40!black},
  morecomment  = [is]{/*}{*/},
  identifierstyle=\color{cyan!40!black},
  stringstyle=\color{orange},
  tabsize=1,
  backgroundcolor=\color{mygray},
}
\lstdefinestyle{examples}{
  belowcaptionskip=1\baselineskip,
  breaklines=true,
  xleftmargin=0em,
  language=C,
  showstringspaces=false,
  basicstyle=\small\ttfamily,
  keywordstyle=\bfseries\color{blue!40!black},
  commentstyle=\itshape\color{purple!40!black},
  morekeywords={read, write, cas, ASSUME,assume},
  keywords=[2]{rlx,ra,P1,P2a,P2b,P2c},
  keywordstyle=[2]\bfseries\color{orange},
  identifierstyle=\bfseries\color{magenta!50!black},
  stringstyle=\color{orange},
  tabsize=1
}
\lstdefinestyle{pseudoc}{
  belowcaptionskip=1\baselineskip,
  breaklines=true,
  xleftmargin=2em,
  mathescape=true,
  language=C,
  numbers=left,
  showstringspaces=false,
  basicstyle=\small\ttfamily,
  keywordstyle=\bfseries\color{blue!40!black},
  commentstyle=\itshape\color{purple!40!black},
  morekeywords={read, write, cas},
  keywords=[2]{view, message,Memory, Promise, Promises, timestamp},
  keywordstyle=[2]\color{orange},
  keywords=[3]{generate,get,check,mark,replace,insert,load, save, update},
  keywordstyle=[3]\color{blue},
  identifierstyle=\bfseries\color{magenta!50!black},
  stringstyle=\color{orange},
  tabsize=2,
  backgroundcolor=\color{mygray},
}
\newcommand{\goldf}[1]{{\color{orange} #1}}
\newcommand{\gmess}{\goldf{\texttt{message}}~}
\newcommand{\gmem}{\goldf{\texttt{Memory}}~}
\newcommand{\gview}{\goldf{\texttt{view}}~}
	\definecolor{applegreen}{rgb}{0.55, 0.71, 0.0}
	\definecolor{crimsonglory}{rgb}{0.75, 0.0, 0.2}
	\definecolor{ferngreen}{rgb}{0.31, 0.47, 0.26}
\newcommand\appendixmain[1]{}
	\definecolor{amethyst}{rgb}{0.6, 0.4, 0.8}
\newcommand\prog{{\it Pro	g}}
\newcommand\varset{{\mathsf{Loc}}}
\newcommand\procset{{\mathcal{P}}}
\newcommand\regset{{\mathsf{Reg}}}
\newcommand\regsetof[1]{\regset\left(#1\right)}
\newcommand\proc{{p}}
\newcommand\instr{\mathfrak{i}}
\newcommand\stmt{\mathfrak{s}}
\newcommand\assigned\leftarrow
\newcommand\terminated{term}
\newcommand{\arw}{{\mathbf{CAS}}}
\newcommand{\cas}{{\mathbf{CAS}}}
\newcommand{\rlx}{{\tt{rlx}}}
\newcommand{\lhc}{{\mathsf{LoHoW}}}
\newcommand{\fadd}{{\mathbf{FADD}}}
\newcommand{\var}{{\mathsf{loc}}}
\newcommand{\ch}{{\mathsf{HW}}}
\newcommand{\lst}{{\mathfrak{st}}}
\newcommand{\chh}{{\mathsf{\bf HW}}}
\newcommand{\wrt}{{\mathsf{wrt}}}
\newcommand{\from}{{\textcolor{cobalt}{\tt{frm}}}}
\newcommand{\msgg}{{\mathsf{msg}}}
\newcommand{\ptr}{{\mathsf{ptr}}}
\newcommand{\tool}{$\mathsf{PS2SC}$}
\newcommand{\too}{{\textcolor{cobalt}{\tt{to}}}}
\newcommand{\acq}{{\tt{acq}}}
\newcommand{\rel}{{\tt{rel}}}
\newcommand{\upd}{{\tt{U}}}
\newcommand{\Ss}{{\mathcal{S}}}
\newcommand{\adj}{{\mathsf{Adj}}}
\newcommand{\delete}{\mathsf{del}}
\newcommand{\ps}{\textsf{PS 2.0}\xspace}
\newcommand{\bps}{\textsf{bdPS 2.0-rlx}}
\newcommand{\psr}{\textsf{PS 2.0-rlx}}
\newcommand{\act}[2][]{\mathrel{\raisebox{-1pt}[10pt][0pt]{%
			\ensuremath{\underset{^{\raisebox{-6pt}[0pt][0pt]{\ensuremath{^{^{#1}}}}}}%
				{\raisebox{0pt}[3pt][0pt]{\ensuremath{\relbar\mspace{-8mu}\xrightarrow{#2}}}}}}}}
\newcommand\xvar{x}
\newcommand\lbl\lambda
\newcommand\reg{\$r}
\newcommand{\init}{\mathsf{init}}
\newcommand{\Ii}{\mathcal{I}}
\newcommand{\MS}{\mathcal{MS}}
\newcommand{\Tt}{\mathcal{T}}
\newcommand{\fence}{\mathsf{SC\text{-}fence}}
\newcommand{\mini}{\mathsf{min}}
\newcommand{\tar}{\tt{target}}
\newcommand{\prm}{{\tt{prm}}}
\newcommand{\cert}{{\tt{cert}}}
\newcommand{\nor}{{\tt{std}}}
\newcommand{\stdd}{{\tt{std}}}
\newcommand{\view}{{\mathsf{View}}}
	\definecolor{burgundy}{rgb}{0.5, 0.0, 0.13}
\definecolor{cadmiumgreen}{rgb}{0.0, 0.42, 0.24}
\definecolor{auburn}{rgb}{0.43, 0.21, 0.1}
\newcommand{\val}{\tt{{val}}}
	\definecolor{cobalt}{rgb}{0.0, 0.28, 0.67}
\newcommand{\assume}{\keyworr{assume}}
\newcommand\keywor[1]{\textcolor{violet!40!blue}{{\tt #1}}}
\newcommand\keyword[1]{\textcolor{auburn}{{\tt #1}}}
\newcommand\keyworr[1]{\textcolor{cadmiumgreen}{{\tt #1}}}
\newcommand\kw[1]{\texttt{\textcolor{burgundy}{#1}}}
\newcommand\kwif{\kw{if}}
\newcommand\kwthen{\kw{then}}
\newcommand\kwelse{\kw{else}}
\newcommand\kwendif{\kw{end\,if}}
\newcommand\kwwhile{\keywor{while}}
\newcommand\kwdo{\keywor{do}}
\newcommand\kwdone{\keywor{done}}
\definecolor{byzantium}{rgb}{0.44, 0.16, 0.39}
	\definecolor{blush}{rgb}{0.87, 0.36, 0.51}
\newcommand \lcomme{\textcolor{blush}{/*}}
\newcommand \rcomme{\textcolor{blush}{*/}}
\newcommand \pouto {\textcolor{green!50!blue}{~//}}
\newcommand \poutt {\textcolor{blush}{~//}}
\newcommand \rd{{\tt{rd}}}
\newcommand{\true}{\tt{true}}
\newcommand{\false}{\tt{false}}
\newcommand{\ra}{{\tt{ra}}}
\newcommand \wt{{\tt{wt}}}
\newcommand\conf{\mathfrak{c}}
\newcommand\initconf{\mathfrak{c}_{\it init}}
\newcommand\confset{\mathcal{C}}
\newcommand\msg[4]{\ensuremath{({#1},{#2},({#3},{#4}])}}
\newcommand\msgnew[5]{\ensuremath{({#1},{#2},({#3},{#4}],{#5})}}
\newcommand\reserv[3]{\ensuremath{({#1},({#2},{#3}])}}
\newcommand\rcmc{\textsc{Rcmc}}
\newcommand\cdsc{\textsc{CDSChecker}}
\newcommand\genmc{\textsc{GenMC}}
\newcommand\sr[2][]{\ext@arrow 0099{\longrightarrowfill@}{#1}{#2}}
\def\longrightarrowfill@{\arrowfill@\relbar\rightarrow}
\newcommand{\rtstep}[1]{\mathbin{[{#1}]^{*}}}
\newcommand{\upclos}[1]{{{#1}{\uparrow}}}
\newcommand{\sem}[1]{\llbracket{#1}\rrbracket}
\begin{document}

\title[The Decidability of Verification under Promising 2.0]{The Decidability of Verification under Promising 2.0}         

\author[Parosh Abdulla]{Parosh Aziz Abdulla}
\affiliation{
  \institution{Uppsala University}
  \city{Uppsala}
  \country{Sweden}
}

\author[M. Faouzi Atig]{Mohamed Faouzi Atig}
\affiliation{
  \institution{Uppsala University}
  \city{Uppsala}
  \country{Sweden}
}


\author{Adwait Godbole}
\affiliation{
  \institution{IIT Bombay}
  \city{Mumbai}
  \country{India}
}

\author[S. Krishna]{Shankaranarayanan Krishna}
\affiliation{
  \institution{IIT Bombay}
  \city{Mumbai}
  \country{India}
}

\author[V. Vafeiadis]{Viktor Vafeiadis}
\affiliation{
  \institution{MPI-SWS}
  \streetaddress{Saarland Informatics Campus (SIC)}
  \city{Kaiserslautern and Saarbr\"ucken}
  \country{Germany}
}

\begin{abstract}
In PLDI'20, Kang et al.\ introduced the \emph{promising } semantics ($\ps$) of the C++ concurrency that  captures most of the common program transformations while  satisfying the  DRF guarantee. The reachability problem for finite-state programs under $\ps$ with only release-acquire accesses ($\ps$-$\ra$) is already known to be undecidable. Therefore, we address, in this paper, the reachability problem  for programs running under $\ps$ with relaxed accesses ($\psr$) together with promises.  We show that this problem is undecidable even in the case where the input program
has finite state. Given this undecidability result, 
we consider the fragment of $\psr$ with a bounded number of promises.
We show that under this restriction, the  reachability is decidable, albeit very expensive: it is non-primitive recursive.
Given this high complexity for $\psr$ with bounded number of promises and the undecidability result for $\ps$-$\ra$, we consider a bounded version of the reachability problem. To this end, we bound both the number of promises and the ``view-switches'', i.e, the number of times the processes may switch their local views of the global memory.
We provide a code-to-code translation from an input program under $\ps$,  with 
 relaxed and release-acquire memory accesses along with promises, to a program under SC. This leads to a reduction of the bounded reachability problem under $\ps$ to the bounded context-switching problem under SC. We have implemented a prototype tool and tested it on a set of benchmarks, demonstrating that many bugs in programs can be found using a small bound. 
\end{abstract}

\begin{CCSXML}
<ccs2012>
<concept>
<concept_id>10011007.10011074.10011099.10011692</concept_id>
<concept_desc>Software and its engineering~Formal software verification</concept_desc>
<concept_significance>500</concept_significance>
</concept>
</ccs2012>
\end{CCSXML}

\ccsdesc[500]{Software and its engineering~Formal software verification}
\sloppy

\keywords{Model-Checking, Weak Memory Models, Promising Semantics} 

\maketitle

\section{Introduction}
\label{sec:intro}

An important long-standing open problem in PL research has been to define a   weak memory model that 
 captures the semantics of concurrent  memory accesses in languages like Java and C/C++.
A model is considered good if it can be implemented efficiently
(i.e., if it supports all usual compiler optimizations
and its accesses are compiled to plain x86/ARM/Power/RISCV accesses),
and is easy to reason about. 
After many attempts at solving this problem (e.g., \cite{jmm,zhang-feng:2013,bubbly,crary-sullivan:2015,rc11,jeffrey-riely:2019,batty:2011}),
a breakthrough was achieved by Kang et al.~\cite{promising}, who introduced the \emph{promising semantics}.
This was the first model that supported basic invariant reasoning, the DRF guarantee,
and even a non-trivial program logic~\cite{Svendsen:2018}.

In  the promising semantics, the memory is modeled as a set of timestamped messages, each corresponding to a write made by the program. 
Each process/thread records its own view of the memory---i.e., the latest timestamp for each memory location that it is aware of. A message has the form $\msgnew{x}{v}{f}{t}{V}$ where $x$ is a location, $v$ a value to be stored for $x$, $(f,t]$ is the timestamp 
interval corresponding to the write and $V$ is the local view 
of the process who made the write to $x$.
When reading from memory, a process can either return the value stored at the timestamp in its view
or advance its view to some larger timestamp and read from that message.
When a process $p$ writes to memory location $x$,  a new message with a timestamp larger than $p$'s view of $x$ is created, 
and $p$'s view is advanced to include the new message.
In addition, in order to allow load-store reorderings,  a process is allowed to \emph{promise} a certain write in the future.  A promise is also added as a message in the memory, except that the local view of the process is not updated using the timestamp interval in the message. 
This is done only when the promise is eventually fulfilled. 
A {\em consistency} check is used to ensure that every promised message can be {\em certified} (i.e., made fulfillable) by executing that process 
on its own. Furthermore, this should hold from any future  memory  (i.e., from any extension of the memory with additional  messages). The quantification  prevents deadlocks (i.e., processes from making promises they are not able  to fulfil).
 The promising semantics  generally allows program executions to contain unboundedly many concurrent promised messages,
provided that all of them can be certified. As one can immediately see, this is a fairly complex model,
and beyond its support for some basic reasoning patterns,
it is not at all obvious whether it is easy to reason about concurrent programs running under this model. 
Furthermore, the  unbounded number of  future memories, that need to be checked,   makes the verification of even simple programs practically infeasible. Moreover, a number of transformations based on global value range analysis as well as register promotion were  not supported in ~\cite{promising}.   

To address the above concerns, a new version of the promising semantics $\ps$ \cite{promising2} has been 
proposed, by redesigning key components of the promising semantics \cite{promising}. Mainly, $\ps$ supports register promotion and global value range analysis, while capturing all features (thread local optimizations, DRF guarantees, hardware mappings)  of the promising semantics of \cite{promising}.   $\ps$ simplifies also the consistency check and instead of checking the  promise fulfilment from all future memories, $\ps$
checks for promise fulfilment only from a specially crafted extension 
of the current memory called capped memory. $\ps$ also introduces 
the notion of reservations, which allows a process to secure an timestamp interval  in order to perform a future atomic read-modify-write instruction. The reservation 
blocks any other message from using that timestamp interval.    
Reservations allows register promotions.

 The wide umbrella of features of $\ps$ allowing two memory access modes, 
 relaxed ($\rlx$) and release-acquire ($\ra$) along with promises, reservations  and subsequent certification make $\ps$ a very complex 
 model. While the $\ps$ semantics is a breakthrough 
 contribution, a natural and fundamental question is to investigate  the  verification of concurrent programs under $\ps$. For that, investigating the decidability
 of  verification problems as well as defining efficient analysis techniques  are two extremely important problems.
  
One of the problems addressed in this paper  
is to ivestigate the decidability of the reachability problem for $\ps$. 
Let $\ps$-$\rlx$ and $\ps$-$\ra$ 
represent respectively, the fragment of $\ps$ allowing only relaxed 
($\rlx$) and release-acquire ($\ra$) memory accesses.  
The reachability with only $\ra$ accesses has been shown to be  undecidable \cite{pldi2019}, even without 
the features of promises and reservations. That leaves only the fragment $\psr$ of $\ps$ 
for investigation. We show that if unbounded number of promises is allowed, the reachability problem is 
 undecidable in $\psr$, while it becomes decidable if we  bound
the number of promises at any time (however, the total number of promises made with a run can be unbounded).  Our undecidability is obtained with just 2 threads, 
with an execution where the number of context switches between the two processes is three, where a context is a computation segment 
in which one process is active. 
  The proof of decidability is done 
by proposing a new memory model with higher order words $\lhc$, and showing the equivalence 
of $\psr$ and $\lhc$. Under the bounded promises assumption,,  
we use the decidability of the coverability problem of well structured transition systems (WSTS) \cite{wsts2,wsts1} to 
show that the reachability problem for $\lhc$ with bounded number of promises is decidable.
 
Given this high complexity for $\psr$ with bounded number of promises and the undecidability result for $\ps$-$\ra$  \cite{pldi2019}, we consider a bounded version of the reachability problem. To this end, we propose  a 
parametric under-approximation in the spirit of context bounding \cite{cb2,DBLP:conf/cav/TorreMP09,DBLP:journals/fmsd/LalR09,demsky,MQ07,DBLP:conf/tacas/QadeerR05,pldi2019,cb3}. 
The bounding concept chosen for concurrent programs depends on aspects related to the 
interactions between the processes. In the case of SC programs, context bounding 
has been shown experimentally to have extensive behaviour coverage for bug detection \cite{MQ07,DBLP:conf/tacas/QadeerR05}. 
A context in the SC setting is a computation segment where only one process is active. 
The concept of context bounding has been extended for  weak memory models. For instance, in TSO, 
 the notion of  context is extended to one where all updates to the main memory are done only from the buffer  
 of  the active thread \cite{cb2}. 
  In the case 
  of RA \cite{pldi2019}, context bounding was extended to view bounding, 
  using the notion of view-switching messages.  
  Since $\ps$ subsumes RA, we propose a 
  bounding notion that extends the view bounding 
  proposed in \cite{pldi2019}. Using this new bounding notion, we propose a source to source translation 
  from programs under $\ps$ to context-bounded executions 
  of the transformed program in SC. 
  The main challenge in the code-to-code translation of \cite{pldi2019} was to keep track of the causality between different variables. In our case, the challenge is  fundamentally different and  is to provide a  procedure that 
  (i) handles different memory accesses $\rlx$ and $\ra$, 
  (ii) guesses the promises and reservations in a non-deterministically manner, and (iii) verify that each promise so guessed is fulfilled using the capped memory.  
This reduction is implemented in a tool, called \tool.
Our experimental results   demonstrate the effectiveness of our approach.
We exhibit cases where hard-to-find bugs are detectable using a small view-bound $K$.
Our tool displays resilience to trivial changes in the position of bugs and the order of processes.

\smallskip

\noindent
{\textbf{Related Work.}}
  The decidability of the verification problems for programs running under weak memory models has been addressed for TSO \cite{ABBM10}, $\ps$-$\ra$ \cite{pldi2019}, Power \cite{Power_Netys20}, and for a subclass of  $\ps$-$\ra$ \cite{DBLP:conf/pldi/LahavB20}. To the best of our knoweldge, this the first time that this problem is investigated for $\psr$ and  \tool{} is the first tool for automated verification of programs under $\ps$, which also works 
  for the promising semantics    \cite{promising}. 
 Most of the existing work concerns the development of 
 stateless model checking (SMC), coupled with (dynamic) partial order
reduction techniques (e.g., \cite{phong,rcmc,genmc,demsky,cdsc}) and do not handle promises.  Context-bounding has been proposed in \cite{DBLP:conf/tacas/QadeerR05} for programs running under SC.  This work has been extended in different directions and has led to efficient and scalable techniques for the analysis of concurrent programs (see e.g., \cite{MQ07,DBLP:journals/fmsd/LalR09,DBLP:conf/cav/TorreMP09,madhu3,DBLP:conf/popl/EmmiQR11,DBLP:conf/cav/TorreMP10}). In the context of weak memory models, context-bounded analysis has been only proposed to programs running under  TSO/PSO in \cite{cb2,DBLP:conf/sefm/TomascoN0TP17} and under  POWER  in \cite{cb3}.  

\section{Preliminaries}
\label{sec:prels}

In this section, we introduce the simple programming language and the notation
that will be used throughout. Then, we  review $\ps$  definition, and present 
the model following  \cite{promising2}.

\subsection{Notations}
Given two natural numbers $i, j \in \mathbb{N}$ s.t. $i \leq j$, we use $[i,j]$ to denote the set $\{k \,|\, i \leq k \leq j\}$.
Let $A$ and $B$ be two sets. We use $f: A \rightarrow B$ to denote that $f$ is a function from $A$ to $B$. We define $f[a \mapsto b]$ to be the function $f'$ such that $f'(a)=b$ and $f'(a')=f(a')$ for all $a' \neq a$. 
For a binary relation $R$, we use $\rtstep R$ to denote its reflexive and transitive closure.
Given an alphabet $\Sigma$, we use $\Sigma^*$ (resp.\ $\Sigma^+$) to denote the set of possibly empty (resp.\ non-empty) finite words over $\Sigma$.
Let $w= a_1 a_2 \cdots a_n$ be a word over $\Sigma$, we use $|w|$ to denote the length of $w$.
Given an index $i$ in  $[1,|w|]$, we use $w[i]$ to denote the $i^{\text{th}}$ letter of $w$.
Given two indices $i$ and $j$ s.t. $1\leq i \leq j \leq |w|$, we use $w[i,j]$ to denote the word $a_i a_{i+1} \cdots a_j$. Sometimes, we consider a word as a function from $[1,|w|]$ to $\Sigma$.

\subsection{Program Syntax}

The simple programming language we use 
is described in Figure \ref{program_syntax}. 
A program $\prog$  consists of a set $\varset$ of (global) variables or memory locations,
and the  definition of a set $\procset$ of processes.  
Each  process $\proc$ declares a set $\regsetof\proc$ of (local) {\it registers} followed by  a sequence of labeled instructions. We assume that these sets of registers are disjoint and we use  $\regset:=\cup_\proc\regsetof\proc$ to denote their union. We assume also a (potentially unbounded) data domain $\mathsf{Val}$ from which the registers and locations take values. 
All locations and registers are assumed to be initialized with the   
special value $0 \in \mathsf{Val}$ (if not mentioned otherwise).
An instruction $\instr$ is of the form   $\lbl : \stmt$ where 
$\lbl$ is a unique label and $\stmt$ is a statement. We use  
$\mathbb{L}_{\proc}$ to denote the set of  all  labels 
of the process $\proc$, and  $\mathbb{L}=\bigcup_{\proc \in \procset}\mathbb{L}_{\proc}$ the set of all  labels of all processes. 
We assume that the  execution of the process $\proc$ starts always with a unique initial instruction  labeled by 
$\lambda_{\rm init}^{\proc}$.
A write instruction is of the form $\xvar^o=\reg$ assigns the value of register $\reg$ to the location $\xvar$, and 
$o$ denotes the access mode. If $o=\mathsf{rlx}$, the write is a \emph{relaxed} write, while 
if $o=\mathsf{ra}$, it is a  \emph{release} write. 
A read instruction $\reg=\xvar^o$  reads the value of the location $\xvar$ into the local register $\reg$. Again, 
if the access mode $o=\mathsf{rlx}$, it is a \emph{relaxed} read, and 
if $o=\mathsf{ra}$, it is an \emph{acquire} read. Atomic updates or $\mathsf{RMW}$  
instructions are either  
 compare-and-swap ($\arw^{o_r,o_w}$) or $\fadd^{o_r, o_w}$. Both have a pair of accesses ($o_r,o_w \in \{ \mathsf{rel}, \mathsf{acq}, 
 \mathsf{rlx}\}$) to the same location -- a read   followed by a write.  Following \cite{promising2},  $\fadd(x,v)$  stores the value of $x$ into a register $\reg$, and adds $v$ 
to $x$, while  $\cas(x, v_1, v_2)$  compares an expected value $v_1$ to the value in  $x$, and 
 if the values are same, sets the value of  $x$ to $v_2$. The old value of $x$
 is then stored in $\reg$.

\tikzset{background rectangle/.style={draw=black,rounded corners,fill = black!2}}
\setlength\intextsep{0pt}
\begin{wrapfigure}[12]{r}{5.5cm}
  \begin{tikzpicture}[codeblock/.style={line width=0.3pt, inner xsep=0pt, inner ysep=0pt}, show background rectangle]
\node[codeblock] (init) at (current bounding box.north east) {
\footnotesize
{
$
\begin{array}{l}
~~~~\prog ::=  \keyword{var}~ x^*(\keyword{proc} ~p|| \dots || \keyword{proc} ~p) \\
~~~~ \keyword{proc}~ p::=\regset(p)~ \instr^*\\
~~~\instr::=\lambda:\stmt\\
~\stmt \in \mathsf{St}::=   \\
~~\;\;\mathsf{skip} ~~\;\;|s;s ~~\;\;|\assume(\xvar=e)\\
~~\;\;|\kwdo ~s^* ~\kwwhile~ e ~~\;\;|\kwwhile ~e ~\kwdo~ s^* \kwdone \\
~~\;\;|\kwif~ e ~	\kwthen~ s~ \kwelse ~s ~\\
~~\;\;|\reg~:=~ e ~~\;\;|\reg~:=~ x^o ~~\;\;|x^o~:=~ \reg \\
~~\;\;|\reg~:=~\mathbf{FADD}^{o,o}(x,v) \\~~\;\;|\reg~:=~\mathbf{CAS}^{o,o}(x,v,v)~~\;\;|\fence\\
 o \in \mathsf{Mode} ::=  \mathsf{rlx}|\mathsf{ra}
\end{array}
$
}
};
\end{tikzpicture}
\caption{\footnotesize Syntax of concurrent programs.}
\label{program_syntax}
\end{wrapfigure}

 A {\em local} assignment instruction $\reg=e$ assigns to the register $\reg$ the value of $e$, where $e$ is an expression over a set of operators,
constants as well as the contents of the registers of the current process, but not referring to the set of locations. 
The fence instruction $\keyword\fence$ is used to enforce sequential consistency  if it is placed between two memory access operations. 
Finally, the conditional, assume and iterative instructions
 have the standard semantics.
For simplicity, we will write $\assume(\xvar=e)$ instead of 
${\reg=\xvar}; \assume(\reg=e)$.
This notation is extended in the straightforward manner to conditional statements.

\subsection{The Promising Semantics}
\label{sec:ps}
In this section, we recall the promising semantics \cite{promising2}.  We present here $\ps$ with three 
memory accesses, \emph{relaxed} (this is the default mode), 
\emph{release writes} ($\rel$) and \emph{acquire reads} 
($\acq$). Read-modify-writes (RMW) instructions  have two access modes	 - one for read 
and one for write. We keep aside 
the release and acquire fences (and subsequent 
access modes) which are part of $\ps$, since 
they do not affect the results of this paper.

\smallskip\noindent{\bf Timestamps.}
$\ps$ uses timestamps to maintain a total order over all the writes to the same variable. 
We assume an infinite set of timestamps $\mathsf{Time}$, densely totally ordered by $\leq$,
with $0$ being the minimum element. A \emph{view} is a timestamp function $V : \varset \rightarrow \mathsf{Time}$ records the largest known timestamp for each location.  
Let  $\mathbb{T}$ be  the set containing all the timestamp functions, along with the special 
symbol $\bot$. 
  Let $V_{\rm init}$ represent the initial view where  all locations are mapped to $0$.  Given two views $V$ and $V'$, we use $V \leq V'$ to denote that $V(x) \leq V'(x)$ for $x \in \varset$. The merge operation $\sqcup$ between the two views $V$ and $V'$ returns the pointwise maximum of   $V$ and $V'$,
i.e., $(V \sqcup V')(y)$ is the maximum of $V(y)$ and $V'(y)$.  Let $\Ii$ denote the set of all intervals over $\mathsf{Time}$.
  The timestamp intervals in $\Ii$ have the form $(f,t]$ where either $f=t=0$ or $f < t$, with $f, t \in \mathsf{Time}$. Given an interval $I=(f,t] \in \Ii$, $I.\from$ and $I.\too$ denote $f, t$ respectively. 

\smallskip\noindent{\bf Memory.}
In $\ps$, the memory is modelled as a set of concrete \emph{messages} (which we just call messages), and \emph{reservations}. 
Each message represents the effect of a write or a RMW operation and each reservation is a timestamp interval reserved for future use. 
In more detail, a message $m$ is a tuple $\msgnew{\xvar}{v}{f}{t}{V}$ where $\xvar \in \varset$, $v \in \mathsf{Val}$, $(f, t] \in \Ii$ and $V\in \mathbb{T}$. A reservation $r$ is a tuple $\reserv{x}{f}{t}$. Note that a reservation, unlike a message, does not commit to any particular value, but only specifies the interval which is reserved.
We use $m.\var$ ($r.\var$), $m.\val$, $m.\too$ ($r.\too$), $m.\from$ ($r.\from$) and $m.\view$ to denote respectively $x$, $v$, $t$, $f$ and $V$.
Two elements (either messages or reservations) are said to be \emph{disjoint} ($m_1 \# m_2$) if they concern different variables ($m_1.\var\neq m_2.\var$) or their intervals do not overlap ($m_1.\too < m_2.\from \lor m_1.\from > m_2.\too$).
Two sets of elements $M, M'$ are disjoint, denoted $M \# M'$, if $m \# m'$ for every $m \in M, m' \in M'$. 
Two elements $m_1, m_2$ are \emph{adjacent} denoted $\adj(m_1,m_2)$  if $m_1.\var=m_2.\var$ and $m_1.\too=m_2.\from$. A memory $M$ is a set of pairwise disjoint messages and reservations.  Let $\widetilde{M}$ be the subset of $M$ containing 
only messages (no reservations). For a location $x$, let $M(x)$ be  
$\{m \in M \mid m.\var=x\}$. Given a view $V$ and a memory $M$,  we say $V \in M$ if  $V(x)=m.\too$ for some 
message $m \in \widetilde{M}$  for every $x \in \varset$. Let ${\mathbb M}$ denote the set of all  memories.

\smallskip

\noindent
{\it Insertion into Memory.} Following \cite{promising2}, 
a memory $M$ can be extended with a \emph{message} (due to the execution of a write/RMW instruction) or a \emph{reservation} $m$ with $m.\var = x$, $m.\from = f$ and $m.\too = t$ in a number of ways:

\smallskip

\noindent
{[Additive insertion]}
$M \stackrel{A}{\hookleftarrow} m$ is defined only if (1) $M\# \{m\}$; (2) if $m$ is a message, then no message $m' \in M$ has $m'.\var = x$ and $m'.\from = t$; and (3) if $m$ is a reservation, then there exists  a message $m'\in \widetilde{M}$ with $m'.\var = x$ and $m'.\too = f$. The extended memory $M \stackrel{A}{\hookleftarrow} m$ is then $M \cup \{m\}$.

\noindent
{[Splitting insertion]}
$M \stackrel{S}{\hookleftarrow} m$ is defined if $m$ is a  message, and, if there exists a message $m'=(x, v', (f,t'],V)$  with $t < t'$ in $M$.  Then $M$ is updated to $M \stackrel{S}{\hookleftarrow} m = (M\backslash\{m'\} \cup \{m, (x,v', (t,t'],V)\})$.

\noindent
[Lowering Insertion] 
$M \stackrel{L}{\hookleftarrow} m$ is only defined if  there exists $m'$ in $M$ that is identical to $m=\msgnew x v f t V$ except for $m.\view \leq m'.\view$. Then, $M$ is updated to $M \stackrel{L}{\hookleftarrow} m = M\backslash \{m'\} \cup \{m\}$.

\smallskip

\noindent
[Cancellation] 
$M \stackrel{C}{\hookleftarrow} m$ is defined if $m$ is a reservation in $M$. 
Then  $M$ is updated as  $M\setminus \{m\}$.

\smallskip\noindent{\bf  Transition System of a Process.}
Given a process $\proc \in \procset$, a  state $\sigma$  of $\proc$ is defined as a pair $(\lambda,R)$ where  $\lambda  \in \mathbb{L}$  is the label of the next instruction to be executed by $\proc$ and $R : \regset \rightarrow \mathsf{Val}$ maps each register of $\proc$ to its current value. (Observe that we use the set of all labels $\mathbb{L}$ (resp. registers $\regset$) instead of  $\mathbb{L}_{\proc}$ (resp. $\regsetof\proc$) in the definition of $\sigma$   just for the sake of simplicity.) Transitions between the states of  $\proc$ are of the form $ (\lambda,R)\xRightarrow[p]{t} (\lambda',R')$ with $t \in \{\epsilon, \rd(o,x,v), \wt(o,x,v), \upd(o_r, o_w, x, v_r, v_w), \fence\,|\, x \in \varset, v \in \mathsf{Val}, o\in  \{\rlx,\ra\} \}$. A transition of the form $ (\lambda,R)\xRightarrow[p]{\rd(o,x,v)} (\lambda',R')$ denotes the execution of a read instruction of the form $\reg=\xvar^o$  labeled by $\lambda$ where  $(1)$  $\lambda'$ is the label 
of the next instructions that can be executed after the execution of the instruction labelled by $\lambda$, and   $(2)$ $R'$ is the mapping that results from the replacement  of the value of the register $\reg$ in $R$ by $v$. The transition relation $ (\lambda,R)\xRightarrow[p]{t} (\lambda',R')$ is defined in similar manner for the other cases of $t$ where $\wrt(o,x,v)$ stands for a write instruction that  writes the value $v$ to $x$, $\upd(o_r, o_w, x, v_r, v_w)$  stands for a RMW that reads the value $v_r$ from $x$ and write $v_w$ to it, $\fence$ stands for a $\fence$ instruction, and $\epsilon$ stands for the execution of the other local instructions. Observe that   $o, o_r, o_w$ are  the access modes which can be $\rlx$ 
or $\ra$. We use $\ra$  for both release and acquire. 
Finally, we use $ (\lambda,R)\xrightarrow[p]{t} (\lambda',R')$ with $t \in \{\rd(o,x,v), \wt(o,x,v), \upd(o_r, o_w, x, v_r, v_w), \fence\,|\, x \in \varset, v \in \mathsf{Val}, o\in  \{\rlx,\ra\} \}$ to denote that  $ (\lambda,R)\xRightarrow[p]{\epsilon} \sigma_1   \xRightarrow[p]{\epsilon}  \cdots \xRightarrow[p]{\epsilon} \sigma_n \xRightarrow[p]{t} \sigma_{n+1} \xRightarrow[p]{\epsilon} \cdots \xRightarrow[p]{\epsilon}  (\lambda',R')$.

\smallskip\noindent{\bf Machine States.} 
A machine state  $\MS$ is a tuple $(({\sf J}, {\sf R}), \sf {VS}, {\sf PS}, M, G)$, where
${\sf J} : \procset \mapsto \mathbb{L}$ maps each process $p$ to the label of the next instruction to be executed, 
${\sf R} : \regset \rightarrow \mathsf{Val}$ maps each register to its current value, 
${\sf VS} = \procset \rightarrow \mathbb{T}$ is the process view map, which maps each process to a view,	
$M$ is a memory and $PS: \procset \mapsto {\mathbb M}$ maps each process to a set of messages (called \emph{promise} set),  and 
$G \in \mathbb{T}$ is the global view (that will be used by SC fences).
We use  $\confset$ to denote the set of all machine states.

Given a machine state $\MS=(({\sf J}, {\sf R}), \sf {VS}, {\sf PS}, M, G)$ and a process $p$, let $\MS{\downarrow}p$ denote the projection, $(\sigma, \sf {VS}(p), {\sf PS}(p), M, G)$ with $\sigma=({\sf J}(p), {\sf R}(p))$, of the machine state to the process $p$. 

We call $\MS {\downarrow} p$ the process configuration. 
We use $\confset_p$ to denote the set of all process configurations.

The initial machine state $\MS_{\rm init}=(({\sf J}_{\rm init}, {\sf R}_{\rm init}), {\sf VS}_{\rm init}, {\sf PS}_{\rm init}, M_{\rm init},G_{\rm init})$ is one where:
(1) ${\sf J}_{\rm init}(p)$ is the label of the  initial   instruction of $\proc$; (2) ${\sf R}_{\rm init}(\reg)=0$ for every  $\reg \in \regset$;
(3) for each $p$, we have ${\sf VS}(p) = V_{\rm init}$ as the initial view (that maps each location to the timestamp 0);  

(4) for each process $p$, the set of promises ${\sf PS}_{\rm init}(p)$ is empty;
(5) the initial memory $M_{\rm init}$ contains exactly one initial message $(x,0, (0, 0], V_{\rm init})$ for each location $x$;
and (6) the initial global view maps each location to $0$.

\tikzset{background rectangle/.style={fill=none}}
\begin{figure}[t]
\centering
\small
\resizebox{\textwidth}{!}{
\begin{tikzpicture}[codeblock/.style={line width=0.5pt, inner xsep=0pt, inner ysep=5pt}  , show background rectangle]
\node[codeblock] (init) at (current bounding box.north west) {
$
\def\arraystretch{1.2}
\begin{array}{c}
\rowcolor{black!5}
\begin{array}{cc}
	\begin{array}{c}
		\textbf{Memory Helpers} \\
		\frac{\text{(MEMORY : NEW})}{( P,\ M) \ \xrightarrow{m}\left( P',\ M\ \ \stackrel{A}{\hookleftarrow } m\right)} \\
		\text{MEMORY FULFIL} \\
		\frac{\hookleftarrow \ \in \ \left\{\stackrel{S}{\hookleftarrow } ,\ \stackrel{L}{\hookleftarrow }\right\} ,\ P'=P\ \hookleftarrow \ m,\ M'=M\ \hookleftarrow m\ }{( P,M) \ \xrightarrow{m}( P'\ \backslash \{m\} ,\ M')}
	\end{array} &
	\begin{array}{c}
		\textbf{Process Helpers} \\
			\begin{array}{cc}
				\frac{\begin{array}{c}
					m=( x,-,( -,t],K) \in M \quad V(x) \leq t \\
					o=\rlx \Rightarrow V'=V[x\mapsto t] \\
					o=\ra \Rightarrow V'=V[x \mapsto t] \sqcup K
				\end{array}}{V \xrightarrow[\text{rd}]{o,m} V'} &
				\frac{ \begin{array}{c}
					m=(x,-,( -,t],K) \in M, V(x) < t \\
					o=\rlx \Rightarrow \ K=\bot ,\ \ o=\ra \Rightarrow \ P( x) =\emptyset \ \land \ K=V' \\
					( P,M) \ \xrightarrow{m}( P',\ M')\quad  V'=V[ x\ \mapsto \ t]
				\end{array} } {( V,P,M) \ \xrightarrow[\text{wt}]{o,m}( V',\ P',M')}
			\end{array}
	\end{array}
\end{array} \\
\rowcolor{black!2}
\begin{array}{c}
\textbf{Process Steps} \\
\begin{array}{ccc}
	\text{Read} & \text{Write} & \text{Promise}  \\
	\displaystyle\frac{\begin{array}{c}
	\sigma \xrightarrow[p]{rd( o,x,v)} \sigma '\\
	m=( x,v,( -,-] ,\ -) ,\ \ \ V\ \xrightarrow[\text{rd}]{o,m} V'
	\end{array} } {( \sigma ,\ V,\ P,\ M,\ G) \ \xrightarrow[p]{}( \sigma ',\ V',\ P,\ M,\ G)}  & 
	\displaystyle\frac{ \begin{array}{c}
	\sigma \xrightarrow[p]{wt( o,x,v)} \sigma' \\
	m=(x, v, (-,-], -), (V,P,M) \xrightarrow[\text{wt}]{o,m}(V',P',M')
	\end{array}}{( \sigma ,\ V,\ P,\ M,\ G) \ \xrightarrow[p]{}( \sigma ',\ V',\ P',\ M',\ G)} &
	\displaystyle\frac{ \begin{array}{c}
	m=( -,-,( -,-] ,K) ,\\
	M'=M\ \stackrel{A}{\hookleftarrow } m,\ K\ \in \ M'
	\end{array}}{( \sigma ,\ V,\ P,M,G) \ \ \xrightarrow[p]{}\left( \sigma ,\ V,\ P\ \stackrel{A}{\hookleftarrow } m,\ M',\ G\right)} \\
	
	\text{SC-fence} & \text{Reserve} & \text{Cancel} \\
	\displaystyle\frac{\sigma \ \xrightarrow[p]{SC\ fence} \ \sigma '}{( \sigma ,\ V,\ P,\ M,\ G) \ \xrightarrow[p]{}( \sigma ',\ V\ \sqcup \ G,\ P,\ M,\ G\ \sqcup \ V)} &
	
	\displaystyle\frac{r=( -,\ ( -,-]) ,\ M'=M\ \stackrel{A}{\hookleftarrow } r}{( \sigma ,\ V,\ P,M,G) \ \stackrel{}{}\xrightarrow[p]{}( \sigma ,\ V,\ P\ \cup \ \{r\} ,\ M',\ G)} &
	
	\displaystyle\frac{r=( -,( -,-]) \ \in \ P}{( \sigma ,\ V,\ P,\ M,\ G) \ \xrightarrow[p]{}( \sigma ,\ V,\ P\ \backslash \{r\} ,\ M\ \backslash \ \{r\} ,\ G)} 	 
\end{array} \\
\begin{array}{c}
	\text{Update} \\
	\displaystyle\frac{ \begin{array}{c}
	\sigma \ \xrightarrow[p]{U( o_{r} ,\ o_{w} ,\ x,\ v_{r} ,\ v_{w}) \ } \sigma '', 
	m_{r} =( x,\ v_{r} ,\ ( -,t] ,\ -) ,\ m_{w} =( x,v_{w} ,\ ( t,-] ,-), \\
	\ V\ \xrightarrow[\text{rd}]{o_{r} ,m_{r}} V'',\ ( V'',P,M) \ \xrightarrow[\text{wt}]{o_{w} ,m_{w}}( V',\ P',M')
	\end{array}}{( \sigma ,\ V,\ P,\ M,\ G) \ \xrightarrow[p]{}( \sigma ',\ V',\ P',\ M',\ G)} 
\end{array}
\end{array}
\end{array}$
};
\end{tikzpicture}
}
\vspace{-0.5cm}
\caption{\footnotesize $\ps$ inference rules at the process level, defining  the transition $(\sigma, V, P, M, G) \xrightarrow[p]{} (\sigma', V', P', M', G')$.}
\vspace{-0.6cm}
\label{program_sem}
\end{figure}

\smallskip\noindent{\bf Transition Relation.}
We first describe the transition  $(\sigma, V, P, M, G) \xrightarrow[p]{} (\sigma', V', P', M', G')$ between process configurations in $\confset_p$ %
from which we induce the transition relation between machine states.

\smallskip

\noindent{\it Process Relation.}
The formal definition of  $\xrightarrow[p]{}$ is in Figure \ref{program_sem}.  Below, we explain these inference rules.

\noindent{\bf {Read}}. A process $p$ can read from  $M$ by observing a 
message $m=(x,v, (f,t], K)$ if $V(x) \leq t$ (i.e.,  $\proc$ must not be aware of a later message for  $\xvar$). 
In case of a relaxed read $\rd(\rlx, x, v)$, 
 the process view of $x$ is updated to $t$, while 
 for an acquire read $\rd(\ra, x, v)$, the process view is updated 
to  $V[x \mapsto t] \sqcup K$. The global memory $M$, the set of promises $P$, and the global view $G$ remain the same.

\noindent{\bf {Write}}. A process can add a fresh message 
to the memory ($\mathsf{MEMORY:NEW}$) or fulfil an outstanding promise 
($\mathsf{MEMORY : FULFILL}$). The execution of a  write ($\wt(\rlx, x, v)$)  results in  a message $m$ with location 
$x$ along with a timestamp interval $(-, t]$. Then,  the process view of location $x$ is  updated to $t$. 
 In case of a release write ($\wt(\ra, x, v)$) the updated process view is also attached 
to $m$, and ensures that the process does not have an outstanding promise 
 on location $x$.  ($\mathsf{MEMORY : FULFILL}$) allows 
 to split a promise interval or lower its view before fulfilment.

\noindent{\bf {Update}}. When a process performs a RMW, it first reads a message 
$m=(x, v, (f,t], K)$ and then writes an update message   
with $\from$
 timestamp equal to $t$; that is, a message 
of the form $m'=(x, v', (t, t'], K')$. This forbids any other write 
to be placed between $m$ and $m'$. The access modes of the reads and writes in the update 
follow what has been described for the read and write above. 

\noindent{\bf Promise, Reservation and Cancellation.} 
 A process can non-deterministically \emph{promise} future writes which 
 are not release writes. 
   This is done by adding a message $m$ to the memory $M$ 
 s.t. $m \#M$ and to the set of promises $P$. Later, a relaxed write instruction can fulfil an existing promise.  Recall that the execution of  a release write requires that the set of promises to be empty and thus it can not be used to fulfil a promise. 
  In the  reserve step,  the process reserves a timestamp interval to be used 
  for a later  RMW instruction reading from a certain message without fixing the value it will write.  
  A reservation is added both to the memory and the promise set. The process can drop the reservation from both sets using the cancel step in non-deterministic manner.   

\noindent{\bf SC fences.}The process view $V$ is merged with the global view $G$, resulting 
in $V \sqcup G$ as the updated process view and global view.

\smallskip

\noindent
{\it Machine Relation.}
We are ready now to define the induced transition relation between machine states.  
  For machine states $\MS=((J, R), VS, PS, M, G)$ and $ \MS'=((J', R'), VS', PS', M', G')$, we write
  $\MS \xrightarrow[p]{} \MS' $ iff   $(1)$ ${\MS {\downarrow} p}  \xrightarrow[p]{} {\MS {\downarrow} p}$ and $(J(p'),VS(p'), PS(p')) = (J'(p'),VS'(p'), PS'(p'))$ for all $p' \neq p$.

\smallskip

\noindent{\bf Consistency.}
According to Lee et al. \cite{promising2}, there is one final requirement on machine states called \emph{consistency},
which roughly states that, from every encountered machine state encountered,
all the messages promised by a process $\proc$ can be {\em certified} (i.e., made fulfillable) by executing $\proc$ on its own from  a certain
future memory (called capped memory), i.e.,  extension of the memory with additional
reservation. Before  defining consistency, we  need  to introduce capped memory.

\smallskip

\noindent
\emph{Cap View, Cap Message and Capped Memory.}  The last element of a memory $M$ with respect to a location $x$, denoted by $\overline{m}_{M, x}$, is 
an element from $M(x)$ with the highest timestamp among all elements of $M(x)$ and is defined as 
$\overline{m}_{M,x} =  \max_{m\in M(x)} m.\too$.  
The \emph{cap view} of a memory $M$, denoted by $\widehat{V}_M$, is the view which assigns to each 
location $x$, the $\too$ timestamp in  the message $\overline{m}_{\widetilde{M},x}$. That is, 
 $\widehat{V}_M = \lambda x. \overline{m}_{\widetilde{M}, x}.\too$.  Recall that  $\widetilde{M}$ denote  the subset of $M$ containing 
only messages (no reservations).
The \emph{cap message} of a memory $M$ with respect to a location $x$, 
is given by the message 
$\widehat{m}_{M,x} = (x, \overline{m}_{\widetilde{M}, x}.{\tt{val}}, (\overline{m}_{M, x}.\too, \overline{m}_{M, x}.\too + 1], \widehat{V}_{M})$. 

 Then, the capped memory of a memory $M$, wrt. a set of promises $P$, denoted by $\widehat{M}_P$, is an extension of $M$, defined as: $(1)$ for every $m_1,m_2 \in M$ with $m_1.\var = m_2.\var,~ m_1.\too <
m_2.\too$, and there is no message $m' \in M(m_1.\var)$ such
that $m_1.\too < m'.\too < m_2.\too$, we include a reservation
$(m_1.\var, (m_1.\too, m_2.\from])$ in $\widehat{M}_P$, and $(2)$ we include a cap message $\widehat{m}_{M,x}$ in $\widehat{M}_P$ for every variable $x$ unless $\overline{m}_{M,x}$ is a reservation in $P$.

\smallskip

\noindent
\emph{Consistency of machine states.}
A machine state $\MS=((J, R), VS, PS, M, G)$ is \emph{consistent}
if  every process $p\in\procset$ can certify/fulfil all its promises  from  the capped memory $\widehat{M}_{PS(p)}$, i.e., 
$ ((J, R), VS, PS, \widehat{M}_{PS(p)}, G) \rtstep{\xrightarrow[p]{}} ((J', R'), VS', \emptyset, M', G')$.

\medskip

\noindent
\textbf{The  Reachability Problem in $\ps$.}
A   run of $\prog$ is a sequence of the form: 
  $\MS_{0} \rtstep{\xrightarrow[p_{i_1}]{}} 
   \MS_{1} \rtstep{\xrightarrow[p_{i_2}]{}} 
   \MS_{2} \rtstep{\xrightarrow[p_{i_3}]{}} 
   \ldots  
   \xrightarrow[p_{i_n}]{*}
   \MS_{n}$
where $\MS_0=\MS_{\rm init}$ is the initial machine state and $\MS_1,\ldots,\MS_n$ are consistent machine states.
In this case, the machine states $\MS_0,\ldots,\MS_n$ are said to be  reachable from $\MS_{\rm init}$.

Given an instruction label function $J: \procset \rightarrow \mathbb{L}$ that maps each process $\proc \in \procset$ to an instruction label in $ \mathbb{L}_{\proc}$, 
the \emph{reachability} problem asks
whether there exists a machine state of the form $((J,R),V,P,M,G)$ that is reachable from $\MS_{\rm init}$.
In the case of a positive answer to this problem, we say that $J$ is  reachable in $\prog$ in $\ps$.

\subsection{Examples}

\tikzset{background rectangle/.style={draw=black,sharp corners,fill = yellow!5}}
\begin{wrapfigure}{r}{0.5\textwidth}
\begin{tikzpicture}[codeblock/.style={line width=0.2pt, inner xsep=0pt, inner ysep=0pt}, show background rectangle]
\node[codeblock] (init) at (current bounding box.north east) {
\footnotesize
\begin{tabular}[h]{l||l}
\begin{lstlisting}[style=examples,tabsize=4]
$r1=x
if($r1 != 2){ 
	z=1
	$r1=z
	assume($r1=3)
	z=2
}
else{
	z=2 //
}
\end{lstlisting}
&
\begin{lstlisting}[style=examples,tabsize=4]
z=3
$r2=z
assume($r2=2)
x=2
\end{lstlisting}
\end{tabular}
};
\end{tikzpicture}
\caption{\footnotesize The annotated behaviour is not reachable.}
\label{eg1}
\end{wrapfigure}

In the following, we describe some examples to demonstrate $\ps$. For readability, instead of referring to 
reachable instruction labels, we consider possible 
program outcomes represented using the program comment annotation ``\textcolor{sangria}{//}''.   
	All writes and reads are relaxed in both examples below.

\begin{example}
The annotated program outcome in Figure \ref{eg1} is not allowed by $\ps$. 
	
\noindent We list the execution steps of $\ps$ showing that the annotated behaviour is not possible. We give a proof by contradiction. 
Assume that the annotated behaviour is possible. The only way for this is that the first process $p_1$ (whose code on the left side) to execute the $\kwelse$ branch. For this, it needs to read 2 from x. 
This can be provided only by the second  process   using the write x=2. For this to happen, 
$p_2$ first executes the write z=3 by adding a message (z, 3, $(r,s], \bot)$ to the memory. Next,   
$p_2$ has to read a message 
of the form (z, 2, $(f,t], \bot)$  which can only be generated 
by $p_1$ as a promise.  
\vspace{.2cm}

Note that $p_1$ 
can promise the write $z=2$ in its $\kwif \dots \kwthen$ branch. To certify this promise, $p_1$ starts from 
the capped memory, and first executes the write z=1 in the $\kwif \dots \kwthen$ branch. To do this, 
it can split the promise interval $(f,t]$ and add a message (z, 1, $(f, t'], \bot)$ while modifying 
 (z, 2, $(f,t], \bot)$ in the memory to (z, 2, $(t',t], \bot)$. Note that since 
  we work from the capped memory, there are no available intervals 
  in $[0, max(t,s)]$, and the only way to add a message for the write z=1 of $p_1$,   
  in such a way that $p_1$ can read the 3 written by $p_2$, and also to fulfil its promise, 
  is to split the promise interval.  Next, $p_1$ reads (z, 3, $(r,s], \bot)$ to 
 go past the $\keyworr{assume}$(z=3) statement. This imposes $f<t' \leq r < s$. However, 
   since $p_2$ 
 wrote 3 to z before reading the promise (z, 2, $(t',t], \bot)$, we also need $r < s \leq f <t'$ which contradicts 
 $f < r$. Hence, the annotated behaviour is not reachable, since $p_1$ fails the certification.  
 \end{example}

\begin{example}
In Figure \ref{eg2}, we present an example having a run realising the program outcome  
which has unboundedly many reservations and subsequent cancellations.  

\tikzset{background rectangle/.style={draw=black,sharp corners,fill = yellow!5}}
\begin{figure}[h]
  \begin{tikzpicture}[codeblock/.style={line width=0.3pt, inner xsep=0pt, inner ysep=0pt}, show background rectangle, sharp corners]
\node[codeblock] (init) at (current bounding box.north east) {
\begin{tabular}{c||c||c}
\begin{lstlisting}[style=examples,tabsize=2]
$r1=z
if($r1 = 2)
{
	x=2 //
} 
else{
	do{
		$r4 = FADD(y,1)
	}while (w=0)
	x=2 
}
\end{lstlisting}
&
\begin{lstlisting}[style=examples,tabsize=2]
w=1
$r2=x
assume($r2 == 2)
z=2
\end{lstlisting}
&
\begin{lstlisting}[style=examples,tabsize=2]
do
y=$r3
while(w=0)
\end{lstlisting}
\end{tabular}
};
\end{tikzpicture}
\caption{\footnotesize The annotated behaviour is reachable.}
\label{eg2}
\end{figure}

	\noindent We list the execution steps of $\ps$ leading to the annotated behaviour. Items prefixed with ``C'' represent certification steps.
 \begin{itemize}
 \item[(1)] Process 2 writes 1 to $w$.
	 \item[(2)] 
 Process 3  writes arbitrarily many messages $(y, 0, (f_1, t_1], \bot), (y, 0, (f_2, t_2], \bot) \dots (y, 0, (f_k, t_k], \bot)$ such that $t_1 < f_2 < t_2 < f_3 \dots < f_k<t_k$, until it reads the value 1 from $w$.  The number of messages written depends on the number of iterations of $\kwwhile$. 
	\item[(3)] Process 1  promises  $(x, 2, (f, t], \bot)$ corresponding to the 	 write $x=2$ in the $\kwelse$ branch.  
\item[(4)] Process 1 makes arbitrarily many reservations $(y, (t_1, t'_1]),
(y, (t_2, t'_2]), \dots, (y, (t_{k-1}, t'_{k-1}])
$ such that $t'_1< f_2< t'_2 < f_3 \dots t'_{k-1} < f_k<t_k$ and $(y, (t_k, t_{k+1}])$.   
\item[(C1)] Starting from the capped memory, process 1 
cancels the reservations one by one, while executing the 
  $\fadd$ instructions, thereby adding messages   
 $(y, 1, (t_i, t'_i], \bot)$ to the memory. 
 \item[(C2)] Process 1 fulfils its promise. 
  \item[(5)] Process 2 	 reads the message $(x, 2, (f,t], \bot)$ 
  and adds the message $(z, 2, (f'', t''], \bot)$ for the write $z=2$. 
  \item[(6)]Process 1 reads $(z, 2, (f'', t''], \bot)$
  and fulfils  $(x, 2, (f, t], \bot)$ reaching the program outcome. 
  \end{itemize}
\end{example}

\section{Undecidability of Consistent Reachability in $\ps$}
\label{sec:undec}

In this section, we show that  reachability is undecidable 
for $\ps$ even for finite-state programs.  The proof is by a reduction from Post's Correspondence Problem (PCP) \cite{post}. Our proof works with the fragment of $\ps$ having only relaxed ($\rlx$) memory accesses and crucially uses unboundedly many promises to ensure that a process cannot skip any writes made by another process. 
It also works even when we restrict our analysis to executions that can be split into a bounded number of contexts,
where within each context, only one process is active.  We need just 3 context switches. 
Our undecidability result is also \emph{tight} in the sense that
the reachability problem becomes decidable when we restrict ourselves to machine states where the number of promises is bounded. Given our proof (Theorem \ref{thm:undec}) where undecidability is obtained 
with the  $\rlx$ fragment of $\ps$,  a natural question is the 
decidability status of the $\ra$ fragment of $\ps$. 
This is known to be undecidable from \cite{pldi2019} even in the absence of promises.    Let us call  the fragment of $\ps$ with only $\rlx$ memory accesses $\psr$.

\begin{theorem}
\label{undecidability}
 The  reachability  problem for concurrent programs over a finite data domain is undecidable under $\ps$. 
 In fact, the undecidability still holds for the $\psr$ fragment.  
 \label{thm:undec}
\end{theorem}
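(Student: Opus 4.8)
The plan is to reduce from Post's Correspondence Problem (PCP), which is undecidable: given tiles $(u_1,v_1),\dots,(u_n,v_n)$ over a finite alphabet $\Sigma$, decide whether some nonempty index sequence $i_1\cdots i_k$ satisfies $u_{i_1}\cdots u_{i_k} = v_{i_1}\cdots v_{i_k}$. For each PCP instance I would build a two-process program over the finite data domain $\Sigma \cup \{1,\dots,n\} \cup \{\text{control symbols}\}$, using only $\rlx$ accesses, so that a distinguished label is reachable under $\psr$ iff the instance has a solution. Crucially, the only unbounded quantity is the length $k$ of the guessed solution, realized by an unbounded number of messages and promises; this is consistent with the finiteness of the data domain.

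For the construction, one process $p_1$ acts as a generator that nondeterministically picks tiles one at a time; for the chosen tile $i_j$ it emits the symbols of $u_{i_j}$ onto a location $x_u$ and the symbols of $v_{i_j}$ onto a location $x_v$, one relaxed write (hence one fresh message) per symbol. Because the same process produces both streams from the same sequence of tile choices, the shared index sequence is automatic, and the PCP question reduces to checking that the two emitted streams are equal as words. The second process $p_2$ reads the two streams symbol-by-symbol and compares them, reaching the target label only if it confirms equality and that both streams end simultaneously.

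The heart of the proof, and the main obstacle, is a perfect-channel lemma. The $\rlx$ read rule lets a reader advance its view to any larger timestamp, so a priori $p_2$ could skip messages of $p_1$ (the memory behaves like a lossy channel, which is exactly why bounding promises later yields decidability). I would force lossless, in-order reading by a handshake in which $p_2$ promises, upfront, an entire stream of acknowledgement writes whose timestamps must strictly interleave with $p_1$'s data messages on a shared location. The consistency requirement forces $p_2$ to fulfil every promised acknowledgement from the capped memory; by the density and total order of timestamps, the $j$-th acknowledgement can occupy its designated slot only if $p_2$'s view sits exactly at $p_1$'s $j$-th message when the ack is issued, so skipping a data message leaves a promised ack unfulfillable and breaks consistency. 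I expect the delicate part to be verifying that the capped-memory certification, where $p_1$'s messages are frozen and the gaps are sealed by reservations, admits precisely the intended lockstep fulfilment and no cheating alternative; that is, that the promises certify exactly when the reading was faithful.

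Finally I would establish both directions of the reduction: a PCP solution yields a run in which $p_1$ emits matching streams, $p_2$ reads them in lockstep, all promises are fulfilled, and the target label is reached; conversely, by the perfect-channel lemma any run reaching the target witnesses two equal streams produced from a common tile sequence, hence a PCP solution. I would schedule the run so that $p_1$ makes its promises and writes, then $p_2$ reads and acknowledges, then the processes finish, using only two threads and three context switches, matching the claimed tightness. Since the whole argument uses only relaxed reads and writes, it establishes undecidability already for the $\psr$ fragment, and a fortiori for full $\ps$.
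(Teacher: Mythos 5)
Your overall frame (reduction from PCP, two $\rlx$-only threads, unboundedly many promises, consistency/certification as the enforcement mechanism, three context switches) matches the paper, but the specific mechanism you propose for the perfect-channel lemma does not work, and this is the heart of the proof. First, your reader-side acknowledgement promises cannot be made ``upfront.'' In $\ps$, every machine state along a run must be consistent: each process must be able to certify \emph{all} of its outstanding promises by running \emph{solo} from the capped extension of the \emph{current} memory. Your $p_2$'s program issues an ack only after reading a data symbol, so certifying an ack promise requires a data message that, at promise time, does not yet exist ($p_1$ has not run). Hence the very step in which $p_2$ makes these promises yields an inconsistent state and is disallowed. This is precisely why the paper puts the promises on the \emph{writer's} side: $p_1$ promises the data stream itself, and can always certify those promises solo, because the certifying writes are produced by $p_1$'s own generation-mode code. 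Second, even if you defer the ack promises until after $p_1$'s data exists, nothing in the semantics ties the number or the timestamp positions of $p_2$'s promises to $p_1$'s data messages. Your lockstep argument needs \emph{strict} interleaving (exactly one data message between consecutive acks); with two or more data messages in one gap, $p_2$ fulfils the next ack after reading just one of them and silently skips the rest, and with no promises at all $p_2$ can read fully lossily and still reach its ``equality confirmed'' label. So the direction ``target reachable $\Rightarrow$ PCP solution'' fails. Trying to repair this by making $p_1$ wait for each ack before its next write reintroduces the certification problem above, or else degenerates (with ordinary ack writes) into a capacity-one synchronous channel, which cannot buffer the two misaligned streams $u_{i_1}u_{i_2}\cdots$ and $v_{i_1}v_{i_2}\cdots$ that a finite-state $p_1$ must emit at different rates.

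The paper's construction forces losslessness by a different, self-referential argument that avoids coordinating timestamps across processes altogether: $p_1$ \emph{promises} the sequence of $u$-letters and tile indices; $p_2$ copies what it reads (possibly lossily, with a $\#$-alternation ensuring each element is copied at most once) to fresh locations $y$ and $\mathit{index}'$; then $p_1$, whose real execution path (the validation branch) contains no generation writes, reads the echo back, checks it spells $v_{i}$-words for the indices it sees, and writes what it reads back to $x$ and $\mathit{index}$ --- and these write-backs are the \emph{fulfilments} of its own promises. Fulfilment forces the written-back sequence to equal the promised sequence exactly; since each copying leg can only produce a subsequence, the round trip being the identity forces both legs to be lossless, which is the perfect-channel property you need. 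If you want to salvage your two-stream comparison architecture, you would have to replace the ack handshake by some mechanism with this same ``round-trip equals identity'' flavor; as written, your reduction is sound in the forward direction but not in the backward one.
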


{
\begin{figure*}[h]
\centering
\colorbox{black!5}{\footnotesize
\begin{tabular}{|@{}c@{}|@{}c@{}|@{}c@{}|@{}c@{}|}
\hline\hline
Process $\proc_1$ & Process $\proc_2$& $\keyword{Module^{\proc_1}_{v_i}}$ & $\keyword{Module^{\proc_2}_{u_i}}$ \\
\hline\hline
$\begin{array}[t]{l}
\lcomme ~\textcolor{blush}{generation~ mode}~ \rcomme 
\\  \kwif\ \mathit{validate} = 0\ \kwthen
\\\quad \kwwhile\ \mathit{term} = 0\ \kwdo
\\\qquad \mathit{index} = 1
\\\qquad \keyword{Module^{\proc_1}_{u_1}}
\\\qquad \mathit{index} = \#
\\\qquad \ldots
\\\qquad \mathit{index} = n
\\\qquad \keyword{Module^{\proc_1}_{u_n}}
\\\qquad \mathit{index} = \#
\\\quad \kwdone
\\\quad \mathit{index} = \S
\\\lcomme ~\textcolor{blush}{validation~ mode}~ \rcomme 
\\\kwelse
\\\quad \reg' = \mathit{index}'
\\\quad \assume(\reg' \in [1,n])
\\\quad \kwwhile\ \reg' \neq \S \ \kwdo
\\\qquad \kwif\ \reg' = 1\ \kwthen
\\\qquad \quad\keyword{Module^{\proc_1}_{v_1}}
\\\qquad \kwelse\ \kwif\ \reg' = 2\ \kwthen
\\\qquad \quad\keyword{Module^{\proc_1}_{v_2}}
\\\qquad \ldots
\\\qquad \kwelse\ \kwif\ \reg' = n\ \kwthen
\\\qquad \quad\keyword{Module^{\proc_1}_{v_n}}
\\\qquad \kwendif
\\\qquad \assume(\mathit{index}' = \#)
\\\qquad \reg' = \mathit{index}'
\\\qquad \assume(\mathit{index}' \neq \#)
\\\quad \kwdone
\\\quad \mathit{index} = \S
\\\quad \textcolor{red}{\assume(\mathit{true})}\pouto
\\\kwendif
\end{array}$
&
$\begin{array}[t]{l}
  \mathit{term} = 1;
\\\reg = \mathit{index};
\\\assume(\reg \in [1,n])
\\\kwwhile\ \reg \neq \S \ \kwdo
\\\quad \kwif\ \reg = 1\ \kwthen
\\\quad \quad\keyword{Module^{\proc_2}_{u_1}}
\\\quad \kwelse\ \kwif\ \reg = 2\ \kwthen
\\\quad \quad\keyword{Module^{\proc_2}_{u_2}}
\\\quad \ldots
\\\quad \kwelse\ \kwif\ \reg = n\ \kwthen
\\\quad \quad\keyword{Module^{\proc_2}_{u_n}}
\\\quad \kwendif
\\\quad \assume(\mathit{index} = \#)
\\\quad \reg = \mathit{index}
\\\quad \assume(\mathit{\reg} \neq \#)
\\\kwdone
\\\mathit{validate} = 1
\\\mathit{index}' =\S
\\\textcolor{red}{\assume(\mathit{true})}\poutt
\end{array}$
&
$\begin{array}[t]{l}
  \assume(y = v_i [1])
\\\assume(y = \#)
\\\assume(y = v_i [2])
\\\ldots
\\\assume(y = v_i [|v_i|])
\\\assume(y = \#)
\\x =v_i[1]
\\x = \#
\\x =v_i[2]
\\\ldots
\\x = v_i [|v_i|]
\\\mathit{index} = i
\\\mathit{index} = \#
\\~
\\\hline\hline
\multicolumn{1}{c}{ \keyword{Module^{\proc_1}_{u_i}} }
\\\hline\hline
x = u_i [1]
\\x = \#
\\x = u_i [2]
\\\ldots
\\x = u_i [|u_i|]
\\x = \#
\end{array}$
&
$\begin{array}[t]{l}
  \assume(x = u_i [1])
\\\assume(x = \#)
\\\assume(x = u_i [2])
\\\ldots
\\\assume(x = u_i [|u_i|])
\\\assume(x = \#)
\\y =u_i[1]
\\y = \#
\\y =u_i[2]
\\\ldots
\\y = u_i [|u_i|]
\\\mathit{index}' = i
\\\mathit{index}' = \#
\end{array}$
\\\hline\hline
\end{tabular}}
	\caption{Simulation of the PCP problem using two processes.}  
\label{tab:prog:unde}
\end{figure*}

The rest of this section is devoted to the proof of Theorem \ref{thm:undec}. The 
undecidability is obtained by a reduction from Post's Correspondence Problem (PCP) \cite{post}. 
 A PCP instance consists of two  sequences $u_1, \ldots, u_n$ and $v_1, \ldots, v_n$ of non-empty words
over some  alphabet $\Sigma$. Checking whether there exists a sequence of indices $j_1, \dots, j_k \in \{1, \dots, n\}$
s.t. $u_{j_1} \dots u_{j_k}=v_{j_1} \dots v_{j_k}$ is undecidable. 

We construct a concurrent program with two processes $p_1$ and $p_2$ (see Figure \ref{tab:prog:unde}),
six memory locations $\varset=\{x, y, \mathit{validate}, \mathit{index}, \mathit{index}',\mathit{term}\}$,
and two registers $\{\reg,\reg'\}$.
The finite data domain of $\prog$ is defined as $\mathsf{Val}=\Sigma \cup \{0,1, \dots, n\} \cup \{\S,\#\}$,
where $\S$ and $\#$ are two  special symbols (not in $\Sigma \cup \{0,1,\dots,n\}$).
All the locations and registers are initialized to zero. We show that reaching the instructions annotated  by
$\pouto$ and $\poutt$ in $p_1, p_2$ is possible iff the PCP instance has a solution.  We give below an overview of  the execution steps leading to the annotated instructions.  
\begin{itemize}
\item[(1)]To begin, process $p_2$ writes 1 to the location $term$.
\item[(2)]Process $p_1$ 
promises to write letters of $u_i$ (one by one) to location $x$, and the respective indices $i$ to the location $index$. 
The number of made promises  is arbitrary, since it depends on the length of the PCP solution. Observe that the sequence of promises made to the variable $index$ corresponds to the guessed  solution of the PCP problem. 
\item[(C1)] Using the $\kwif$ branch, $p_1$ certifies its promise before switching out of context.  Note that fulfilment of promises is yet to be done.  
\item[(3)]  Process $p_2$ reads from  the sequences of promises written to $x$ and $index$ and copies them (one by one) to variables $y$ and $index'$ respectively, and reaches $\poutt$.	
\item[(4)] The $\kwelse$ branch in $p_1$ is enabled at this point, where $p_1$ reads the sequence of indices  from $index'$, and 
each time it reads an index $i$ from $index'$, it checks that it can read the sequence of letters of $v_i$ from $y$.  
\item[(C1)] $p_1$ copies (one by one) the sequence of observed values from $y$ and $index'$  back to $x$ and $index$ respectively. To fulfil the promises, 
it is crucial that the sequence of read values from $index'$ (resp. $y$) is the same as the sequence of written values to $index$ (resp. $x$).
Since $y$ holds a sequence $v_{i_1}\dots v_{i_k}$, the promises are fulfilled iff this sequence 
is same as the promised sequence $u_{i_1} \dots u_{i_k}$. This happens only when $i_1, \dots, i_k$ is a PCP solution.
 \item[(5)] At the end of promise fulfilment, $p_1$ reaches $\pouto$. 
\end{itemize}

Let us now give more details about the code  of the two processes  given in Figure~\ref{tab:prog:unde}.
Depending on the value of the $\mathit{validate}$ flag read,
process $p_1$ can run in generation mode ($\kwthen$ branch) or validation mode ($\kwelse$ branch).
In generation mode,
 $p_1$ writes in sequential manner the sequence of indices  (alternated with the special symbol $\#$) of a potential solution of the PCP problem to the location $\mathit{index}$ and  writes, letter by letter, the sequence of letters of the word $u_i$ to location $x$ each time $p_1$ sets the location $\mathit{index}$ to $i$ (using the $\keyword{Module^{\proc_1}_{u_i}}$ procedure). 
In validation mode, $p_1$ reads from locations $\mathit{index}'$ and $y$
and writes back what it has read, to the locations $\mathit{index}$ and $x$, respectively  (using the $\keyword{Module^{\proc_1}_{v_i}}$).
The second process proceeds in a similar manner as the $\kwelse$ branch of the first process:
It reads from locations $\mathit{index}$ and $x$ and writes the values read to $\mathit{index}'$ and $y$, respectively  (using the $\keyword{Module^{\proc_2}_{u_i}}$).
We will show that a solution of the PCP problem exists iff  we can reach the annotations $\pouto, \poutt$ respectively in processes $p_1, p_2$. 

Assume that a solution of the PCP problem exists.
This means that there is a sequence of indices $i_1,i_2,\ldots,i_k$ such that $v_{i_1} v_{i_2} \cdots v_{i_k}= u_{i_1} u_{i_2} \cdots u_{i_k}$.
Let $w= u_{i_1} u_{i_2} \cdots u_{i_k}$.
Let us show that the pair of annotations $\pouto, \poutt$ are reachable in $\prog$.
For that aim, consider the following   run of the program $\prog$:
$p_2$ starts first by setting the location $\mathit{term}$ to $1$.
Then, $p_1$ will use the $\kwthen$ branch of its conditional statement and make the two following sequences of promises
$(\mathit{index},i_1,(1,2]), (\mathit{index},i_2,(2,3]), \ldots, (\mathit{index},i_k,(k,k+1])$ and
$(x,w[1],(1,2]), (x,w[2],(2,3]), \ldots, (x,w[|w|],(|w|,|w|+1])$.
Observe that $p_1$ can certify such sequences of promises by iterating its iterative statement in the $\kwthen$ branch of its alternative statements.
Once these promises are performed, $p_2$ reads these two sequences and writes them back to the locations $\mathit{index}'$ and $y$, respectively.
$p_2$ then sets the location $validate$ to $1$.
Now $p_1$ can resume its execution by reading the location $validate$ written by the second process
and enter its $\kwelse$ branch of its alternative statement.
Then, $p_1$ will iteratively read the values written by $p_2$ on the location $\mathit{index}'$ and $y$ and write them back to the locations $\mathit{index}$ and $x$, respectively.
By doing this $p_1$ fulfils also the sequence of promises that has been issued.

Now assume that we can reach the pair of annotations 
$\pouto, \poutt$. 
In order for $p_1$ to reach $\pouto$, it must execute the $\kwelse$ branch of its conditional statement.
Let us assume it does so.
Then, $p_1$ will read the sequence of indices $i_1,i_2,\ldots,i_k$ written by the process $p_2$ on the location $\mathit{index}'$.
Let us assume that the process $p_2$ writes the sequence of indices $j_1,j_2,\ldots, j_m$ on the location $\mathit{index}'$ (by reading the sequence of promises made by $p_1$).
Each time that the process $p_1$ reads an index from the location $\mathit{index}'$, it writes it back on the location $\mathit{index}$.
The process $p_1$ (resp.\ $p_2$)  alternates between writing/reading an index in $\{1,\dots,n\}$  and the special symbol $\#$ in order to make sure that each written index is at most read once.
In similar manner, the process $p_2$ reads the sequence of indices $j_1,j_2, \ldots, j_m$ written by the process $p_1$ on the location $\mathit{index}$ and it writes it back on the locations $\mathit{index}'$.
This implies that the sequence $j_1,j_2, \ldots, j_m$ is a subsequence of $i_1,i_2, \ldots,i_k$ (since the process $p_2$  can miss reading some written indices by the process $p_1$) and also that the sequence $i_1,i_2, \ldots,i_k$  is  a subsequence of $j_1,j_2, \ldots, j_m$ (since $p_1$ can miss reading some written index by the process $p_2$).
Thus, we have that the sequences $i_1,i_2, \ldots,i_k$ and  $j_1,j_2, \ldots, j_m$ are the same.
Every time the process $p_1$ (resp.\ $p_2$) reads an index $i$ from the location $\mathit{index}'$ (resp.\ $\mathit{index}$),
it (1) tries to read in sequential manner the sequence of letters appearing in $v_i$ (resp.\ $u_i$) (alternated with the special symbol $\#$) from the location $y$ (resp.\ $x$),
and (2) writes the same sequence of letters to the location $x$ (resp.\ $y$).
Using a similar argument as in the case of indices, we can deduce that if $p_1$ (resp.\ $p_2$) writes the words $v_{i_1} v_{i_2} \cdots v_{i_k}$  (resp.\ $u_{j_1} u_{j_2} \cdots u_{j_m}$), letter by letter (with an alternation with the symbol$\#$), to the location $x$ (resp.\ $y$), then  $v_{i_1} v_{i_2} \cdots v_{i_k}$  (resp.\ $u_{j_1} u_{j_2} \cdots u_{j_m}$) is a subsequence of $u_{j_1} u_{j_2} \cdots u_{j_m}$ (resp.\ $v_{i_1} v_{i_2} \cdots v_{i_k}$).
Thus, if the pair of annotations $\pouto, \poutt$ are reachable then there exist two sequences $i_1,i_2, \ldots,i_k$ and $j_1,j_2, \ldots, j_m$, written, respectively, by $p_1$ and $p_2$ such that $i_1,i_2, \ldots,i_k$ is equal to $j_1,j_2, \ldots, j_m$, and 
$v_{i_1} v_{i_2} \cdots v_{i_k}$ is equal to $u_{j_1} u_{j_2} \cdots u_{j_m}$.
Observe that sequence of indices $i_1,i_2,\ldots,i_k$ is non-empty due to the assume statement $\assume(\reg' \in [1,n] )$.

\section{Decidable Fragments of $\ps$}
\label{sec:dec}
Since keeping $\ra$ memory accesses renders the reachability problem undecidable \cite{pldi2019} and so does having unboundedly many promises when having $\rlx$ memory accesses (Theorem \ref{thm:undec}), we address in this section the decidability problem for  $\psr$ with a bounded number of promises in any reachable configuration. Observe that bounding the number of promises in any reachable machine state  does not imply that the total number of promises made during that run is bounded. 
 Let  $\bps$ represent the restriction of $\psr$ to boundedly many promises where the  number of promises in  each reachable machine state is smaller or equal to a given constant.  In the following, we show the decidability of the reachability problem for 
 $\bps$. For establishing this result,  we 
 introduce an alternate memory model for concurrent programs which we call $\lhc$ (for 
 ``lossy higher order words''). We present the operational semantics of $\lhc$, and show that 
 $\psr$ is operationally equivalent to $\lhc$. 
 Then, under the bounded  promise assumption, we show how $\lhc$ is used to decide the  reachability problem for $\bps$.

 \subsection{Introduction to $\lhc$}
 Given an alphabet $A$, a simple word over $A$ is an element of $A^*$, while 
 a higher order word is an element of $(A^*)^*$ (i.e., word of words).  A \emph{state} of $\lhc$ maintains a collection of higher order words, one per location, along with the  states  of all processes. 
    The higher order word $\ch_x$ corresponding to the location $x$ is 
 a word of simple words, representing the sub memory $M(x)$ 
 in $\psr$.   Each simple word in $\ch_x$ 
  is an ordered sequence of ``memory types'', that is, 
  messages or promises  in the memory corresponding to $x$, 
 maintained in the order of their $\too$ timestamps in the memory. 
Unlike $\psr$, the $\lhc$ does not store 
timestamps in the messages and promises; instead, it takes advantage of the word order which induces a natural ordering amongst these without explicit use of timestamps. The key information to encode in each memory type occurring in $\ch_x$ is: 
(1) whether it is a message ($\msgg$) or a promise ($\prm$), 
(2) which process ($p$) 
added it to the memory, and the value ($\val$) it holds,  (3) the set $S$ (called pointer set) of processes that are aware of this message/promise (processes which point to this message/promise), and (4) whether 
the time interval to the right has been reserved by some process.  

\smallskip

\noindent{\bf {Memory Types}.} A \emph{memory type} is an element of  
$\Sigma=\{\msgg, \prm\} \times \mathsf{Val} \times \procset \times 2^{\procset}$ $\cup \Gamma= \{\msgg, \prm\} \times \mathsf{Val} \times \procset \times 2^{\procset} \times \procset$.  
  The first component represents a message ($\msgg$) or a promise ($\prm$)  in the memory $M$ of $\psr$,  the second component the value in the message/promise, the third component is the process 
which adds the message/promise to the memory and the fourth component is a 
\emph{pointer set}, which contains all processes whose local view agree with the $\too$ time stamp of the message/promise. In the case 
of $\Gamma$, we have a fifth component which holds the id of the  process that has  
reserved the  time slot to the right of this message/promise.

For a memory type $m=(r,v,p,S)$ (or $m=(r,v,p,S,q)$), we use  $m.value$ to denote $v$. 
For a memory type $m=(r,v,p,S)$ (resp.  $m=(r,v,p,S,q)$) and a process $h \in \procset$, we use $add(m,h)$ to denote the memory type $m=(r,v,p,S \cup \{h\})$ (resp.  $m=(r,v,p,S \cup \{h\},q)$). We use also $delete(m,h)$ to denote the memory type $m=(r,v,p,S \setminus \{h\})$ (resp.  $m=(r,v,p,S \setminus \{h\},q)$). This corresponds to the addition/deletion of the process $h$ to/from the set of pointers of the memory type $m$.
\smallskip

\noindent{\bf{Simple Words}.} A simple word is a word  $\in \Sigma^* \# (\Sigma \cup \Gamma)$,  
  and each $\ch_x$ is a word 
 $\in (\Sigma^* \# (\Sigma \cup \Gamma))^+$. $\#$ is a special symbol not in $\Sigma \cup \Gamma$, which separates the last symbol from the rest 
 of the simple word. Consecutive symbols of $\Sigma$ 
 in a simple word represent adjacent messages/promises 
 in the memory of $\psr$, and are hence unavailable 
 for a RMW.   The special symbol $\#$ segregates these 
 from the last symbol of $\Sigma \cup \Gamma$ in a simple word. $\#$ does 
 not correspond to any element from the memory; its job is simply 
 to demarcate the messages/promises which are not available for RMW 
 from the last symbol of the simple word. 
   If the last symbol in a simple word is 
in  $\Sigma$, then it is available for a RMW; if the last symbol is in $\Gamma$, then it is not available for a RMW since the  next message adjacent to this symbol is a reservation. The last symbol from $\Sigma \cup \Gamma$ in a simple word $\Sigma^* \# (\Sigma \cup \Gamma)$ thus represents a message/promise (combined with or not  a reservation) in the memory which is adjacent 
to the messages represented by the symbols immediately preceding $\#$ (if any). 

\vspace{0.3cm}
\begin{figure}[h]
\includegraphics[scale=.25]{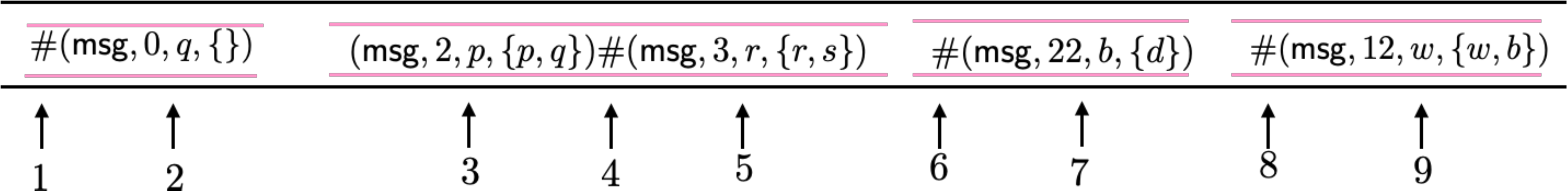}

\caption{A higher order word $\ch$.}	
\label{fig:ch1}
\end{figure}
\vspace{0.3cm}

\noindent{\bf{Higher order words}}. 
A \emph{higher order word}  is a sequence of simple words.    Figure \ref{fig:ch1} depicts a higher order word with four simple words. We use a left to right order
in both simple words and higher order words. Furthermore,  we extend in  the straightforward manner  the classical  word indexation strategy  to higher order words. For example, the symbol at the third position of the higher order word $\ch$ given in  Figure \ref{fig:ch1} is $\ch[3]=(\msgg,2,p,\{p,q\})$.  A higher order word $\ch$ is {\em well-formed} iff for every $\proc \in \procset$, there is a unique  
 position $i$ in $\ch$ having $p$ in its pointer set; that is, $\ch[i]$ is of the form  $(-, -, -, S) \in \Sigma$ or 
$(-, -, -, S, -) \in  \Gamma$  
 s.t. $p \in S$. Observe that the higher order word given in Figure \ref{fig:ch1}  is well-formed.  We will use $\ptr(p,\ch)$ to denote the unique position $i$ in $\ch$ having $p$ in its pointer set. Next, we  assume that all the manipulated  higher order words are well-formed.

As already mentioned, for each $x \in \varset$, we have a 
higher order word $\ch_x$.  The higher order word $\ch_x$ 
represents the entire space $[0, \infty)$ of available timestamps. Each simple word 
in $\ch_x$ represents a timestamp interval $(f, t]$, with consecutive 
  simple words representing disjoint timestamp intervals (while preserving order). The memory types 
in each simple word take up  \emph{adjacent} timestamp intervals, spanning 
the timestamp interval of the simple word. This adjacency of timestamp intervals 
 within simple words is mainly used  in RMW steps and reservations. 
    The memory type in $\Sigma$ occurring at the end of a simple word  denotes  a message/promise which is available for a RMW operation. The memory type in $\Gamma$ occurring at the end of  a simple word denotes a message/promise followed by a reservation and therefore it is not available for a RMW operation. 
        The memory types  at positions other than the rightmost 
 in a simple word, represent messages/promises  which are not available for  RMW. Figure \ref{fig:ch2} presents a mapping from a memory of   $\psr$ to a collection of  higher order words (one per location) in $\lhc$.
 
  Given a higher order word $\ch$, a position $i \in \{1,\ldots, |\ch|\}$, and  $p \in \procset$ , we use $add(\ch,p,i)$ (resp. $delete(\ch,p)$) to denote the higher order word $\ch[1,i-1] \cdot add(\ch[i],p) \cdot \ch[i+1,|\ch|]$ (resp.  $\ch[1,i-1] \cdot delete(\ch[\ptr(p,\ch)],p) \cdot \ch[i+1,|\ch|]$). This corresponds to the addition/deletion of  $p$ to/from the set of pointers of $\ch[i]$/$\ch[\ptr(p,\ch)]$. We use $move(\ch,p,i)$ to denote $add(delete(\ch,p),p,i)$.

\vspace{0.3cm}
\begin{figure}[h]
\centering
\newtcbox{\colorboxouline}[1][]{boxsep=0.5pt,left=0.1pt,right=0.1pt,top=1pt,bottom=1pt,colframe=magenta,colback=white,boxrule=0pt,toprule=1pt,bottomrule=1pt,sharp corners,#1}

\tikzstyle{simprect}=[fill={rgb,255: red,191; green,191; blue,191}, draw=black, shape=rectangle, minimum width=0.5cm, minimum height=0.5cm, tikzit draw=black, tikzit fill={rgb,255: red,191; green,191; blue,191}, tikzit shape=rectangle]
\tikzstyle{simprect2}=[fill={rgb,255: red,207; green,152; blue,202}, draw=black, shape=rectangle, tikzit draw=black, tikzit fill={rgb,255: red,207; green,152; blue,202}, tikzit shape=rectangle, minimum width=0.5cm, minimum height=0.5cm]
\tikzstyle{int1}=[fill={rgb,255: red,191; green,191; blue,191}, draw=black, shape=rectangle, tikzit fill={rgb,255: red,191; green,191; blue,191}, tikzit draw=black, tikzit shape=rectangle, minimum width=0.6cm, minimum height=0.4cm]

\tikzstyle{simplearrow}=[->,  line width=0.8pt]
\scalebox{0.6}{
\begin{tikzpicture}
	\begin{pgfonlayer}{nodelayer}
		\node [style=none] (0) at (0, 0.5) {};
		\node [style=none] (1) at (0, 5.5) {};
		\node [style=none] (2) at (8, 0.5) {};
		\node [style=none] (3) at (-0.5, 4.75) {\Large Locs};
		\node [style=none] (4) at (5, 0) {\Large Timestamp};
		\node [style=none] (5) at (-0.5, 3) {\Large $M(y)$};
		\node [style=none] (6) at (-0.5, 1.5) {\Large $M(x)$};
		\node [style=none] (7) at (0, 4) {};
		\node [style=none] (8) at (0, 3) {};
		\node [style=none] (9) at (0, 2) {};
		\node [style=none] (10) at (0, 1) {};
		\node [style=none] (11) at (3, 1) {};
		\node [style=none] (12) at (3, 2) {};
		\node [style=none] (13) at (1.5, 3) {};
		\node [style=none] (14) at (1.5, 4) {};
		\node [style=none] (15) at (4.5, 1) {};
		\node [style=none] (16) at (4.5, 2) {};
		\node [style=none] (17) at (6.5, 2) {};
		\node [style=none] (18) at (6.5, 1) {};
		\node [style=none] (19) at (3.5, 4) {};
		\node [style=none] (20) at (3.5, 3) {};
		\node [style=none] (21) at (5, 3) {};
		\node [style=none] (22) at (5, 4) {};
		\node [style=none] (23) at (6.5, 4) {};
		\node [style=none] (24) at (6.5, 3) {};
		\node [style=simprect] (33) at (1.5, 5) {};
		\node [style=simprect2] (34) at (1.5, 4.25) {};
		\node [style=none] (35) at (3.5, 5) {\Large promises/messages};
		\node [style=none] (36) at (3, 4.25) {\Large reservations};
		\node [style=none] (37) at (11.5,3) {\colorboxouline{\Large $\#(\_,v_4,\_,\_)$}};
		\node [style=none] (38) at (16,3) {\colorboxouline{\Large $\#(\_,v_6,\_,\_,\_)$}};
		\node [style=none] (39) at (11.75,1.5) {\colorboxouline{\Large $(\_,v_4,\_,\_)(\_,v_3,\_,\_)\#(\_,v_1,\_,\_)$}};
		\node [style=none] (40) at (17,1.5) {\colorboxouline{\Large $(\_,v_2,\_,\_)\#(\_,v_5,\_,\_)$}};
		\node [style=none] (41) at (14,0.5) {\Large $\mathsf{HW}_x$};
		\node [style=none] (42) at (14,4) {\Large $\mathsf{HW}_y$};

	\end{pgfonlayer}
	\begin{pgfonlayer}{edgelayer}
		\draw[line width=1pt] (8.75,3.5) -- (19,3.5);
		\draw[line width=1pt] (8.75,2.5) -- (19,2.5);
		\draw[line width=1pt] (8.75,1) -- (19,1);
		\draw[line width=1pt] (8.75,2) -- (19,2);
		\filldraw[] [style=simplearrow] (0.center) to (1.center);
		\filldraw[] [style=simplearrow] (0.center) to (2.center);
		\filldraw[fill=gray!40] (0,1) rectangle (4.5,2) node[pos=.5] {\Large$(\_,v_4,\_)(\_,v_3,\_)(\_,v_1,\_)$};
		\filldraw[fill=gray!40] (5,1) rectangle (8,2) node[pos=.5] {\Large$(\_,v_2,\_)(\_,v_5,\_)$};
		\filldraw[fill=gray!40] (0,2.5) rectangle (2,3.5) node[pos=.5] {\Large$(\_,v_4,\_)$};
		\filldraw[fill=gray!40] (3.5,2.5) rectangle (5.5,3.5) node[pos=.5] {\Large$(\_,v_6,\_)$};
		\filldraw[fill=violet!40, draw=green!40!black] (5.5,2.5) rectangle (7,3.5) node[pos=.5] {\Large$(\_,\_)$};;
	\end{pgfonlayer}
\end{tikzpicture}
}
\caption{A mapping from  memories $M(x), M(y)$
 to higher order words $\ch_x, \ch_y$, respectively.}	
\label{fig:ch2}
\vspace{0.2cm}
\end{figure}

\smallskip

\noindent{\bf {Initializing higher order words}.} For each location $x \in \varset$, the initial higher order word $\ch^{\rm init}_x$ is defined  as \includegraphics[scale=.30]{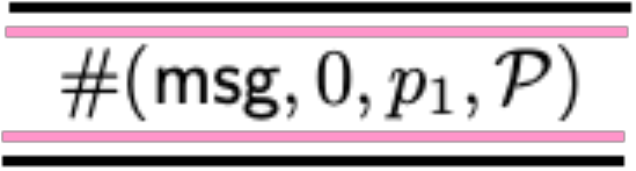}, where  $\procset$ is the set of all processes 
and $p_1$ is some process in $\procset$. The set of all higher order words $\ch_x^{\rm init}$ for all locations $x$ 
represents the initial memory of $\psr$ where all locations have value 0, and all processes 
are aware of the initial message. 
  
\smallskip

\noindent{\bf Simulating Reads, Writes, RMWs in $\lhc$.}
In the following, we informally describe how to handle $\psr$ instructions 
in $\lhc$. Since we only have 
the $\rlx$ access mode, we denote 
Reads, Writes and RMWs as $\wt(x,v)$, $\rd(x,v)$ and $\upd(x, v_r, v_w)$, dropping the access modes. 

\paragraph{Reads} A $\rd(x,v)$  step  by a process $p$  (reading  $v$ from  $x$) is handled as follows in $\lhc$.

There exists an index $j \geq \ptr(p,\ch_x)$ in $\ch_x$ such that $\ch_x[j]$  is of the form  $(-, v, -, S')$ or $(-, v, -, S',-)$. This corresponds to the existence of a memory type holding the value $v$ in $\ch_x$ and this symbol is on the right of the current view/pointer of the process $p$.

Add $p$ to the set of pointers  $S'$ and remove it from  its previous position. 

\paragraph{Writes} A $\wt(x,v)$ step  by a process $p$ (writing the value $v$ to the location $x$) in $\psr$ is done by  adding a new message with a timestamp 
higher than the local view of $p$ for $x$: the timestamp interval of this new message can be adjacent 
to the timestamp of the local view of $p$, or much ahead. These two possibilities 
are captured in $\lhc$ as follows. 

(1) Add the simple word \includegraphics[scale=.22]{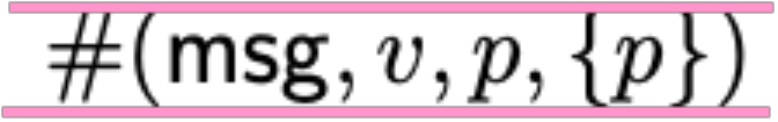} to $\ch_x$ to the right of $\ptr(p,\ch_x)$, or 

(2) there is  a symbol $\alpha \in \Sigma$ and two words  $w$ and $w'$ such that $\ch_x=w \cdot \# \cdot\alpha\cdot w'$. Then, update the higher order word $\ch_x$ to  $ w\cdot \alpha\cdot \# \cdot(\msgg, v, p, \{p\})\cdot w'$.

Finally, remove $p$ from its  previous pointer set.

\paragraph{(RMW)} Capturing RMWs is similar to the execution of a read followed by a  write. 
 In $\psr$, a process $p$ performing RMW reads from a message with a timestamp interval 
$(,t]$ and adds a message to the memory with timestamp interval $(t,-]$. This is handled as follows in $\lhc$, and shows 
the need for the higher order words. Consider a $\upd(x, v_r, v_w)$ step by  $p$. Then, 

there is a simple word \includegraphics[scale=.22]{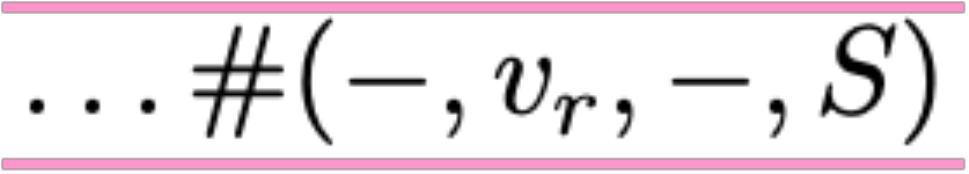} in $\ch_x$ having 
$(-, v_r, -, S)$ as the last memory type in it, and the position of the memory type  $(-, v_r, -, S)$ is on the right of  the current pointer of $p$ in $\ch_x$.

$p$ is removed from its pointer set, 

$\#(-, v_r, -, S)$ is replaced with $(-, v_r, -, S\backslash\{p\})\#$ and 
$(-, v_w, p, \{p\})$ is appended, resulting in extending 
\includegraphics[scale=.22]{rmw.pdf} to  
 \includegraphics[scale=.22]{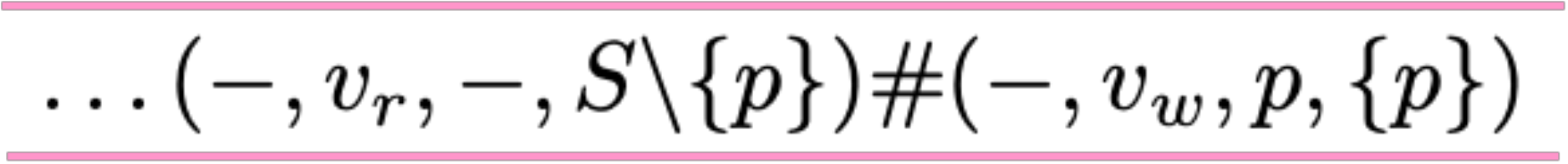}.

\begin{example}
We illustrate the read, write and RMW in $\lhc$ on an example. Figure \ref{fig:sim} depicts a run in $\psr$ and the corresponding run in $\lhc$. The run of $\psr$ shows how the memory evolves, and the corresponding run 
in $\lhc$ faithfully simulates this using higher order words $\ch_x$ and $\ch_y$.

\begin{center}
\begin{tabular}[t]{c||c}
\begin{lstlisting}[style=examples,tabsize=3]
x:=1
y:=2
x:=3
\end{lstlisting}
&
\begin{lstlisting}[style=examples,tabsize=3]
x:=5
$r1:=x //3
$r2:= FADD(y,1) //2
\end{lstlisting}
\end{tabular}
 \end{center}
 
 \begin{figure}[ht]
 \includegraphics[scale=.14]{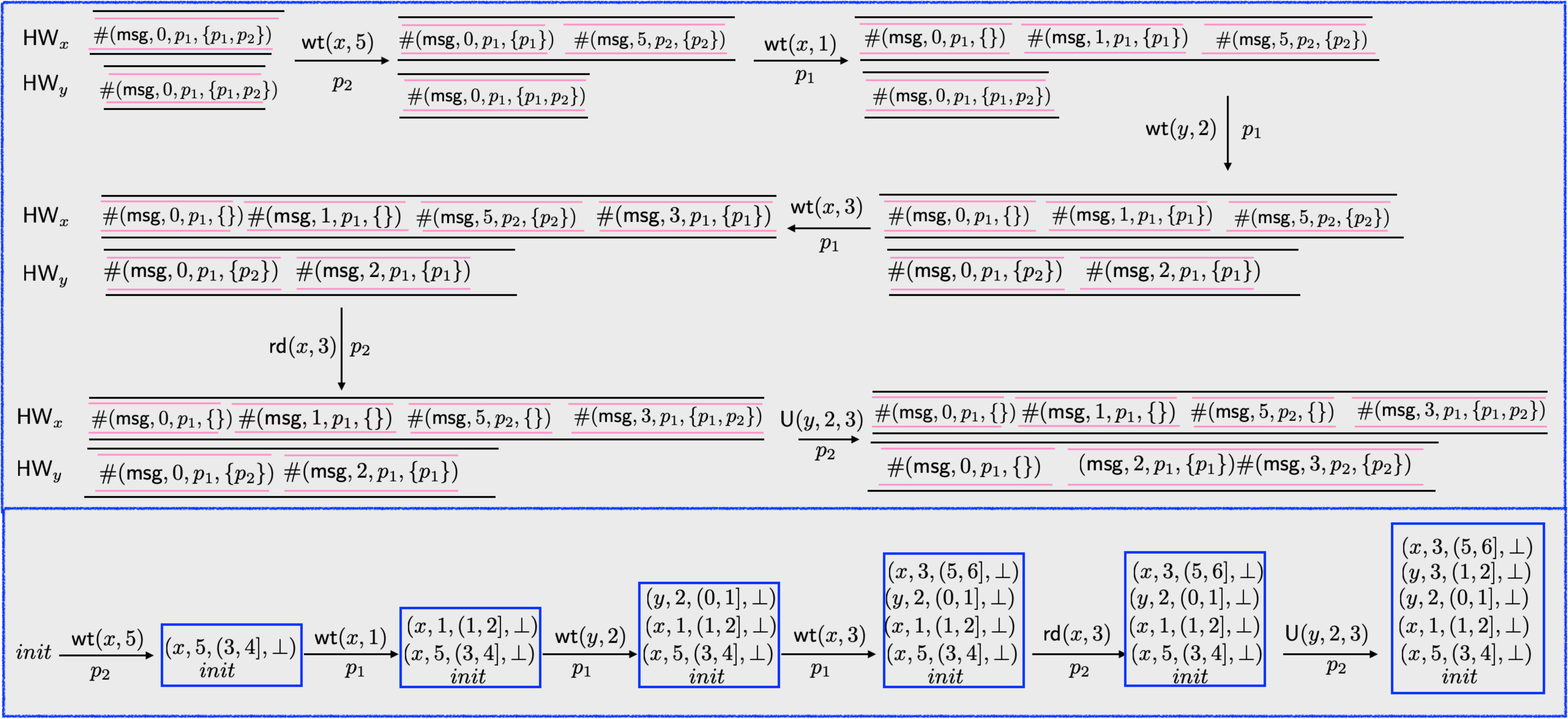}	
 \caption{Below, a run in $\ps$ showing the changes to memory,  and  above, the corresponding run in $\lhc$. Observe that $init$ stands for the initial memory.}
 \label{fig:sim}
 \end{figure}
 \end{example}

\noindent{\bf Promises  in $\lhc$.}
Next, we discuss how to handle promises. 
\paragraph{Promises} Handling promises made by a process $p$ in $\psr$ is similar to handling $\wt(x,v)$: we add the simple word \includegraphics[scale=.22]{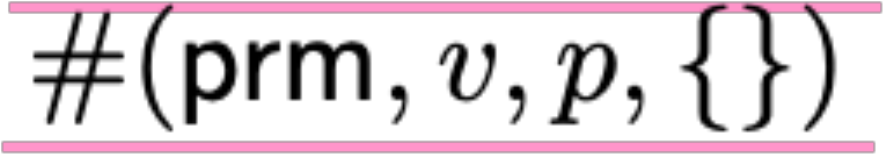} in $\ch_x$  
to the right of the position $\ptr(p, \ch_x)$, or 
append $(\prm, v, p, \{\})$ at the end of  a simple word with  a position larger than $\ptr(p,\ch_x)$. 
 Other than tagging the symbol as a promise ($\prm$), the pointer set is empty. 
  
\smallskip

\noindent{\bf Reservations and Cancellations in $\lhc$.} Next, we come to one of the new features of $\ps$ over the first version, namely, reservations and cancellations. In $\psr$, a process $p$ makes a reservation by adding the pair $(x, (f,t])$ 
to the memory, given that there is a message/promise in the memory with timestamp interval $(-,f]$.  In $\lhc$ this is captured by ``tagging'' 
the rightmost memory type (message/promise) in a simple word with the name of the process that makes the reservation. This requires us to consider the memory types from  $\Gamma=\{\msgg, \prm\} \times \mathsf{Val} \times \procset \times 2^{\procset}\times \procset$
 where the last component stores the process which made the reservation. 
 Such a memory type always appears at the end of a simple word, and represents that the next timestamp interval adjacent to it has been reserved.  Observe that we can not add new memory types to the right of  a memory type  of the form 
 $(\msgg, v, p, S, q)$. 
    Thus, reservations are handled as follows.
  \begin{enumerate} 
\item[(Res)] 
Assume the rightmost symbol 
in a simple word as $(\msgg, v, p, S)$. To capture the 
reservation by $q$, $(\msgg, v, p, S)$ is replaced with $(\msgg, v, p, S,q)$. 
\item[(Can)] A cancellation is done by removing the last component 
$q$ from $(\msgg, v, p, S,q)$ resulting 
in $(\msgg, v, p, S)$. 
\end{enumerate}

\noindent{\bf Empty Memory Types, Redundant simple words.}
When a process $p$  
reads from a message, the pointer of $p$ is updated, and moves forward. As a result, we may have memory types of the form $(\msgg, v, p, \{\})$  as well as $(\msgg, v, p, \{\},q)$  
representing those messages in the memory whose pointer set is empty.  
 Call such symbols of $\Sigma \cup \Gamma$ \emph{empty memory types}. It is  then possible to lose an  \emph{empty memory type} of $\Sigma$ from a simple word if it is not at the rightmost position. This will  not have any consequence  with respect to the    reachability problem, since processes can non-deterministically skip reading some messages in the memory.   Likewise, a simple word of the form $w \# m \in \Sigma^* \# (\Sigma \cup \Gamma)$ where all symbols 
  in $w$ are empty memory types from $\Sigma$ and $m$ is an empty memory type from $\Sigma \cup \Gamma$ 
   can be lost entirely. Such simple words are called  \emph{redundant simple words}.  Given this, what cannot be lost from $\ch_x$? The following:  
\begin{itemize}[leftmargin=*]
\item memory types  $(\prm, -,-,-)$ or  	$(\prm, -,-,-,-)$
 representing promises. This is due to the fact promises should be fulfilled and therefore can not be lost.
 \item non \emph{empty memory types}: the pointer set of these contain at least one process. Since losing any of these  memory types will result in losing the  pointer/view of at least one of the processes. 
\item Only rightmost memory type (right next to $\#$) in a  simple word. Losing only this memory type will result in a non well-defined higher order word. 
 \end{itemize}

\smallskip

\noindent{\bf Certification and Fulfilment.}
In $\psr$, certification, for a process $\proc$, happens from the capped memory, where  
intermediate time slots (other than reserved ones) are blocked, and any new message can be added 
only at the maximal timestamp. 
This is handled in $\lhc$ by one of the following:
\begin{itemize}[leftmargin=*]
	\item  addition 
of new memory types  is only allowed only at the right end of any $\ch_x$,
\item If the rightmost memory type $m$ in $\ch_x$ is of the form  $(-, v, -, -, q)$ with $q \neq p$ (i.e., tagged by a reservation for  $q$), then  a  simple word $\#(\msgg, v, q,\{\})$  is appended at the  end of $\ch_x$.   
\end{itemize}
 Memory is altered in $\psr$ during  certification phase to check for promise fulfilment, 
and at the end of the certification phase, 
 we resume from the memory 
which was there before.   To capture this in $\lhc$, we work on a duplicate of $(\ch_x)_{x \in \varset}$ 
 in the certification phase. Notice that the duplication 
allows losing some of empty memory types and redundant simple words  non deterministically (as described in the previous paragraph). 
 This copy of $\ch_x$ is then modified during certification, and 
is discarded once we finish the certification phase. 

The fulfilment of a promise by  $p$ using the   rule $\stackrel{L}{\hookleftarrow}$ (see rule $\mathsf{(MEMORY: FULFILL)}$ in Figure \ref{program_sem}) will be handled in a similar manner as  using the rule  $\stackrel{A}{\hookleftarrow}$ (since we are only dealing   with the fragment of $\ps$ restricted to $\rlx$). This will result in replacing a memory type of the form $(\prm, v, p, S)$  (resp. $(\prm, v, p, S,q)$) by $(\msgg, v, p, S)$ (resp. $(\msgg, v, p, S,q)$) if this memory type is in a position which is on the right of the current pointer of the process $p$. Then, the process $p$ is added to the pointer set $S$  while removing it from  the previous  pointer set it belongs to.

The fulfilment of a promise by a process $p$ in $\ps$ using the   rule $\stackrel{S}{\hookleftarrow}$ (see rule $\mathsf{(MEMORY: FULFILL)}$ in Figure \ref{program_sem})  results in  splitting the intervals of the promise, when adding a new message $\msgnew{\xvar}{v'}{f}{t}{\bot}$  to the memory. 
 To capture this, we allow insertion
of a memory type right before the promise whose interval 
is split. This will result in replacing a memory type of the form $(\prm, v, p, S)$ (resp. $\#(\prm, v, p, S,q)$) by $(\msgg, v', p, \{p\}) (\prm, v, p, S)$ (resp. $(\msgg, v', p, \{p\}) \# (\prm, v, p, S,q)$) if this memory type is in a position which is on the right of the current pointer of the process $p$. Then, the process $p$ is removed from   the previous  pointer set it belongs to. We may also need to update the position of the separator $\#$ so that it is just before the last symbol of a simple word.

\smallskip

\noindent{\bf SC fences}. SC-fences are handled by adding a dummy process $g$ 
to $\procset$. Whenever a process $p$ performs a SC fence, 
 $g, p$ are added to the  same pointer set, by moving 
 $g$ ($p$) to the pointer set of $p$ ($g$) depending on which is 
 more to the right.

\begin{example}
Figure \ref{fig:sim3} illustrates a run in $\lhc$ on a program where  promises are necessary to reach the annotated part $\textcolor{sangria}{//}$.   
To reach the annotated part in P1, the execution proceeds as follows. C1, C2 represent 
two certification phases.
\begin{enumerate}
	\item[(1)] P1 promises the write of 42 to $x$, by a message $(x, 42, (f,t], \bot)$.
	\item[(C1)] To certify, P1 begins from the capped memory, 
	and enters the else branch. It begins a duplicate of the higher order words, and works on them in this phase. 
	\begin{itemize}
	\item 	 Since all positions in $(0,t]$ are blocked, 
	P1 splits the interval $(f,t]$ to write 41 to 
	$x$, and modifies the memory to $(x, 42, (t',t], \bot)$, 
	$(x, 41, (f,t'], \bot)$.
	\item P1 fulfils its promise
	\end{itemize}
\item[(2)] P2 reads 42 from $x$ and writes 42 to $z$
\item[(3)] P1 reads 42 from $z$
\item[(4)] P1 fulfils its promise, and reaches the annotated part. 	
	\end{enumerate}

\begin{figure}[ht]
\includegraphics[scale=0.13]{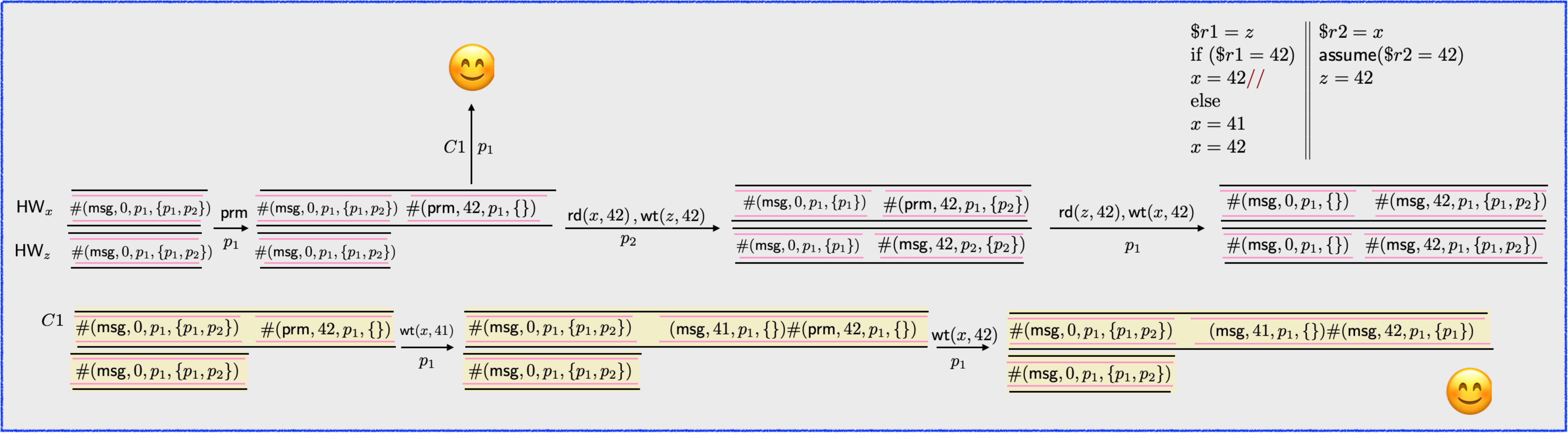}	
\caption{Run in $\lhc$. The certification phase works on the duplicates of $\ch_x, \ch_z$ denoted in yellow.}
\label{fig:sim3}
\end{figure}
\label{eg:sim3}
\end{example}

\subsection{Formal Model of  $\lhc$}
\label{sec:formal}
In the following, we formally define $\lhc$ and state the equivalence  of  the  reachability problem in $\psr$ 
and $\lhc$.

\smallskip

\noindent{\bf Insertion into higher order words.}
A higher order word $\ch$ can be extended in position  $1 \leq j\leq |\ch|$ with  a memory type $m$ of the form $(r, v, p, \{p\})$  in a number of ways:
\smallskip

\noindent
$\bullet$ {\it Insertion as a new simple word.}
$\ch \underset{j}{\stackrel{N}{\hookleftarrow}} m$ is defined only if 
$\ch[j-1]= \#$ (i.e., the position $j$ is the end of a simple word). Let $\ch'$ be the higher order word defined as $delete(\ch,p)$  (i.e., removing $p$ from its  previous set of pointers). Then, the extended higher order   $\ch \underset{j}{\stackrel{N}{\hookleftarrow}} m$ is defined as   $\ch'[1,j] \cdot  \# m \cdot \ch'[j+1,|\ch|]$  (i.e.,  inserting the new simple word just after the position $j$).

\smallskip

\noindent
$\bullet$ {\it Insertion at the end of a simple word.}
$\ch \underset{j}{\stackrel{E}{\hookleftarrow}} m$ is defined only if 
$\ch[j-1]= \#$ (i.e., the position $j$ is the end of a simple word) and $\ch[j] \in \Sigma$ (i.e., the last memory type in the simple word should be free from reservations). Let $\ch'$ be the higher order word defined as $delete(\ch,p)$.  Then,  the extended higher order   $\ch \underset{j}{\stackrel{E}{\hookleftarrow}} m$ is defined as   $w_1 \cdot  m' \cdot  \# m \cdot w_2$  with $\ch'= w_1 \cdot \# m'  \cdot w_2 $, and $m' \in \Sigma$, and $|w_1 \cdot \# m'|= j$ (i.e., inserting the new memory type just after the position $j$).

\smallskip

\noindent
$\bullet$ {\it Splitting a promise.}
$\ch \underset{j}{\stackrel{SP}{\hookleftarrow}} m$ is defined only if 
$\ch[j]$ is of the form $(\prm,-,p,-)$ or $(\prm,-,p,-,-)$ (i.e., the memory type at position $j$ is a promise). Let $\ch'$ be the higher order word defined as $delete(\ch,p)$.  Then,  the extended higher order   $\ch \underset{j}{\stackrel{SP}{\hookleftarrow}} m$ is defined as   $(1)$ $\ch'[1,j-2] \cdot  m \cdot \# m'    \cdot \ch'[j+1,|\ch|]$  if $\ch'[j]=m'$ and  $\ch'[j-1]=\#$, or $(2)$ $\ch'[1,j-1] \cdot  m \cdot  m'    \cdot  \ch'[j+1,|\ch|]$  if $\ch'[j]=m'$ and  $\ch'[j-1] \neq \#$. Observe that in both cases we are inserting  
the new memory type $m$ just before the position $j$.

\smallskip

\noindent
$\bullet$ {\it Fulfilment of  a promise.}
$\ch \underset{j}{\stackrel{FP}{\hookleftarrow}} m$ is defined only if 
$\ch[j]$ is of the form $(\prm,v,p,S)$ or $(\prm,v,p,S,q)$. Let $\ch'$ be the higher order word defined as $delete(\ch,p)$.  Then,  the extended higher order   $\ch \underset{j}{\stackrel{FP}{\hookleftarrow}} m$ is defined as  $\ch'[1,j-1] \cdot  m'    \cdot \ch'[j+1,|\ch'|]$  with   $m'=(\msgg,v,p,S\cup \{p\})$ if $\ch[j]=(\prm,v,p,S)$ and   $m'=(\msgg,v,p,S\cup \{p\},q)$ if $\ch[j]=(\prm,v,p,S,q)$.

\smallskip

\noindent
$\bullet$ {\it Splitting a reservation.}
$\ch \underset{j}{\stackrel{SR}{\hookleftarrow}} m$ is defined only if 
$\ch[j]$ is of the form $(r',v',q,S,p)$. Let $\ch'$ be the higher order word defined as $delete(\ch,p)$.  Then,  the extended higher order   $\ch \underset{j}{\stackrel{SR}{\hookleftarrow}} m$ is defined as    $\ch'[1,j-2] \cdot    (r',v',q,S)\cdot \#  (r,v,p,\{p\},p) \cdot \ch'[j+1,|\ch|]$. Observe that the new message $ (r,v,p,\{p\},p)$ is added to the right of the position $j$ which corresponds to the slot that has been reserved by $p$. This special splitting rule will be used during the certification phase. This will allow the process $p$ to  use   the reserved slots. Recall that it is not allowed to add memory types in the middle of the higher order words (other than the reserved ones) during the certification phase.
\smallskip

\noindent{\bf Making/Canceling a reservation.}
A higher order word $\ch$ can also be   modified through making/cancelling  a reservation at a position $1 \leq j\leq |\ch|$ by a process $p$. Thus, we define the operation ${Make}(\ch,p,j)$ (resp. ${Cancel}(\ch,p,j)$) that reserves (resp. cancels) a time slot at the position $j$.  ${Make}(\ch,p,j)$ (resp. ${Cancel}(\ch,p,j)$) is only defined if $\ch[j]$ is of the form $(r,v,q,S)$ (resp. $(r,v,q,S,p)$) and $\ch[j-1]=\#$. Then,  the extended higher order ${Make}(\ch,p,j)$ (resp. ${Cancel}(\ch,p,j)$) is defined as  $\ch[1,j-1] \cdot    (r,v,q,S,p)\cdot  \ch[j+1,|\ch|]$ (resp. $\ch[1,j-1] \cdot    (r,v,q,S)\cdot  \ch[j+1,|\ch|]$).

\smallskip

\noindent{\bf{Process configuration in $\lhc$}.} A configuration  of  $\proc \in \procset$ in $\lhc$     consists of a pair  $(\sigma, {\chh})$ where $(1)$  $\sigma$  is the process state maintaining the 
instruction label and the  register values  (see Subsection \ref{sec:ps}), and ${\chh}$ is  a mapping from the set of locations to   higher order words. 
 The transition relations $\xrightarrow[\proc]{\nor}$ and $\xrightarrow[\proc]{\cert}$ between process configuration is given in Figure  \ref{ps-program_sem}. The transition relation $\xrightarrow[\proc]{\cert}$ is used only in the certification phase while $\xrightarrow[\proc]{\nor}$ is used to simulate the standard phase of $\psr$.
A read operation in both phases (standard and certification) is handled by reading a value from a memory type which is on the right of the current pointer of $\proc$. A write operation, in the standard phase, can result in the insertion, on the right of the current pointer of $p$, of a new memory type at the end of a simple word or as a new simple word.  The memory type resulting from a  write in the certification phase is only allowed to be  inserted at the end of the higher order word or at the reserved slots (using the rule splitting a reservation). Write can also be used to fulfil a promise or to split a promise (i.e., partial fulfilment) during the both phases. Making/canceling a reservation will result in tagging/untagging a  memory type at the end of a simple word on the right of the current pointer of $p$. The case of RMW  is similar to a read followed by a write operations (whose resulting memory type should be inserted to the right of the read memory type). Finally, a promise can only be made during the standard phase and the resulting memory type will be  inserted  at the end of a simple word or as a new word on the right of the current pointer of $p$.

\tikzset{background rectangle/.style={fill=none,rounded corners,thick}}
\begin{figure}[t]
\resizebox{1.0\textwidth}{!}{
  \begin{tikzpicture}[codeblock/.style={line width=0.5pt, inner xsep=0pt, inner ysep=0pt}, show background rectangle]
\node[codeblock] (init) at (current bounding box.north west) {
$
\begin{array}{cc}
\rowcolor{red!10}
\displaystyle\frac{\sigma \xrightarrow[p]{\rd(x,v)} \sigma', ~~ i \geq \ptr(p,{\chh}(x)), ~~ v={\chh}(x)[i].value, ~~ {\chh'}={\chh}[x \mapsto move({\chh}(x),p,i)]}
{(\sigma,{\chh}) \xrightarrow[\proc]{a}
(\sigma',{\chh'})} 
&  \begin{array}{c}{\texttt{Read}}\\{a\in \{\cert,\nor\}}\end{array}
\\
\rowcolor{blue!5}
\displaystyle\frac{\sigma \xrightarrow[p]{\wt(x,v)} \sigma', ~~ i > \ptr(p,{\chh}(x)), ~~ {\chh'}={\chh}[x \mapsto ({\chh}(x)\underset{i}{\stackrel{K}{\hookleftarrow}} (\msgg,v,p,\{p\}))]}
{(\sigma,{\chh}) \xrightarrow[\proc]{a}
(\sigma',{\chh'})} 
&  \begin{array}{c}{\texttt{(Partial) fulfilment} (write)}\\ {a\in \{\cert,\nor\}, K \in \{SP,FP\}}\end{array}
\\
\rowcolor{green!5}
\displaystyle\frac{\sigma \xrightarrow[p]{\wt(x,v)} \sigma', ~~ i \geq \ptr(p,{\chh}(x)), ~~ {\chh'}={\chh}[x \mapsto ({\chh}(x)\underset{i}{\stackrel{K}{\hookleftarrow}} (\msgg,v,p,\{p\}))]}
{(\sigma,{\chh}) \xrightarrow[\proc]{\nor}
(\sigma',{\chh'})} 
&  \begin{array}{c}{\texttt{Standard write}}\\ {K \in \{N,E\}}\end{array}
\\
\rowcolor{red!10}
\displaystyle\frac{\sigma \xrightarrow[p]{\wt(x,v)} \sigma', ~~ i = |{\chh}(x)|, ~~ {\chh'}={\chh}[x \mapsto ({\chh}(x)\underset{i}{\stackrel{K}{\hookleftarrow}} (\msgg,v,p,\{p\}))]}
{(\sigma,{\chh}) \xrightarrow[\proc]{\cert}
(\sigma',{\chh'})} 
&  \begin{array}{c}{\texttt{Certification write}}\\ {K \in \{N,E\}}\end{array}
\\
\rowcolor{blue!5}
\displaystyle\frac{\sigma \xrightarrow[p]{\wt(x,v)} \sigma', ~~ i \geq \ptr(p,{\chh}(x)), ~~ {\chh'}={\chh}[x \mapsto ({\chh}(x)\underset{i}{\stackrel{SR}{\hookleftarrow}} (\msgg,v,p,\{p\}))]}
{(\sigma,{\chh}) \xrightarrow[\proc]{\cert}
(\sigma',{\chh'})} 
&  \begin{array}{c}{\texttt{Splitting a reservation (write)}}\\ {}\end{array}
\\
\rowcolor{green!10}
\displaystyle\frac{i \geq \ptr(p,{\chh}(x)), ~~ {\chh'}={\chh}[x \mapsto {Make}({\chh}(x),p,i)]}
{(\sigma,{\chh}) \xrightarrow[\proc]{\nor}
(\sigma,{\chh'})} 
&  \begin{array}{c}{\texttt{Making a reservation}}\\ {}\end{array}
\\
\rowcolor{red!5}
\displaystyle\frac{i \geq \ptr(p,{\chh}(x)), ~~ {\chh'}={\chh}[x \mapsto {Cancel}({\chh}(x),p,i)]}
{(\sigma,{\chh}) \xrightarrow[\proc]{a}
(\sigma,{\chh'})} 
&  \begin{array}{c}{\texttt{Cancelling a reservation}}\\ {a\in \{\cert,\nor\}}\end{array}
\\
\rowcolor{blue!5}
\displaystyle\frac{\sigma \xrightarrow[p]{\upd(x,v_r,w_r)} \sigma', ~~ i \geq \ptr(p,{\chh}(x)), ~~ v_r={\chh}(x)[i].value, ~~ {\chh'}={\chh}[x \mapsto ({\chh}(x)\underset{i}{\stackrel{E}{\hookleftarrow}} (\msgg,w_r,p,\{p\}))]}
{(\sigma,{\chh}) \xrightarrow[\proc]{\nor}
(\sigma',{\chh'})} 
&  \begin{array}{c}{\texttt{Standard update}}\\ {}\end{array}
\\
\rowcolor{green!10}
\displaystyle\frac{\sigma \xrightarrow[p]{\upd(x,v_r,w_r)} \sigma', ~~ i=|{\chh}(x)|, ~~ v_r={\chh}(x)[i].value, ~~ {\chh'}={\chh}[x \mapsto ({\chh}(x)\underset{i}{\stackrel{E}{\hookleftarrow}} (\msgg,w_r,p,\{p\}))]}
{(\sigma,{\chh}) \xrightarrow[\proc]{\cert}
(\sigma',{\chh'})} 
&  \begin{array}{c}{\texttt{Certification Update}}\\ {}\end{array}
\\
\rowcolor{red!5}
\displaystyle\frac{\sigma \xrightarrow[p]{\upd(x,v_r,w_r)} \sigma', ~~ i \geq \ptr(p,{\chh}(x)), ~~ v_r={\chh}(x)[i].value, ~~ {\chh'}={\chh}[x \mapsto ({\chh}(x)\underset{i+1}{\stackrel{K}{\hookleftarrow}} (\msgg,w_r,p,\{p\}))]}
{(\sigma,{\chh}) \xrightarrow[\proc]{a}
(\sigma',{\chh'})} 
&  \begin{array}{c}{\texttt{(Partial) fulfilment (update)}}\\ {a\in \{\cert,\nor\}, K \in \{SP,FP\}}\end{array}
\\
\rowcolor{blue!5}
\displaystyle\frac{\sigma \xrightarrow[p]{\upd(x,v_r,w_r)} \sigma', ~~ i \geq \ptr(p,{\chh}(x)), ~~ v_r={\chh}(x)[i].value, ~~ {\chh'}={\chh}[x \mapsto ({\chh}(x)\underset{i}{\stackrel{SR}{\hookleftarrow}} (\msgg,w_r,p,\{p\}))]}
{(\sigma,{\chh}) \xrightarrow[\proc]{\cert}
(\sigma',{\chh'})} 
&  \begin{array}{c}{\texttt{Splitting a reservation (update)}}\\ {}\end{array}
\\
\rowcolor{green!10}
\displaystyle\frac{i \geq \ptr(p,{\chh}(x)), ~~ {\chh'}={\chh}[x \mapsto ({\chh}(x)\underset{i}{\stackrel{E}{\hookleftarrow}} (\prm,v,p,\{\}))]}
{(\sigma,{\chh}) \xrightarrow[\proc]{\nor}
(\sigma,{\chh'})} 
&  \begin{array}{c}{\texttt{Promise}}\\\end{array}
\\
\rowcolor{red!5}
\displaystyle\frac{\sigma \xrightarrow[p]{(\fence)} \sigma',   ~~ i_x= max(\ptr(p,{\chh}(x)),\ptr(g,{\chh}(x))),  ~~ {\chh'}={\chh}[x \mapsto move({\chh}(x),p,i_x)]_{x \in \varset}[x \mapsto move({\chh}(x),g,i_x)]_{x \in \varset}}
{(\sigma,{\chh}) \xrightarrow[\proc]{a}
(\sigma',{\chh'})} 
&  \begin{array}{c}{\texttt{SC-fence}}\\ {a \in \{\nor,\cert\}}\end{array}
\\
\end{array}$
};
\end{tikzpicture}
}
\caption{\footnotesize $\lhc$ inference rules at the process level, defining the transition $(\sigma,{\chh})\xrightarrow[\proc]{a} (\sigma',{\chh'})$
where $\proc \in \procset$ and $a \in \{\nor,\cert\}$ is the current mode. $\sigma=(J,R)$ and $\sigma'=(J',R')$ represent 
local process states.}
 \label{ps-program_sem}
\vspace{-0.5cm}
\end{figure}

\smallskip

\noindent{\bf{Losses in $\lhc$.}}
Let $\ch$ and $\ch'$ be two higher order words in  $(\Sigma^* \# (\Sigma \cup \Gamma))^+$. Let us assume that $\ch= u_1 \#a_1  u_2  \# a_2 \dots u_k \# a_k$  
and $\ch'= v_1 \# b_1 v_2  	\# b_2 \dots v_m \# b_m$, with 
$u_i, v_i \in \Sigma^*$ and $a_i, b_j \in \Sigma \cup \Gamma$. We extend the subword relation $\sqsubseteq$ to higher order word as follows: 
$\ch \sqsubseteq \ch'$ iff there is a strictly  increasing function 
$f: \{1, \dots, k\} \rightarrow \{1, \dots, m\}$ s.t.  
$(1)$ $u_i \sqsubseteq v_{f(i)}$ for all $1 \leq i \leq k$, $(2)$ 
$a_i=b_{f(i)}$, and $(3)$ we have the same number of  memory types of the form $(\prm,-,-,-)$ or $(\prm,-,-,-,-)$ in $\ch$ and $\ch'$.
The  relation  $\sqsubseteq$ corresponds to the  loss of  some special empty memory types and redundant simple words (as explained earlier). 
The  relation  $\sqsubseteq$ is extended to mapping from locations to higher order words as follows: ${\chh} \sqsubseteq {\chh'}$ iff  ${\chh}(x) \sqsubseteq {\chh'}(x)$ for all $x \in \varset$.

\smallskip

\noindent{\bf{$\lhc$ states}.}
A $\lhc$ state $\lst$  is a tuple $(({\sf J}, {\sf R}), \chh)$
where
${\sf J} : \procset \mapsto \mathbb{L}$ maps each process $p$ to the label of the next instruction to be executed, 
${\sf R} : \regset \rightarrow \mathsf{Val}$ maps each register to its current value, and $\chh$ is a mapping from locations to higher order words.  The initial $\lhc$ state $\lst_{\rm init}$ is defined as 
$(({\sf J}_{\rm init}, {\sf R}_{\rm init}), \chh_{\rm init})$  where:
(1) ${\sf J}_{\rm init}(p)$ is the label of the  initial   instruction of $\proc$; (2) ${\sf R}_{\rm init}(\reg)=0$ for every register $\reg \in \regset$; and $(3)$ $\chh_{\rm init}(x)=\ch^{\rm init}_x$ for all $x \in \varset$. 

Now we are ready to define the induced transition relation between $\lhc$ states.  For two $\lhc$ states $\lst=(({\sf J}, {\sf R}), \chh)$ and $\lst'=(({\sf J}', {\sf R}'), \chh')$ and $a \in \{\nor,\cert\}$, we write
  $\lst \xrightarrow[p]{a} \lst' $ iff one of the following cases holds: $(1)$ $(({\sf J}(p), {\sf R}), \chh)  \xrightarrow[p]{a} (({\sf J}'(p), {\sf R}'), \chh')$ and ${\sf J}(p')={\sf J}'(p')$ for all $p' \neq p$, or $(2)$ $({\sf J}, {\sf R})=({\sf J}', {\sf R}')$ and ${\chh} \sqsubseteq {\chh'}$.

\smallskip

\noindent{\bf{Two phases $\lhc$ states}}. A two-phases state of $\lhc$ is $\Ss=(\pi, p, \lst_{\nor}, \lst_{\cert})$
where $\pi \in \{\cert, \nor\}$ is a flag describing whether 
the $\lhc$ is in ``standard'' phase or ``certification'' phase, $p$ is the process which evolves in one of these phases, 
while $\lst_{\nor}$, $\lst_{\cert}$ are two $\lhc$  states (one for each phase). 
When the $\lhc$ is in the standard phase, then $\lst_{\nor}$ 
evolves, and when the $\lhc$ is in certification phase, 
$\lst_{\cert}$ evolves. A two-phases $\lhc$ state is said to be initial if it is of the form  $(\nor, p, \lst_{\rm init}, \lst_{\rm init})$, where 
$p \in \procset$ is any process. The transition relation $\rightarrow$
 between two-phases  $\lhc$ states  is defined as follows: Given  $\Ss=(\pi, p, \lst_{\nor}, \lst_{\cert})$ and $\Ss'=(\pi', p', \lst'_{\nor}, \lst'_{\cert})$, we have $\Ss \rightarrow \Ss'$ iff one of the following cases hold:
 
 \begin{itemize}[leftmargin=*]
 \item {\bf During  the standard phase.} $\pi=\pi'=\nor$, $p=p'$, $\lst_{\cert}=\lst'_{\cert}$ and $\lst_{\nor} \xrightarrow[p]{\nor}\lst'_{\nor}$. This corresponds to a simulation of a standard step  of the process $p$.
 
  \item {\bf During the certification phase.} $\pi=\pi'=\cert$, $p=p'$, $\lst_{\nor}=\lst'_{\nor}$ and $\lst_{\cert} \xrightarrow[p]{\cert}\lst'_{\cert}$. This corresponds to a simulation of a certification step  of the process $p$.
 
    \item {\bf From the standard  phase to the certification phase.} $\pi=\nor$, $\pi'=\cert$, $p=p'$, $\lst_{\nor}=\lst'_{\nor}= (({\sf J}, {\sf R}), \chh)$,  and $\lst'_{\cert}$ is of the form $(({\sf J}, {\sf R}), \chh')$ where for every $x \in \varset$, $\chh'(x)=\chh(x) \# (\msgg,v,q,\{\})$ if  $\chh(x)$ is of the form $w \cdot \# (-,v,-,-,q)$ with $q \neq p$, and $\chh'(x)=\chh(x)$ otherwise. This corresponds to the copying of the standard $\lhc$  state to the certification $\lhc$ state in order to check if the set of promises made by the process $p$ can be fulfilled. The  higher order word $\chh'(x)$ (at the beginning of the certification phase) is almost the same as $\chh(x)$ (at  the end of the standard phase) except when the rightmost memory type $(-,v,-,-,q)$ of $\chh(x)$  is tagged by a reservation of a process $q \neq p$. In that case, we append the memory type $ (\msgg,v,q,\{\})$  at the end of $\chh(x)$ to obtain $\chh'(x)$. Note that this is in accordance 
     to the definition of capping memory before going into certification: to cite, (item 2 in capped memory of  \cite{promising2}), a cap message 
     is added for each location unless it is a reservation made by the process going in for certification.  
            It is easy to see that this transition rule can be implemented by a sequence of transitions 
    which copies one symbol at a time, from $\chh$ to $\chh'$.

   \item {\bf From the certification phase to standard phase.} $\pi=\cert$, $\pi'=\nor$,  $\lst_{\nor}=\lst'_{\nor}$,   $\lst_{\cert}=\lst'_{\cert}$, and  $\lst_{\cert}$ is of the form $(({\sf J}, {\sf R}), \chh)$ with $\chh(x)$  does not contain any memory type of the form $(\prm,-,p,-)$/$(\prm,-,p,-,-)$ for all $x \in \varset$ (i.e., all  promises made by  $p$ are  fulfilled).

 \end{itemize}

\smallskip
 
  \noindent{\bf {The   Reachability Problem in $\lhc$}}. 
Given an instruction label function $J: \procset \rightarrow \mathbb{L}$ that maps each  $\proc \in \procset$ to a label in $ \mathbb{L}_{\proc}$, 
the \emph{reachability} problem in $\lhc$ asks
whether there exists a two phases $\lhc$ state $\Ss$ of the form  $(\nor, -, ((J,R), \chh), ((J',R'), \chh'))$ s.t. $(1)$  $\chh(x)$   and $\chh'(x) $ do not contain any memory type of the form $(\prm,-,p,-)$/$(\prm,-,p,-,-)$ for all $x \in \varset$, and $(2)$ $\Ss$ is reachable in $\lhc$  (i.e., $\Ss_0 \rtstep{\xrightarrow[]{}} \Ss'$
where $\Ss_0$ is an initial two-phases $\lhc$ states).
In the case of a positive answer to this problem, we say that $J$ is  reachable in $\prog$ in $\lhc$.

\begin{theorem}
\label{equivalence}
An  instruction label function $J$ is  reachable in a program $\prog$ in $\lhc$ iff  $J$ is   reachable in $\prog$ in $\psr$.
\label{thm:eqv}
\end{theorem}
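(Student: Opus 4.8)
The plan is to prove both directions at once by exhibiting a \emph{simulation relation} that connects $\psr$ machine states with $\lhc$ two-phase states through a timestamp-forgetting abstraction $\alpha$. Concretely, I would define $\alpha$ to send a $\psr$ memory $M$ to a family of higher order words $(\ch_x)_{x\in\varset}$ as follows: for each location $x$, list the elements of $M(x)$ in increasing order of their $\too$ timestamp; group a maximal run of pairwise \emph{adjacent} elements (those with $m_i.\too = m_{i+1}.\from$) into one simple word, placing the separator $\#$ just before the last element; record in each symbol whether it is a message ($\msgg$) or a promise ($\prm$), its value, the process that created it, and the pointer set $S=\{p' : {\sf VS}(p')(x)=m.\too\}$; and tag the rightmost symbol of a simple word with $q$ (moving it into $\Gamma$) exactly when the adjacent slot to its right holds a reservation of $q$. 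This is precisely the correspondence drawn in Figure~\ref{fig:ch2}, and $\alpha$ preserves the $({\sf J},{\sf R})$ component verbatim, so it suffices to match transitions. The inverse (concretization) direction relies on the density of $\mathsf{Time}$: given a family of higher order words one can always choose timestamps respecting the left-to-right order and forcing adjacency exactly within simple words, and density guarantees fresh room between or above existing intervals.

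For the direction ``$\psr$ reachable $\Rightarrow$ $\lhc$ reachable'', I would show that every process-level rule of Figure~\ref{program_sem} is matched under $\alpha$ by the corresponding rule of Figure~\ref{ps-program_sem}. The key rule correspondences are: an $\stackrel{A}{\hookleftarrow}$ insertion with a timestamp strictly above all adjacent intervals matches insertion as a new simple word ($\stackrel{N}{\hookleftarrow}$), whereas an $\stackrel{A}{\hookleftarrow}$ insertion immediately adjacent to an existing message matches insertion at the end of a simple word ($\stackrel{E}{\hookleftarrow}$); the splitting insertion $\stackrel{S}{\hookleftarrow}$ used in promise fulfilment matches $\stackrel{SP}{\hookleftarrow}$; the lowering insertion $\stackrel{L}{\hookleftarrow}$ matches the fulfilment rule $\stackrel{FP}{\hookleftarrow}$, and in the $\rlx$ fragment it is essentially trivial since message views are $\bot$; reservation and cancellation match $\mathit{Make}$ and $\mathit{Cancel}$; and an RMW matches a read followed by an $\stackrel{E}{\hookleftarrow}$ insertion immediately after the read symbol, which is exactly what the higher order word structure was designed to capture. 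In each case the view update in $\psr$ corresponds to moving $p$ into the appropriate pointer set, and I would verify that well-formedness of each $\ch_x$ (a unique pointer per process) is preserved.

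For the converse, ``$\lhc$ reachable $\Rightarrow$ $\psr$ reachable'', I would concretize as above and replay each $\lhc$ rule by the matching $\psr$ rule, using density to realise every insertion position. The lossy steps (the relation $\sqsubseteq$ in the definition of $\xrightarrow[p]{a}$) require a separate argument: such a step need not be simulated at all on the $\psr$ side. Indeed $\sqsubseteq$ only deletes \emph{empty memory types} of $\Sigma$ and \emph{redundant simple words}, i.e.\ elements whose pointer set is empty and which carry no promise; the corresponding $\psr$ messages are therefore read by no process and obstruct no future move, so the richer $\psr$ memory can simply retain them while the abstraction continues to match a subword. Hence reachability of the target label function $J$ is unaffected by losses.

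The main obstacle, deserving the most careful case analysis, is the certification/consistency correspondence. I would prove that the capped memory $\widehat{M}_{PS(p)}$ of $\psr$ corresponds exactly to the discipline of the $\lhc$ certification phase. The from-standard-to-certification transition duplicates $\chh$ and appends a cap message $(\msgg,v,q,\{\})$ at the right end of every $\ch_x$ whose rightmost symbol is tagged by a reservation of some $q\neq p$, mirroring item~(2) of the capped-memory definition (a cap message is added unless the top element is a reservation of the certifying process); the reservations that capped memory places on all intermediate free slots are mirrored by the $\lhc$ invariant that adjacency inside a simple word forbids middle insertions and that certification-phase writes are permitted only at the right end or at a reserved slot via $\stackrel{SR}{\hookleftarrow}$. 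With this established, a $\psr$ machine state is consistent iff the corresponding $\lhc$ standard state admits a certification-phase run emptying $p$'s promises, so the two notions of reachability via consistent states coincide and the theorem follows. The delicacy lies in matching the cap \emph{view} and the single top cap message of $\psr$ with the right-end append of $\lhc$, and in checking that $\stackrel{SR}{\hookleftarrow}$ exactly recovers the reserved intervals the certifying process is allowed to use.
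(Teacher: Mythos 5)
Your proposal is correct and follows essentially the same route as the paper's proof: a structural correspondence between $\psr$ memories and higher order words (which the paper phrases as invariants \textbf{Inv1}/\textbf{Inv2} plus an order/pointer-set correspondence proved by induction, and you phrase as an explicit abstraction $\alpha$), a rule-by-rule matching of Figure~\ref{program_sem} against Figure~\ref{ps-program_sem} in both directions with the same rule pairings ($\stackrel{A}{\hookleftarrow}$ vs.\ $N$/$E$, $\stackrel{S}{\hookleftarrow}$ vs.\ $SP$, $\stackrel{L}{\hookleftarrow}$ vs.\ $FP$, reservations vs.\ $Make$/$Cancel$/$SR$), the same observation that losses only discard pointer-free, promise-free material and hence never need to be mirrored on the $\psr$ side, and the same treatment of consistency via the capped memory versus the duplicate-and-restrict discipline of the certification phase.
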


\subsection{Decidability of $\lhc$ with Bounded Promises}
\label{sec:decproof}
The equivalence of  the reachability 
in $\lhc$ and $\psr$, coupled with Theorem \ref{thm:undec} shows that  reachability is undecidable in $\lhc$. 
To recover decidability, we look at  $\lhc$ with only bounded number of the promise memory type in any higher order word. Let K-$\lhc$ denote $\lhc$ with a number of promises  bounded by $K$.  (Observe that K-$\lhc$ corresponds to $\bps$.) 
 
\begin{theorem}
 The  reachability  problem  is  decidable for K-$\lhc$. 
   \label{decidability-qs}	\end{theorem}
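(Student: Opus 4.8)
The plan is to realise $K$-$\lhc$ as a \emph{well-structured transition system} (WSTS) and to reduce its reachability problem to the \emph{coverability} problem, which is decidable for WSTS by the backward-reachability algorithm of \cite{wsts1,wsts2}. Throughout I assume the data domain $\mathsf{Val}$ is finite (the programs are finite-state), so that the memory-type alphabets $\Sigma$ and $\Gamma$ are finite and the control part $({\sf J},{\sf R})$ of a $\lhc$ state ranges over a finite set. As the quasi-order I take the extension of the loss relation $\sqsubseteq$ of Section~\ref{sec:formal}: two two-phase states $(\pi,p,\lst_\nor,\lst_\cert)$ and $(\pi',p',\lst'_\nor,\lst'_\cert)$ are comparable only when $\pi=\pi'$, $p=p'$ and the control parts of the two $\lhc$ components agree, in which case they are compared componentwise by $\sqsubseteq$ on the higher-order-word maps. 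Reaching a label function $J$ means reaching a state with $\pi=\nor$, control $J$, and no promise memory types in either $\lhc$ component; since $\sqsubseteq$ preserves the number of promises and the order fixes the control, this target set is upward closed, so reachability in $\lhc$ coincides with coverability of this set.

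The heart of the argument is to show that this quasi-order is a \emph{well}-quasi-order, and this is exactly where the bound $K$ is used. A higher-order word decomposes into a nonempty sequence of blocks $B_i=(u_i,a_i)$ with $u_i\in\Sigma^*$ and $a_i\in\Sigma\cup\Gamma$ (the $u_i$ is the part preceding the last $\#$ and $a_i$ the trailing symbol). Order blocks by $B\le_B B'$ iff $u\sqsubseteq u'$ and $a=a'$; since $\Sigma\cup\Gamma$ is finite and the subword order on $\Sigma^*$ is a well-quasi-order by Higman's lemma, $\le_B$ is a well-quasi-order as a product of two such orders. Applying Higman's lemma again to sequences of blocks ordered by $\le_B$, the block-subsequence embedding is a well-quasi-order, capturing conditions~(1)--(2) of $\sqsubseteq$. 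The only additional requirement is condition~(3), equality of the total number of promise memory types, which is a global count in $\{0,\dots,K\}$ precisely because of the promise bound. Given any infinite sequence of $K$-bounded higher-order words, pigeonhole on this count isolates an infinite subsequence all of whose members share one count; Higman then produces two that embed, and condition~(3) holds automatically since the counts are equal. Hence $\sqsubseteq$ is a well-quasi-order on $K$-bounded higher-order words, and by closure of well-quasi-orders under finite products (over the finite set of locations, over the finite control, and over the finite tags $\pi,p$) the full order on two-phase $K$-$\lhc$ states is a well-quasi-order. Without the bound, the distinct promise counts form an infinite $\sqsubseteq$-antichain, consistent with the undecidability of Theorem~\ref{thm:undec}.

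It remains to check that the transition relation is monotone (compatible) with this order and that minimal predecessors are computable. Compatibility would be established rule by rule from Figure~\ref{ps-program_sem}: the content that $\sqsubseteq$ allows to be discarded is exactly the empty memory types in non-rightmost positions together with the redundant simple words, i.e.\ precisely the material irrelevant to any read, write, RMW, reservation, promise, fence or phase-switch step. Consequently any step enabled from a state is still enabled, up to the order, from any larger state: the surplus losable content can first be removed by a loss step (case~(2) of the $\lhc$ machine relation) so that the move replays to a dominating state. Because every rule has a finite, syntactic description and inserts or relabels boundedly many symbols at positions fixed by the process pointers, the set of minimal predecessors of an upward-closed set presented by a finite basis is computable. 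With a well-quasi-order and monotone, effective transitions, the backward algorithm of \cite{wsts1,wsts2} decides coverability, and hence the reachability problem for $K$-$\lhc$; combined with Theorem~\ref{thm:eqv} this yields decidability for $\bps$.

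The main obstacle I expect is the rule-by-rule compatibility proof, and in particular verifying that the loss relation $\sqsubseteq$ is calibrated just right: it must erase enough content to be a well-quasi-order (which, given the promise bound, it is) yet never erase content whose removal would disable a step or drop a process pointer, so that larger states faithfully simulate smaller ones. The delicate cases are those touching the adjacency structure of simple words---RMWs, making and cancelling reservations, and the interval-splitting fulfilment rules---where the exact-match condition~(2) on trailing symbols is what guarantees that the structural side-conditions of the rules (such as a simple word ending in $\Sigma$ rather than $\Gamma$) are preserved under the embedding.
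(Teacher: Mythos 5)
Your proposal is correct and follows essentially the same route as the paper: both cast K-$\lhc$ as a well-structured transition system under the promise-count-preserving embedding $\sqsubseteq$, reduce reachability of $J$ to coverability of the upward-closed promise-free target, and discharge the three WSTS obligations (wqo via Higman's lemma with the bound $K$ taming condition~(3), rule-by-rule monotonicity, and effectively computable minimal predecessors--which the paper carries out concretely via word encodings and rational transducers where you argue it from the bounded, syntactic form of the rules). The only notable difference is one of detail, not of method: you spell out the wqo argument (Higman twice plus pigeonhole on the promise count) more explicitly than the paper, while the paper is more explicit than you on the predecessor computation.
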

  As a corollary of Theorem \ref{decidability-qs}, the decidability 
  of  reachability follows for $\bps$.  
   The proof makes use of the framework of \emph{Well-Structured Transition Systems} (WSTS) \cite{wsts2,wsts1}, and follows from lemmas \ref{lem:wqo} to \ref{computing-pre}.

\smallskip

\noindent{\bf Well-Structured Transition Systems (WSTS).}
We recall the main ingredients of WSTS.
For more details, the reader is referred to \citet{wsts1,wsts2}.
\smallskip

\noindent{\it Well-quasi Orders.}
Given a (possibly infinite set) $C$,
a quasi-order on $C$ is a reflexive and transitive relation ${\preceq} \subseteq C \times C$.
An infinite sequence $c_1, c_2, \dots$ in $C$ is said to be saturating if there exists
indices $i < j$ s.t. $c_i \preceq c_j$. A quasi-order $\preceq$ is said to be a well-quasi order (wqo)
on $C$ if every infinite sequence in $C$ is saturating.  Given a quasi-order $\preceq$ on $C$, 
the \emph{embedding order} $\sqsubseteq$ on $C^*$ (i.e., the set of finite words over $C$) is defined as
$a_1a_2 \dots a_m \sqsubseteq b_1 b_2 \dots b_n$ if there exists a strictly increasing 
function $g: \{1,2,\dots,m\} \rightarrow \{1, 2, \dots, n\}$ s.t. for all $1 \leq i \leq m$, 
$a_i \preceq b_{g(i)}$. It is well-known  that if 
$\preceq$ is a wqo on $C$, then the embedding order $\sqsubseteq$ is also a wqo on $C^*$ \cite{higman}.

\smallskip\noindent{\it Upward Closure.}
Given a wqo $\preceq$ on a set $C$, a set $U \subseteq C$ is  upward closed if for every $a \in U$ and $b \in C$, with $a \preceq b$,
we have $b \in U$. The upward closure of a set $U \subseteq C$ is 
$\upclos{U} =\{b \in C \mid \exists a \in U, a \preceq b\}$.
It is known that every upward closed set $U$ can be characterized by a finite \emph{minor}.
A minor $M \subseteq U$ is s.t.\
(i) for each $a \in U$, there is a $b \in M$ s.t.\ $b \preceq a$, and
(ii) for all $a, b \in M$ s.t. $a \preceq b$, we have $a=b$.
For an upward closed set $U$, let $\mini$ be the function that returns the minor of $U$.

\smallskip\noindent{\it Well-Structured Transition Systems (WSTS)}. 
Let $\Tt$ be a transition system with (possibly infinite) set of states $C$, initial states $C_{\init}$ and  transition relation $\rightsquigarrow \subseteq C \times C$. 
 Let $\preceq$ be a well-quasi ordering on $C$.
We define the set of predecessors of a subset $U \subseteq C$ of states as ${\mathtt{Pre}}(U)=\{c \in C \mid \exists c' \in U.\; c \rightsquigarrow c'\}$.
For a state $c$, we denote the set $\mathtt{min}(\mathtt{Pre}(\upclos{\{c\}}) \cup \upclos{\{c\}})$ as $\mathtt{minpre}(c)$. 
$\Tt$  is called well-structured if $\rightsquigarrow$ is \emph{monotonic} w.r.t.\ $\preceq$ : that is, given  $c_1, c_2$ and $c_3$ in $C$, if $c_1 \rightsquigarrow c_2$ and $c_1 \preceq c_3$, then there exists a state $c_4$ s.t. $c_3 \stackrel{*}{\rightsquigarrow} c_4$ and $c_2 \preceq c_4$. 

Given a finite set of states $C_{\tar} \subseteq C$,
the \emph{coverability} problem asks if there is a state $c' \in  \upclos{C_{\tar}}$ reachable in $\Tt$.
The following conditions are sufficient for the decidability of this problem:
(i) for every two states $c_1, c_2 \in C$, it is decidable if $c_1 \preceq c_2$,
(ii) for every $c \in C$, we can check if $\upclos{\{c\}} \cap C_{\init} \neq \emptyset$, and
(iii) for each $c \in C$, the set $\tt{minpre}(c)$ is finite and computable.

The algorithm for checking WSTS coverability is based on a backward analysis.
The sequence $(U_i)_{i \geq 0}$ with
$U_0= \min(C_{\tar})$ and
$U_{i+1}=\min({\tt{Pre}}(\upclos{U_i}) \cup \upclos{U_i})$
reaches a fixpoint and is computable \cite{wsts2,wsts1}.

\smallskip

\noindent{\bf{$\lhc$ with bounded promises is a WSTS}.}
We will show that the K-$\lhc$ transition system 
is a well-structured transition system. 
Let $C$ denote the set of two-phases   K-$\lhc$ states of $\prog$. Given an instruction label function $J: \procset \rightarrow \mathbb{L}$, let $C_{\tar}$ be a finite subset of $C$   of the form  $(\nor, -, ((J,R), \chh), ((J',R'), \chh'))$ such that for every $x \in \varset$, we have: $(1)$  $\chh(x)$ and   $\chh'(x)$  do not contain any memory type of the form $(\prm,-,p,-)$/$(\prm,-,p,-,-)$, and $(2)$ $|\chh(x)|, |\chh'(x)|  \leq |\procset|$. We define the well-quasi ordering $\sqsubseteq$ on $C$ in a way that the upward closure of $C_{\tar}$ consists of all two-phases K-$\lhc$  states of the form $(\nor, -, ((J,R), \chh), ((J',R'), \chh'))$ such that for every $x \in \varset$,   $\chh(x)$ and   $\chh'(x)$  do not contain any memory type of the form $(\prm,-,p,-)$/$(\prm,-,p,-,-)$.
Then,  the coverability of $C_{\tar}$ 
 is equivalent to the reachability of $J$  in K-$\lhc$.

In the following, we define the well-quasi ordering $\sqsubseteq$ on on $C$ (Lemma \ref{lem:wqo}). Then, we show the monotonicity of the K-$\lhc$ transition relation $\rightarrow$ w.r.t.\  $\sqsubseteq$ (Lemma \ref{lem:mon}). Finally, we show how to compute the set of predecessors of a given two-phases $K$-{$\lhc$} state (Lemma \ref{computing-pre}).  Observe that the first and second sufficient conditions for the decidability of the coverability problem, namely comparing two states and checking whether an upward closure set contains the initial state, are trivial (the second condition can be reduced whether a minimal state is equal to the initial state).

The ordering $\sqsubseteq$ defined on mapping from locations to higher order words 
can be extended to two phases  K-$\lhc$ states by component wise extension: 
$(\pi, p, ((J_1,R_1), \chh_1), ((J_2,R_2), \chh_2)) \sqsubseteq (\pi', p', ((J'_1,R'_1), \chh'_1), ((J'_2,R'_2), \chh'_2))$  holds iff   
$\pi'=\pi$, $p'=p$, $(J_1,R_1)=(J'_1,R'_1)$, $(J_2,R_2)=(J'_2,R'_2)$, $\chh_1 \sqsubseteq \chh'_1$, and  $\chh_2 \sqsubseteq \chh'_2$. Since  the embedded ordering $\sqsubseteq$  is a wqo on higher order words when   the number of   promises is bounded \cite{higman}, we obtain the following lemma. 
 \begin{lemma}
The relation $\sqsubseteq$ 
is a well-quasi ordering on the two phases K-$\lhc$ states.
\label{lem:wqo}
\end{lemma}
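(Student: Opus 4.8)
The plan is to reduce the statement to two applications of Higman's lemma, together with a finite stratification that accounts for condition~(3) in the definition of $\sqsubseteq$. The starting observation is that, since the data domain $\mathsf{Val}$ and the process set $\procset$ are finite, both alphabets $\Sigma$ and $\Gamma$ of memory types are finite; hence $\Sigma \cup \Gamma$ is a finite set ordered by equality, which is trivially a wqo. By Higman's lemma \cite{higman}, the subword order $\sqsubseteq$ on $\Sigma^*$ (the bodies of simple words) is a wqo, using equality on the finite alphabet $\Sigma$ as base wqo.

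Next I would view a higher order word $\ch = u_1 \# a_1 \cdots u_k \# a_k$ as a finite word over the alphabet $B := \Sigma^* \times (\Sigma \cup \Gamma)$ of blocks, whose $i$-th letter is the pair $(u_i, a_i)$. I equip $B$ with the product order $(u,a) \preceq (u',a')$ iff $u \sqsubseteq u'$ and $a = a'$; this is a wqo, being the product of the wqo on $\Sigma^*$ with the equality wqo on the finite set $\Sigma \cup \Gamma$. A second application of Higman's lemma then yields that the induced embedding order $\preceq^*$ on $B^*$ is a wqo. By construction, $\ch \preceq^* \ch'$ captures exactly conditions~(1) and~(2) of $\sqsubseteq$, i.e.\ $\sqsubseteq$ without the promise-count requirement~(3).

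The key remaining point---and the place where the bound $K$ is essential---is condition~(3), which demands that $\ch$ and $\ch'$ carry exactly the same number of promise memory types (those of the form $(\prm,-,-,-)$ or $(\prm,-,-,-,-)$). I would handle this by stratifying: for $0 \le j \le K$, let $C_j$ be the set of higher order words with exactly $j$ promises. Since every manipulated higher order word has at most $K$ promises, $C = \bigcup_{j=0}^{K} C_j$ is a finite partition, and $\sqsubseteq$ never relates words from different strata (condition~(3) forces equal counts). I then observe that, restricted to a single stratum $C_j$, the relation $\sqsubseteq$ coincides with $\preceq^*$: clearly ${\sqsubseteq} \subseteq {\preceq^*}$, and conversely if $\ch,\ch' \in C_j$ with $\ch \preceq^* \ch'$, then the embedding carries each promise symbol of $\ch$---whether it sits inside some $u_i$ (via the subword relation, which preserves letters) or is some $a_i = b_{f(i)}$---to a distinct promise symbol of $\ch'$, so the promise count of $\ch$ is at most that of $\ch'$; equality of the totals (both $=j$) forces every promise of $\ch'$ to be hit, which is precisely condition~(3). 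Hence $\sqsubseteq$ restricted to $C_j$ is a restriction of the wqo $\preceq^*$ and is therefore a wqo, and a finite disjoint union of wqo's is a wqo; thus $\sqsubseteq$ is a wqo on higher order words with bounded promises.

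Finally I would lift this to two-phases K-$\lhc$ states. The relation $\sqsubseteq$ on mappings $\chh : \varset \to (\text{higher order words})$ is the product, over the finite location set $\varset$, of the wqo just established, hence a wqo (a finite product of wqo's is a wqo). A two-phases state $(\pi, p, ((J_1,R_1),\chh_1), ((J_2,R_2),\chh_2))$ adds only the finitely-valued components $\pi \in \{\nor,\cert\}$, $p \in \procset$, and the local states $(J_i,R_i)$ with $J_i : \procset \to \mathbb{L}$ and $R_i : \regset \to \mathsf{Val}$ ranging over finite sets; the componentwise order requires equality on all these discrete components and $\sqsubseteq$ on $\chh_1,\chh_2$. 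This is again a finite product of wqo's, so $\sqsubseteq$ is a wqo on the two-phases K-$\lhc$ states. I expect the stratification argument for condition~(3) to be the only subtle step, the two Higman applications and the finite-product closure being routine.
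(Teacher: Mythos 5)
Your proof is correct and follows essentially the same route as the paper's: the paper likewise obtains the lemma by invoking Higman's lemma for the embedding order on higher order words together with the bound on the number of promises, and then lifts the order componentwise (with equality on the finite control components) to two-phases K-$\lhc$ states. Your stratification by promise count and the two explicit applications of Higman's lemma merely spell out the details the paper compresses into the phrase ``when the number of promises is bounded.''
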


Consider now a   two-phases K-$\lhc$  state $\Ss$ of the form $(\nor, -, ((J,R), \chh), ((J',R'), \chh'))$ such that for every $x \in \varset$,   $\chh(x)$ and   $\chh'(x)$  do not contain any memory type of the form $(\prm,-,p,-)$/$(\prm,-,p,-,-)$, then it is easy to see that $\Ss  \in \upclos{C_{\tar}}$.
This implies that:

\begin{lemma}
The coverability of $C_{\tar}$ is equivalent
to the reachability of   $J$ in K-$\lhc$.
\label{lem:cover-equivalence}
\end{lemma}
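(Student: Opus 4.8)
The plan is to unfold both notions into statements about which states get reached and then to reduce the lemma to a single set identity. By definition, $J$ is reachable in K-$\lhc$ exactly when some reachable two-phases state has the shape $(\nor, -, ((J,R),\chh), ((J',R'),\chh'))$ in which, for every $x \in \varset$, neither $\chh(x)$ nor $\chh'(x)$ contains a memory type $(\prm,-,-,-)$ or $(\prm,-,-,-,-)$; call the collection of all such states $\mathsf{Goal}$. The coverability of $C_{\tar}$ asks precisely whether some reachable state lies in $\upclos{C_{\tar}}$. Since both questions refer to the same two-phases K-$\lhc$ transition system and the same initial states, it suffices to prove $\upclos{C_{\tar}} = \mathsf{Goal}$; the equivalence of the two reachability/coverability statements is then immediate.

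For the inclusion $\upclos{C_{\tar}} \subseteq \mathsf{Goal}$, I would take any $\Ss \in \upclos{C_{\tar}}$, so that $c \sqsubseteq \Ss$ for some $c \in C_{\tar}$. By the component-wise definition of $\sqsubseteq$ on two-phases states, $\Ss$ inherits from $c$ the flag $\nor$ and the instruction-label component $J$, and for every location $x$ the two higher-order words of $c$ are subwords of those of $\Ss$. The crucial point is clause $(3)$ in the definition of $\sqsubseteq$ on higher-order words, which forces $\Ss$ to carry exactly as many $\prm$-typed memory types as $c$; since every element of $C_{\tar}$ is promise-free, so is $\Ss$, and hence $\Ss \in \mathsf{Goal}$.

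The reverse inclusion $\mathsf{Goal} \subseteq \upclos{C_{\tar}}$ is the observation preceding the lemma made precise, and it is where the only real work lies. Given $\Ss \in \mathsf{Goal}$, I would produce a witness $c \in C_{\tar}$ with $c \sqsubseteq \Ss$ by repeatedly discarding from each $\chh(x)$ and each $\chh'(x)$ the losable material identified earlier, namely the empty memory types that are not the rightmost symbol of their simple word together with the redundant simple words, until a $\sqsubseteq$-minimal well-formed higher-order word $w_x$ is reached. This preserves the last symbol of every retained simple word and the (zero) count of promises, so $w_x \sqsubseteq \chh(x)$ holds in the formal sense and $w_x$ is again promise-free. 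The length bound then follows from well-formedness: every symbol surviving in $w_x$ is either a non-empty memory type or the mandatory rightmost symbol of a retained simple word, and because each process occupies the pointer set of exactly one position, the number of non-empty memory types, and hence the number of retained simple words, is at most $|\procset|$, giving $|w_x| \le |\procset|$ as required by condition $(2)$ defining $C_{\tar}$. Assembling the $w_x$ (and $w'_x$) yields the desired $c \in C_{\tar}$ below $\Ss$, so $\Ss \in \upclos{C_{\tar}}$.

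I expect the main obstacle to be exactly this last estimate: one must check that cutting down to a $\sqsubseteq$-minimal well-formed word never strands a pointer, so that well-formedness is maintained, and that the retained rightmost-separator symbols do not inflate the length beyond the bound of condition $(2)$. Once $\upclos{C_{\tar}} = \mathsf{Goal}$ is established, the coverability of $C_{\tar}$ and the reachability of $J$ in K-$\lhc$ are literally the same question about the same system, and the lemma follows.
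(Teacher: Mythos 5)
Your overall route coincides with the paper's: both reduce the lemma to the set identity $\upclos{C_{\tar}} = \mathsf{Goal}$, where the paper disposes of the nontrivial inclusion $\mathsf{Goal} \subseteq \upclos{C_{\tar}}$ in one sentence (``it is easy to see that $\Ss \in \upclos{C_{\tar}}$'') and you attempt to prove it by erasing losable material down to a minimal well-formed witness. Your easy inclusion is correct: clause $(3)$ of $\sqsubseteq$ (equal promise counts) together with the required equality of the flag and label components forces everything above $C_{\tar}$ to be a promise-free $\nor$-state whose first component carries $J$.

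The gap is in your final counting step, and it is exactly the step you yourself flagged as ``the main obstacle''. The two categories of surviving symbols --- non-empty memory types and mandatory rightmost symbols --- are in general \emph{disjoint}, and both must be kept, along with the $\#$ separators. Concretely, if a simple word $v \# b$ of $\chh(x)$ carries a process pointer inside $v$ while its rightmost symbol $b$ is an empty memory type, then any well-formed $c \sqsubseteq \Ss$ must keep the pointer position (well-formedness plus the fact that the embedding maps symbols to equal symbols) \emph{and} must keep $b$, because clause $(2)$ of the definition of $\sqsubseteq$ demands that the rightmost symbol of every retained simple word equal the rightmost symbol of its image. Such configurations are reachable: let each process in turn write $x$ as a new simple word, let another process perform an RMW on that message, and then let both pointers move on; the resulting word has, per process, one pointer-holding body symbol, one empty rightmost symbol, and one $\#$. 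The minimal well-formed promise-free word below it thus has up to roughly $3(|\procset|{+}1)$ positions (counting the dummy process $g$), not $|\procset|$, so your witness $w_x$ need not satisfy condition $(2)$ in the paper's definition of $C_{\tar}$, and the identity $\upclos{C_{\tar}} = \mathsf{Goal}$ fails for the literal $C_{\tar}$. The repair is to enlarge the length bound in condition $(2)$ of $C_{\tar}$ to any constant at least the maximal size of a minimal well-formed promise-free state (e.g.\ $3(|\procset|{+}1)$); with that change your construction and the rest of your argument go through verbatim. To be fair, the paper's ``easy to see'' glosses over precisely this point, so you located the soft spot correctly --- but the estimate you offer does not close it.
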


\noindent{\bf{Monotonicity}}. 
The following lemma shows the monotonicity of the K-$\lhc$ transition relation $\rightarrow$ w.r.t.\ $\sqsubseteq$.
This allows the backward algorithm for coverability to work with only upward closed sets,
since the set of predecessors of an upward closed set is also upward closed \cite{wsts2,wsts1}.

\begin{lemma}
The transition relation $\rightarrow$ is monotonic w.r.t.\ $\sqsubseteq$.
\label{lem:mon}
\end{lemma}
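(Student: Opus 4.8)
The plan is to verify the upward-compatibility condition spelled out in the WSTS preliminaries directly: given $\Ss_1 \rightarrow \Ss_2$ and $\Ss_1 \sqsubseteq \Ss_3$, I must exhibit $\Ss_4$ with $\Ss_3 \rightarrow^{*} \Ss_4$ and $\Ss_2 \sqsubseteq \Ss_4$. First I would unfold $\Ss_1 \sqsubseteq \Ss_3$ using its componentwise definition: it forces $\Ss_1$ and $\Ss_3$ to share the phase flag $\pi$, the active process $p$, and both pairs of local states $(J,R)$, and to satisfy $\chh_1 \sqsubseteq \chh_3$ (and likewise on the non-evolving component). Thus $\Ss_3$ differs from $\Ss_1$ only by extra empty memory types and redundant simple words inside the higher-order words, whereas every promise symbol $(\prm,-,-,-)/(\prm,-,-,-,-)$, every non-empty memory type (hence every pointer $\ptr(p,\ch_x)$), and every rightmost-of-simple-word symbol $a_i=b_{f(i)}$ occurs identically in both words. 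Isolating this structural invariant — that the witnessing strictly increasing map $f$ of $\sqsubseteq$ preserves pointers, values, the $\#$-delimited last symbols, and the $\Sigma$-versus-$\Gamma$ status of those last symbols — is the fact I would reuse throughout.

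Next I would do a case analysis on the derivation of $\Ss_1 \rightarrow \Ss_2$: the four two-phase cases (standard step, certification step, standard-to-certification, certification-to-standard), and within the first two the individual $\lhc$-state rules of Figure~\ref{ps-program_sem}. For a genuine process rule (read, standard or certification write, update, promise, make/cancel a reservation, splitting a promise or reservation, fulfilment, SC-fence), I apply the preservation invariant: the position $i$ used in $\Ss_1$ — a read position $\geq \ptr(p,\ch_x)$, an insertion slot just after a $\#$, a promise/reservation symbol, or a reserved tag — has an image $f(i)$ in $\chh_3$ meeting the very same side condition, since $f$ preserves pointers, values and the last-symbol status. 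Firing the identical rule at that image in $\Ss_3$ produces $\Ss_4$, and re-establishing $\chh_2 \sqsubseteq \chh_4$ reduces to checking that the inserted or rewritten symbol agrees on both sides and that the relocated pointer of $p$ lands at corresponding positions, so that the witness $f$ extends. The phase-switching rules are governed purely by conditions on the promises of $p$ and on the reservation tags of rightmost symbols, all of which are identical in $\chh_1$ and $\chh_3$; hence $\Ss_3$ is enabled exactly when $\Ss_1$ is, and the capping/appending step commutes with the embedding. The remaining rule, the loss rule, relates two $\lhc$-states with identical local component whose higher-order words are $\sqsubseteq$-comparable; such a step is subsumed by the ordering, so it composes with $\Ss_1 \sqsubseteq \Ss_3$ by transitivity of $\sqsubseteq$ and needs no genuine move of $\Ss_3$.

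The part I expect to be the real obstacle is the family of insertion operators $\stackrel{N}{\hookleftarrow}, \stackrel{E}{\hookleftarrow}, \stackrel{SP}{\hookleftarrow}, \stackrel{FP}{\hookleftarrow}, \stackrel{SR}{\hookleftarrow}$ together with $\mathit{Make}$ and $\mathit{Cancel}$: each one rewrites the local shape of a simple word — splitting a promise or a reservation, relocating the separator $\#$, or appending at the extreme right — so after mimicking it on the larger word $\chh_3$ I must re-prove that all three clauses defining $\sqsubseteq$ on higher-order words still hold, in particular that the promise count is unchanged (clause~(3)) and that the freshly created rightmost symbols match across the two runs (clause~(2)). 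The certification-phase restriction, that a write may be placed only at the extreme right of $\ch_x$ or into a slot opened by splitting a reservation, adds a further wrinkle: I must argue that the slot used in $\Ss_1$ is still the corresponding extreme-right or reserved slot in the longer word $\chh_3$. This follows once more from the fact that losses never remove rightmost symbols or reservation tags, but it is the point demanding the most careful bookkeeping, and is where I would spend the bulk of the formal argument.
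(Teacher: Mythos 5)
Your overall architecture coincides with the paper's: a case analysis over the transition rules in which the rule fired by $\Ss_1$ is re-fired in $\Ss_3$ at the position given by the embedding witness $f$, followed by an explicit extension of $f$ to a witness for $\Ss_2 \sqsubseteq \Ss_4$ (the paper carries this out in detail for reads and for the two insertion shapes of a standard write, and dismisses update, promise, fulfilment, reservation, cancellation and SC-fence as analogous or trivial). However, there is a genuine flaw at exactly the point you single out as the crux. You claim that the slot used in $\Ss_1$ ``is still the corresponding extreme-right or reserved slot in the longer word'' $\chh_3(x)$, and that for the phase switch the reservation tags of the \emph{rightmost} symbols are identical in $\chh_1(x)$ and $\chh_3(x)$. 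Neither holds in general: the definition of $\sqsubseteq$ only requires a strictly increasing $f$ with $u_i \sqsubseteq v_{f(i)}$ and $a_i = b_{f(i)}$; nothing forces $f$ to map the last simple word of $\chh_1(x)$ to the last simple word of $\chh_3(x)$, so $\chh_3(x)$ may end with additional redundant simple words lying outside the range of $f$. For such an $\Ss_3$, the certification write/update rules (which demand $i = |{\chh}(x)|$) cannot be fired at the image position $f(i)$; and firing them at the true last position of $\chh_3(x)$ either is not enabled (that last symbol may carry a reservation tag) or produces a word into which $\chh_2(x)$ need not embed, since the last simple word of $\chh_2(x)$ would have to embed into a redundant simple word. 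Likewise, the standard-to-certification step inspects the last symbol of ${\chh}(x)$ for a tag of the form $(-,v,-,-,q)$, and the last symbols of $\chh_1(x)$ and $\chh_3(x)$ can differ, so the capping step need not ``commute with the embedding'' as you assert. Your justification --- that losses never remove rightmost symbols or reservation tags --- addresses the wrong relation: it is $\Ss_3$, not $\Ss_1$, that carries the extra material, so no preservation property of losses applied to $\Ss_1$ can close this gap.

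The repair is available inside your own setup, and it is precisely why monotonicity is stated with $\Ss_3 \rightarrow^{*} \Ss_4$ rather than with a single step: before re-firing the rule, apply loss transitions to $\Ss_3$ to delete the trailing redundant simple words of $\chh_3(x)$. These words contain no promise symbols (the promise counts of $\chh_1(x)$ and $\chh_3(x)$ are equal, and every promise of $\chh_1(x)$ maps to an equal symbol under $f$) and no pointers (by well-formedness), so they are losable; the resulting state $\Ss_3'$ still satisfies $\Ss_1 \sqsubseteq \Ss_3'$ via the same $f$, which now maps last position to last position and last symbol to last symbol. After this normalization, the certification rules and the capping/append step are enabled in $\Ss_3'$ exactly when they are in $\Ss_1$, and your extension-of-$f$ argument goes through. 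Note that the paper's own write-up sidesteps this subtlety by detailing only the $\nor$-phase cases and declaring the $\cert$ case ``similar'', so addressing it head-on is a genuine improvement --- but as stated, your resolution of the crux does not work.
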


\noindent
{\bf Computing the set of predecessors.}
The last sufficient condition for the decidability of the coverability problem in $K$-$\lhc$ is stated by the following lemma
\begin{lemma}
\label{computing-pre}
For each two-phases K-$\lhc$ state $c$, the set $\tt{minpre}(c)$ is effectively computable.
\end{lemma}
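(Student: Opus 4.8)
The plan is to compute $\mathtt{minpre}(c)=\mini\bigl(\mathtt{Pre}(\upclos{\{c\}})\cup\upclos{\{c\}}\bigr)$ by treating the two parts separately. The minor of $\upclos{\{c\}}$ is simply $\{c\}$, so the real work is to produce a finite set that generates (under upward closure) the predecessor set $\mathtt{Pre}(\upclos{\{c\}})$. By Lemma~\ref{lem:mon} this set is upward closed, and by Lemma~\ref{lem:wqo} it therefore has a finite minor; what remains is \emph{effectiveness}. Since $\rightarrow$ is the union of finitely many rules --- the process-level rules of Figure~\ref{ps-program_sem}, the loss rule (case~$(2)$ of the induced relation between $\lhc$ states), and the four two-phases rules --- and since $\mathtt{Pre}(\upclos{\{c\}})=\bigcup_\rho \mathtt{Pre}_\rho(\upclos{\{c\}})$ ranges over those finitely many rules $\rho$, it suffices to compute a finite generator of each $\mathtt{Pre}_\rho(\upclos{\{c\}})$ and then reduce the union of all these generators together with $\{c\}$ to its minor. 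This last reduction only uses the decidable test for $\sqsubseteq$, which is available since $\sqsubseteq$ is a wqo (Lemma~\ref{lem:wqo}).

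The key structural observation making each $\mathtt{Pre}_\rho$ effective is that every rule rewrites the higher-order word $\chh(x)$ of a single location by a \emph{local} operation: it inserts, deletes, or relabels a bounded window of at most two memory types together with possibly one separator $\#$, at a position pinned relative to $\ptr(\proc,\chh(x))$, and it adds or removes a single process from one pointer set. The SC-fence and phase-copy rules touch all locations, but they do so by the same kind of bounded local edit in each word, and $\varset$ is fixed. Consequently the inverse of each rule is again a finite family of local edits. For each rule $\rho$ and each way its forward footprint can be aligned with the fixed finite object $c$ (finitely many alignments, because the window has constant size), I would construct the corresponding pre-pattern by undoing the edit: for the promise and write rules I delete the inserted $(\msgg,\dots)$ or $(\prm,\dots)$ symbol and restore $\proc$ to a pointer position compatible with the side condition ($i>\ptr$, $i\geq\ptr$, or $i=|\chh(x)|$); for the fulfilment rules I turn a $\msgg$ symbol back into a $\prm$ symbol; and for making, cancelling, or splitting a reservation I untag or re-merge the fifth ($\Gamma$) component. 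Because the number of promise symbols is bounded by $K$, the number of pointer-carrying (non-empty) memory types is bounded by $|\procset|$, and $|\varset|$ is fixed, every pre-pattern so produced has size bounded by a computable function of $K$, $|\procset|$ and the program, so only finitely many arise.

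I expect the main obstacle to be the interaction with the loss rule and with the standard-to-certification copy rule, where minimality must be handled carefully. For the loss rule I would exploit that $\mathtt{Pre}(\upclos{\{c\}})$ is already upward closed (Lemma~\ref{lem:mon}): a loss step only moves within the $\sqsubseteq$-order, so it contributes no new minimal predecessor strictly below $\upclos{\{c\}}$ and can be absorbed into the upward closure of the generators coming from the genuine rules. The more delicate inversion is the copy rule, whose forward direction appends a cap message $(\msgg,v,q,\{\})$ to $\chh(x)$ exactly when the rightmost symbol is a $\Gamma$-type $(-,v,-,-,q)$ with $q\neq\proc$, and leaves $\chh(x)$ unchanged otherwise; reversing it requires recovering $\lst_{\nor}$ from $\lst_{\cert}$ while guessing, per location, whether such a cap message was appended, and verifying the reservation-tag side condition. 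Here I must reinsert \emph{only} the memory types that the applicability conditions force --- the required $\#$ separators, the $\Sigma$-symbol that must precede $\#$ for an end-of-word insertion to have been enabled, and the reserved $\Gamma$-slot consumed by a certification write via the splitting-a-reservation rule --- so that the generated predecessors remain minimal and the procedure does not emit spuriously large states. Enumerating these finitely many cases for every rule, taking the union of the resulting generators with $\{c\}$, and reducing to the minor yields $\mathtt{minpre}(c)$, establishing the lemma.
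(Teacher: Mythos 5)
Your proposal is correct in outline but takes a genuinely different route from the paper. The paper proves Lemma~\ref{computing-pre} symbolically: it encodes two-phases K-$\lhc$ states as words over a finite alphabet, shows that the upward closure of a state is a regular language (Lemma~\ref{reg-up}), constructs a rational transducer realizing the one-step relation rule by rule --- with the standard-to-certification copy handled by composing transducers that copy one symbol at a time (Lemma~\ref{trans-qs}) --- obtains $\mathtt{Pre}(\upclos{\{c\}})$ as the inverse image of a regular set under that transducer (Lemma~\ref{product-trans}), and finally extracts the minor of the resulting regular set by a pumping argument on the automaton, using the well-formedness invariant and the promise bound to bound minimal accepted words (Lemma~\ref{min-reg}). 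You instead invert each rule directly, exploiting that every rule is a bounded-window edit anchored at a process pointer, enumerating the finitely many alignments of that window with $c$ and adding back only the symbols forced by applicability conditions, then reducing to the minor via the (decidable) order. Your route is more elementary --- no word encoding, automata, or transducer machinery --- but it shifts the burden onto a per-rule completeness claim, namely that your finitely many pre-patterns really generate all of $\mathtt{Pre}_\rho(\upclos{\{c\}})$ up to upward closure; you justify this by the local-edit structure but do not discharge it, and the copy and SC-fence rules, which touch every location, are precisely where this bookkeeping is heaviest and where the paper's transducer composition is painless. Two small repairs: decidability of $\sqsubseteq$ on concrete states does not follow from its being a wqo (Lemma~\ref{lem:wqo}) --- it is a separate, though trivial, check; and upward-closedness of $\mathtt{Pre}(\upclos{\{c\}})$ requires the step-wise (strong) form of monotonicity, which the paper's proof of Lemma~\ref{lem:mon} in fact establishes even though the lemma is stated with $\rightsquigarrow^*$. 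Neither issue undermines your plan, since existence of a finite minor needs only the wqo and your loss-rule absorption argument stands on its own.
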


Next, we state  that  the reachability problem for K-$\lhc$ (even for $K=0$) is highly non-trivial (i.e., non-primitive recursive). The proof is done by reduction from the reachability problem for lossy channel systems, in a similar to the case of TSO \cite{ABBM10} where we insert $\fence$ instructions everywhere in the process that simulates the lossy channel process (in order to ensure that no promises can be made by that process). 
 
The proof is done by reduction from the reachability problem for lossy channel systems (LCS). We construct a 
concurrent program with 2 processes, the first process $p_1$ keeps track of the finite state control of the LCS, while the 
second process $p_2$ simulates the lossy channel. Two shared variables $x_c, y_c$ are used to simulate the lossy channel $c$. 
$p_1$ writes to $x_c$ on each transition that writes to $c$ in the LCS. $p_2$ reads from $x_c$ and writes to $y_c$. 
A read from the channel $c$ in the LCS is simulated by $p_1$ reading from $y_c$, thereby simulating the lossiness of $c$ 
($p_2$ can skip some messages of $x_c$, and $p_1$ can also skip some messages of $y_c$). Every two instructions 
of $p_1, p_2$ have a $\fence$ to ensure no promises can be made (and fulfilled). 

\begin{theorem}
The reachability problem for K-$\lhc$  is non-primitive recursive.
\label{thm:npr}
\end{theorem}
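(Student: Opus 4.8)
The plan is to establish the lower bound by a reduction from the control-state reachability problem for \emph{lossy channel systems} (LCS), which is known to be decidable but non-primitive recursive. Concretely, given an LCS $\mathcal{L}$ with a single lossy FIFO channel $c$ (one channel already suffices for the hardness result) and a target control state $q_f$, I would build a concurrent program $\prog$ and an instruction-label function $J$ such that $q_f$ is reachable in $\mathcal{L}$ iff $J$ is reachable in $\prog$ under K-$\lhc$. Since the construction will never require a process to make a promise, the very same program works for every fixed bound $K \geq 0$, which is why the hardness already holds for $K=0$; by Theorem \ref{thm:eqv} one may equivalently read the argument in $\psr$.

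For the construction I would use two processes and, for the channel $c$, two shared locations $x_c$ and $y_c$. Process $p_1$ carries the finite control of $\mathcal{L}$: its instruction labels are essentially the control states of $\mathcal{L}$, and each LCS edge becomes a small code fragment. A send $c!a$ is simulated by $p_1$ writing the symbol $a$ to $x_c$. The second process $p_2$ runs a loop whose sole job is to forward the channel contents: it repeatedly reads a symbol from $x_c$ and writes that same symbol to $y_c$. A receive $c?a$ is then simulated by $p_1$ reading the value $a$ from $y_c$. The crucial point is that in $\lhc$ a reading process only ever advances its pointer forward along the higher-order word, so it observes the written symbols in their original (timestamp) order but may skip over some of them; these skips, together with the loss relation $\sqsubseteq$ on higher-order words, are exactly the lossiness of the channel. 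To forbid the reorderings that relaxed accesses would otherwise allow, I would insert an $\fence$ instruction between every two consecutive instructions of both processes; as in the TSO reduction of \cite{ABBM10}, this guarantees that neither process can make (and later fulfil) a promise, so that the writes to $x_c$ and $y_c$ are appended strictly in program order.

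Correctness has the two usual directions. For completeness, an accepting run of $\mathcal{L}$ is replayed step by step: each $c!a$ becomes the corresponding write, each $c?a$ becomes a read that advances $p_1$'s pointer to the matching symbol on $y_c$ (skipping any intervening symbols), and every channel loss is mirrored either by $p_2$ skipping a symbol of $x_c$, by $p_1$ skipping a symbol of $y_c$, or by applying the loss relation. For soundness, from a run of $\prog$ reaching $J$ one reads off a run of $\mathcal{L}$: because pointers only move forward, the sequence of symbols that $p_1$ receives from $y_c$ is a scattered subword of the sequence $p_2$ wrote to $y_c$, which in turn is a subword of the sequence $p_1$ sent on $x_c$; hence the received word is a lossy image of the sent word, precisely an LCS channel behaviour, and the control moves of $p_1$ spell out a valid run of $\mathcal{L}$ ending in $q_f$.

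I expect the main obstacle to be not the combinatorics of the FIFO encoding but the argument that the additional machinery of $\ps$/$\lhc$ grants no power beyond a lossy channel. One must check that promises cannot be exploited, which the pervasive $\fence$ instructions rule out, and that reservations and the capped-memory certification phase are likewise harmless: a reservation commits to no value and only tags a slot, so it can never inject a symbol into the channel, and with no outstanding promises the certification phase is vacuous. Once these features are shown inert, the correspondence between forward-only pointer movement and lossy FIFO reception closes the reduction, and the non-primitive recursiveness of K-$\lhc$ reachability (for every $K$, in particular $K=0$) follows from that of LCS reachability.
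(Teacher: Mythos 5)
Your proposal is correct and takes essentially the same route as the paper's own proof: the identical reduction from lossy channel system reachability with two processes ($p_1$ holding the finite control, $p_2$ forwarding symbols from $x_c$ to $y_c$), lossiness captured by processes skipping messages, and $\fence$ instructions inserted between consecutive instructions (as in the TSO reduction of \cite{ABBM10}) to rule out promises. Your added remarks on soundness/completeness and on the inertness of reservations and certification only make explicit what the paper leaves implicit.
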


\section{Source to Source Translation}
\label{sec:c2c}
We consider a 
parametric under-approximation in the spirit of context bounding \cite{cb2}, \cite{DBLP:conf/cav/TorreMP09}, \cite{DBLP:journals/fmsd/LalR09}, \cite{demsky}, \cite{MQ07}, \cite{DBLP:conf/tacas/QadeerR05}, \cite{pldi2019}, \cite{cb3}. The bounding concept chosen for concurrent programs depends on aspects related to the 
interactions between the processes. In the case of SC programs, context bounding has been shown experimentally to have extensive behaviour coverage for bug detection \cite{MQ07}, \cite{DBLP:conf/tacas/QadeerR05}. 
A context in the SC setting is a computation segment where only one process is active. The concept of context bounding has been extended for  weak memory models. For instance, in TSO, the notion of  context is extended to one where all updates to the main memory are done only from the buffer of  the active thread \cite{cb2}. 
 In the case of POWER \cite{cb3},  context  was extended to consider propagation actions performed by the active process. In the case of $\ps$-$\ra$ without promises and reservations \cite{pldi2019}, context bounding was extended to view bounding, using the notion of view switching messages. The notion of bounding appropriate for a model depends on its underlying  complexity. From a theoretical point of view, we have already seen that $\ps$ is very complex, and bounding contexts  is not sufficient. Our bounding notion for $\ps$ is based on its various features which includes relaxed as well as RA memory accesses, promises and certification. Since $\ps$ subsumes RA, we recall 
 the bounding notion used in RA first, using \emph{view altering} messages.  

 \noindent{\textit{View Altering Reads}}. A read from the memory
is view altering if it changes the view of the process reading it.The message which is reads from in turn is called a view altering message. The under approximate analysis 
  for RA \cite{pldi2019} considered view bounded runs, where the number of view altering reads is bounded. 

\noindent{\textit{Essential Events}}. An essential event in a run $\rho$ of a concurrent program  
under $\ps$ is either a promise, a reservation or a
 view altering read by some process in the run.

\noindent{\textit{Bounded Context}}. A context is an uninterrupted sequence of actions by a single process. In a run having $K$ contexts, 
the execution switches from one process to another $K-1$ times. A $K$ bounded context run is one where the number of context switches are bounded by $K \in \mathbb{N}$. The $K$ bounded context reachability problem in SC checks for the existence 
of a $K$ bounded context run reaching some chosen instruction. A SC program is called a $K$ bounded context program if all 
its runs are $K$ bounded context.  
Now we define the notion of bounding for $\ps$. 
    
    \noindent{\bf {The Bounded Consistent Reachability Problem}}. 
Consider a run $\rho$ of a concurrent program under $\ps$,   $\MS_{0} \rtstep{\xrightarrow[p_{i_1}]{}} 
   \MS_{1} \rtstep{\xrightarrow[p_{i_2}]{}} 
   \MS_{2} \rtstep{\xrightarrow[p_{i_3}]{}} 
   \ldots  
   \rtstep{\xrightarrow[p_{i_n}]{}}
   \MS_{n}$.  
       A run $\rho$ of a concurrent program $\prog$ under $\ps$ is called \emph{$K$ bounded} iff the 
   number of essential events in $\rho$ is $\leq K$. 
      The $K$ bounded reachability problem for $\ps$ checks for the existence 
      of a run $\rho$ of $\prog$ which 
   is $K$-bounded. Assuming $\prog$ has $n$ processes,  
 we  propose an algorithm that reduces the $K$ bounded reachability problem 
 to a $K+n$ bounded context reachability problem under SC.

\noindent{\bf{Translation Overview}}. 
Let $\prog$ be a concurrent program under $\ps$  with set of processes $\procset$ and locations $\varset$. 
Our algorithm relies on a source to source translation of $\prog$ to a bounded context SC program $\sem{\prog}$, as shown in Figure \ref{transl} and operates on the same data domain.   The translation 
 adds a new process (\textsc{Main}) that initializes the global variables of $\sem{\prog }$. 
 The translation of a process $\proc\in\procset$ adds local variables, which are initialized by the function $\textsc{InitProc}$. 

\begin{figure}
\small
\tikzset{background rectangle/.style={fill=black!5,rounded corners,draw=black}}
 \resizebox{4\textwidth/5}{!}{
 \begin{tikzpicture}[codeblock/.style={line width=0.5pt, inner xsep=0pt, inner ysep=0pt}, show background rectangle]
\node[codeblock] (init) at (current bounding box.north west) {
{
$\arraycolsep=0.7pt\def\arraystretch{1.4}
\begin{array}{rl}
\sem{Prog} &\coloneqq (\langle \text{global vars}\rangle; \langle \textsc{Main} \rangle ; (\sem{\texttt{proc } p \texttt{ reg } \reg^* i^*})^* \\
\sem{ \texttt{proc } p \texttt{ reg } \reg^*\; i^* } &\coloneqq \texttt{proc } p \texttt{ reg } \reg^*
 \langle \text{local vars} \rangle \langle\textsc{InitProc}\rangle
 \langle\textsc{CSO}\rangle^{p, \lambda_0} (\sem{ i}^p)^* \\
\sem{ \lambda\space:\space i }^p &\coloneqq
 \lambda\space: \langle\textsc{CSI}\rangle; \sem{s}^p; \langle\textsc{CSO}\rangle^{p, \lambda} \\
\sem{ \kwif\ \mathit{exp} \ \kwthen\ i^* \ \kwelse\ i^* }^p &\coloneqq \kwif\ \mathit{exp} \ \kwthen\ (\sem{ i}^p)^* \ \kwelse (\sem{ i}^p)^*\\
\sem{ \kwwhile\ \mathit{exp} \ \kwdo\ i^* }^p &\coloneqq \kwwhile\ \mathit{exp}\ \kwdo\ (\sem{ i}^p)^* \\
\sem{ \assume(\mathit{exp})}^p &\coloneqq \assume(\mathit{exp}) \\
\sem{ \reg = \mathit{exp}}^p &\coloneqq \reg = \mathit{exp} \\
\sem{ x = \reg }^{p}_{o \in \{\rlx, \ra\}}  &\coloneqq \text{ see write Pseudocode }\\
\sem{ \reg = x }^{p}_{o \in \{\rlx, \ra\}} &\coloneqq \text{ see read Pseudocode }  \\
\end{array}$
}
};
\end{tikzpicture}
}
\caption{Source-to-source translation map}
\label{transl}
\end{figure}

 This is followed by the code block 
$\langle CSO \rangle^{p, \lambda_0}$ (Context Switch Out) 
that optionally enables the process to switch out of context. For each instruction $i$ appearing in the code of $p$, the map $\sem{i}^p$ 
transforms it into a sequence of instructions as follows : the code block 
$\langle CSI \rangle$ (Context Switch In) checks if the process is active in the current context; then it transforms each statement $s$ of instruction $i$ 
into a sequence of instructions following the map $\sem{s}^p$, and finally executes the code block $\langle CSO \rangle^{p, \lambda}$. $\langle CSO \rangle^{p, \lambda}$ facilitates two things: when the process is at an instruction label $\lambda$, 
(1) 
 allows $p$ to make promises/reservations after $\lambda$, s.t. the control is back at $\lambda$ after certification;
 (2) it ensures that the machine state is consistent when $p$ switches out of context.
Translation of $\assume$, $\kwif$ and $\kwwhile$ statements 
keep the same statement. Translation of read and write statements are described later. Translation of RMW statements are omitted for ease of presentation.

\tikzstyle{startstop} = [rectangle, rounded corners, minimum width=1cm, minimum height=0.5cm,text centered, draw=black, fill=blue!30]
\tikzstyle{processN} = [rectangle, minimum width=1cm, minimum height=0.5cm, text centered, draw=black, fill=violet!30]
\tikzstyle{processCC} = [rectangle, minimum width=1cm, minimum height=0.3cm, text centered, draw=black, fill=orange!30]
\tikzstyle{goal} = [rectangle, minimum width=1cm, minimum height=0.3cm, text centered, draw=black, fill=red!30, text width=2cm]
\tikzstyle{arrow} = [thick,->,>=stealth]
\tikzset{
   block filldraw/.style={
       dotted, thick, fill=blue!20,fill opacity=0.25,  draw=blue}
}
\begin{figure}[h]
\centering
\begin{tikzpicture}[node distance=2cm]
\node (start) [startstop] {init};
\node (p1n) [processN, right=0.3cm of start] {$p_1$ \texttt{n}};
\node (p1cc) [processCC, right of=p1n] {$p_1$ \texttt{cc}};
\node (empty) [right of=p1cc] {$\cdots$};
\node (pjn) [processN, right=1cm of empty] {$p_{j-1}$ \texttt{n}};
\node (pjcc) [processCC, right=1cm of pjn] {$p_{j-1}$ \texttt{cc}};
\node (pkn) [goal, right=1cm of pjcc] {$p_j$ \texttt{n} {\scriptsize \texttt{ASSERT(false)}}};
\path[->]
	(start) edge (p1n) 
	(p1n) edge node [above] {{\scriptsize\textsc{CSO}}$^{p_1}$} (p1cc) 
	(p1cc) edge  node [above] {{\scriptsize\textsc{CSO}}$^{p_1}$} (empty)
	(empty) edge node [above] {{\scriptsize\textsc{CSO}}$^{p_{j-2}}$} (pjn)
	(pjn) edge node [above] {{\scriptsize\textsc{CSO}}$^{p_{j-1}}$} (pjcc)
	(pjcc) edge node [above] {{\scriptsize\textsc{CSO}}$^{p_{j-1}}$} (pkn);
\draw[block filldraw] ([xshift=-5pt, yshift=-18pt] p1n.south west) rectangle ([xshift=27pt, yshift=10pt] p1cc.north east) ;
\draw [
   thick,
   decoration={
       brace,
       raise=0.1cm
   },
   decorate
] (p1cc.south east) -- (p1cc.south west); 
\node [below= 0.2cm of p1cc] {\scriptsize$\leq\texttt{certDepth}$}; 
\node [above left=0.27cm and 0.05cm of p1cc] {\scriptsize {\color{blue} one context}};
\end{tikzpicture}
\caption{Control flow: In each context, a process runs first in normal mode \texttt{n} and then in consistency check mode \texttt{cc}. The transitions between these modes is facilitated by the \textsc{CSO} code block of the respective process. We check for assertion failures for $K+n$ context-bounded executions ($j\leq K+n$).}
\label{fig:scrun}
\end{figure}

The set of promises a process makes has to be constrained   with respect to the set of promises that it can certify, since  processes can generate arbitrarily many promises/reservations, while, in reality only a few of them will be certifiable. 
To address this, in the translation, processes run in two modes : 
a `normal' mode and a `check' (\textit{consistency check}) mode. In the normal mode, a process does not make any promises or reservations. In the check mode, the process may make promises and reservations and 
 subsequently certify them before switching out of context. In any context, a process first enters the normal mode, and then, before exiting the context it enters the check mode. The  check mode is used by the process to (1) make new promises/reservations and (2) certify consistency of the machine state.
 We also add an optional parameter, called \textit{certification depth} (\texttt{certDepth}), which constrains the number of steps a process may take in the check mode to certify its promises. 
 Figure \ref{fig:scrun} shows the structure of a translated run under SC.

To reduce the $\ps$ run into a bounded context SC run, we use the bound on the number of essential events. 
From the run $\rho$ in $\ps$, we construct a $K$ bounded 
run $\rho'$ in $\ps$ where the processes run in the order of generation 
of essential events. So, the process which generates the first essential event is run first, till that event happens, 
then the second process   which generates the second essential event 
is run, and so on. This continues till $K+n$ contexts :  the $K$ bounds the number of essential events, and  the $n$ is to ensure all processes are run to completion.  
The bound on the number of essential events gives a bound on the number of 
timestamps that need to be maintained. 
As observed in \cite{pldi2019}, one view altering read requires two timestamps; additionally, each promise/reservation requires one timestamp. Since we have $K$ such essential events, $2K$ time stamps suffice. We choose $\mathsf{Time}=\{0,1,2, \dots, 2K\}$ as the set of timestamps.

\noindent {\bf{Data Structures}}.
 We mention the significant ones. 
The \gmess data structure represents a message generated as a write or a promise and has 4 fields (i) $\mathit{var}$, the address of the memory location written to;
(ii) the timestamp $t$ in  the view associated with the message;
(iii) $v$, the value written; and
(iv) $\mathit{flag}$, that keeps track of whether it is a message or a promise; and, in case of a promise, which process it belongs to.
The \goldf{\textsf{View}} data structure stores, for each memory location $x$, (i) a timestamp $t \in \mathsf{Time}$,
(ii) a value $\mathit{v}$  written to $x$,
(iii) a Boolean $l \in \{\true, \false\}$ representing whether $t$ is an exact timestamp (which can be used for essential events) or an abstract timestamp (which corresponds to non-essential events). 

\noindent{\bf {Global Variables}}. 
\label{para:globvars}
The \gmem is an array of size $K$ holding elements 
of type \gmess. This array  
is populated with the view switching messages, promises and reservations generated by the program. We maintain counters for 
(1) 
the number of elements  in \gmem;
(2) 
the number of context switches that have occurred; and
(3) 
the number of essential events that have occurred. 

\noindent{\bf{Local Variables}}. 
In addition to its local registers, each process has local variables including
\begin{itemize}
\item a local variable $\gview$, which stores  a local instance of the view function (this is of type 
  \goldf{\textsf{View}}),
\item
 $\mathit{active}$: a boolean variable which is set when the
process is running in the current context, and
 \item $\mathit{checkMode}$: a boolean  denoting whether the process is in the certification phase. We implement the certification phase as a function call, and hence store the process state and return address, while entering it. 
\end{itemize}

\noindent {\bf{Subroutines}}.  We use certain helper subroutines as follows:
\begin{itemize}
\item genMessage is a subroutine which generates an instance of the  \gmess data structure;  
\item saveState($p$) is a subroutine  which saves the values of the global variables and the local states (instruction labels and local variables)  of process $p$. This is used when switching into check mode.
\item loadState($p$) is a subroutine which loads the 
the values of global variables and local states of $p$  which was saved using saveState($p$). This is use when switching out of check mode.
\end{itemize}

\subsection{Translation Maps}

  \begin{wrapfigure}{r}{0.45\textwidth}       
\footnotesize
      \begin{algorithm}[H]               
        \DontPrintSemicolon
        \SetCustomAlgoRuledWidth{0.45\textwidth}   
        \caption{$\mathsf{CSO}$}
  \tcc{nondeterministically enter check mode and exit context}
  \If{nondet()}{ 
    \uIf{$\neg$checkMode}{
      \tcc{enter consistency check}
      \If{not in context}{
        enter context \;
      }
      checkMode $\leftarrow \texttt{true}$ \;
      save localstate \;
      returnAddr $\leftarrow \lambda$ \;
    }
    \Else{
      \tcc{consistency check successful!}
      ensure all Promises for process are certified \;
      \tcc{for next context}
      mark all Promises as uncertified \;
      $\mathit{checkMode} \leftarrow \texttt{false}$ \;
      load localstate \; 
      goto $\mathit{returnAddr}$ \;
      exit context \;
    }
  }
  \label{alg:cso}
  \end{algorithm}
\end{wrapfigure}

In what follows we illustrate how the translation simulates a run under $\ps$. At the outset, recall that each process alternates, in its execution, between two modes: a \emph{normal} mode (\texttt{n}  in Figure \ref{fig:scrun}) at the beginning of each context and
the \emph{check} mode 
at the end of the current context (\texttt{cc} in Figure \ref{fig:scrun}), where it may make new promises and certify them before switching out of context. 

\noindent \textbf{Context Switch Out ($CSO^{p, \lambda}$).}
We describe the \textsc{CSO} module  (Algorithm \ref{alg:cso} provides its pseudocode). \textsc{CSO}$^{p, \lambda}$ is placed after each instruction $\lambda$ in the original program and serves as an entry and exit point for the consistency check phase of the process.  When in normal mode (\texttt{n})  after some instruction $\lambda$, \textsc{CSO} non-deterministically guesses whether the process should exit the context at this point, and  sets the \textit{checkMode} flag to true and subsequently, saves its local state and  the return address (to mark where to resume execution from, in the next context). 
 The process then continues its execution in the consistency check mode 
 (\texttt{cc}) from the current instruction label ($\lambda$) itself. Now the process may generate new promises (see Algorithm \ref{alg:write}) and certify these as well as earlier made promises. In order to conclude the check mode phase, the process will enter the \textsc{CSO} block at some different instruction label $\lambda'$. Now since the \textit{checkMode} flag is true, the process enters the else branch, verifies that there are no outstanding promises of $p$ to be certified. Since the promises are not yet fulfilled, when $p$ switches out of context, it has to mark all its promises uncertified. 

 When the context is back to $p$ again, this will be used 
 to fulfil the promises or to certify them again before the context switches out of $p$ again.  
  Then it exits the check mode phase, setting \textit{checkMode} to false. Finally it loads the saved state, and returns to the instruction label $\lambda$ (where it entered check mode) and exits the context.

\noindent \textbf{Write Statements}.
We now discuss the translation of a write instruction $\llbracket x\coloneqq\reg\rrbracket_o$, where $o \in\{\rlx,\ra\}$ of a process $\proc$, the intuitive pseudocode for which is given in Algorithm \ref{alg:write}.

  This is the general psuedo code for both kinds of memory accesses, with specific details 
pertaining to the particular access mode omitted.

Let us first consider execution in the normal mode (i.e., $\mathit{checkMode}$ is false).
First, the process updates its local state  with the value that it will write.
Then, the process non-deterministically chooses one of three possibilities for the write, it either
(i) does not assign a fresh timestamp (non-essential event),
(ii) assigns a fresh timestamp and adds it to memory, or
(iii) fulfils some outstanding promise.

\begin{wrapfigure}{r}{0.45\textwidth}     \footnotesize  
      \begin{algorithm}[H]               
        \DontPrintSemicolon
        \SetCustomAlgoRuledWidth{0.45\textwidth}   
        \caption{$\texttt{Write}$}
           update localstate with write \;
  \uIf(\tcc*[f]{(i) no fresh timestamp}){nondet()}{ 
    \uIf{checkMode}{
      \tcc{since write is not a promise}
      certify message with reservation or splitting
    }
  }
  \uElseIf(\tcc*[f]{(ii) fresh timestamp}){nondet()}{
    generate a view; generate a message \;
    \eIf{checkMode}{
      insert message into Memory as Promise and certify \;
    }{
      insert message into Memory as concrete message \;
    }
  }
  \Else(\tcc*[f]{(iii) fulfill old promise}){
    get Promise from Memory \;
    check variable, value and view match \;
    \eIf{checkMode}{
      mark message as certified \;
    }{
      mark message as fulfilled \;
    }
    replace message into Memory \;
  }
  \label{alg:write}
  \end{algorithm}
\end{wrapfigure}

Let us now consider a write executing  when $\mathit{checkMode}$ is true, and highlight differences with the normal mode. 
In case (i),  non essential events exclude promises and reservations. 
Then, while in certification phase, since we use a capped memory, 
the process can make a write if either (1) the write interval can be generated through splitting insertion or (2) the write can be certified with the help of a reservation.  
Basically the writes we make either split an existing interval (and add this to the left of a promise), or forms a part of a reservation. 
\setlength\intextsep{0pt}

\begin{figure}{r}
       \footnotesize
      \begin{algorithm}[H]               
        \DontPrintSemicolon
        \SetCustomAlgoRuledWidth{0.45\textwidth}   
        \caption{$\texttt{Read}$}
  \uIf(\tcc*[f]{local read}){nondet()}{ 
    check local state is valid \;
    update local state with read \;
  }
  \Else(\tcc*[f]{nonlocal (view-switching) read}){
    check that local state allows read \;
    get message from Memory \;
    check variable, value, view  are allowed \;
    update local state with message view \;
  }
  \label{alg:read}      
  \end{algorithm}
\end{figure}

Thus, 
the time stamp of a neighbour is used. 
In case (ii) when a fresh time stamp is used, the write is made as a promise, and then certified before switching out of context. 
The analogue of case (iii)  is the certification of promises for the current context; promise fulfilment happens only in the normal mode.  
To help a process decide the value of a promise,  
we use the fact that CBMC allows us to assign a non-deterministic value of a variable. On top of that, we have implemented an optimization that checks the set of possible values to be written in the future.

\noindent \textbf{Read Statements.}
The translation of a read instruction $\llbracket\reg\coloneqq x\rrbracket_o$, $o \in \{\rlx,\ra\}$ of process $\proc$ is given in Algorithm \ref{alg:read}. 
The process first guesses, whether it will read from a view altering message in the memory of from its local view. If it is the latter, the process must first verify  whether it can read from the local view ; 
for instance, reading from the local view may not be possible after execution of a  \texttt{fence} instruction when the timestamp of a variable $x$ gets incremented from the local view $t$ to $t' > t$.  In the case of a view altering read, we first check that we have not reached the context switching/essential event bound. Then the new \gmess is fetched from \gmem and we check the view (timestamps) in the acquired \gmess satisfy the conditions imposed by the access type $\in \{\ra, \rlx\}$. Finally, the process updates its view with that of the new message and increments the counters for the context switches and the essential events. Theorem \ref{thm:s2s} proves the correctness  
of our translation. 

\begin{theorem}
Given a program $\prog$ under $\ps$, and $K \in \mathbb{N}$, the source to source translation 
constructs a program $\sem{prog}$ whose size is polynomial in $\prog$ and $K$ such that,  for every $K$-bounded run of $\prog$ under $\ps$ reaching a set of instruction labels, there is a $K+n$-bounded context run of $\sem{prog}$ 
under SC that reaches the same set of instruction labels.   	
	\label{thm:s2s}
\end{theorem}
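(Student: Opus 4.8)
The proof has two parts: the polynomial size bound and the behavioural simulation. The size bound is immediate from the translation map of Figure~\ref{transl}: for each instruction of $\prog$ we prepend a $\langle\textsc{CSI}\rangle$ block and append a $\langle\textsc{CSO}\rangle$ block (Algorithm~\ref{alg:cso}), each of constant size, and we expand every read and write into the fixed code of Algorithms~\ref{alg:read} and~\ref{alg:write}. The global \gmem array has size $K$, every \goldf{\textsf{View}} field ranges over the timestamp domain $\mathsf{Time}=\{0,\dots,2K\}$, and there are $O(|\varset|)$ such fields per process, so $|\sem{\prog}|$ is polynomial in $|\prog|$ and $K$.

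For the simulation I would proceed in two stages. \emph{First, normalise the \ps run.} Given a $K$-bounded run $\rho$ of $\prog$, I would reorder it into an equivalent consistent run $\rho'$ in which the processes are scheduled exactly in the order in which the (at most $K$) essential events---promises, reservations and view-altering reads---are generated: run the process producing the first essential event up to that event, then the process producing the second, and so on, finally running each process to completion. The justification is an independence argument: any maximal block of non-essential actions of a single process (local reads, writes receiving no globally-relevant fresh timestamp, and reads that do not advance the reader's view) commutes with the actions of the other processes, since by definition such actions neither change another process's view nor are ever observed by another process, and consistency is preserved because these moves do not alter any promise obligation. Collecting these blocks yields at most $K$ context switches triggered by essential events, plus the $n$ switches needed to drain the remaining processes, so $\rho'$ uses at most $K+n$ contexts and reaches the same instruction labels as $\rho$.

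\emph{Second, simulate $\rho'$ under SC.} Because $\rho'$ has at most $K$ essential events and each view-altering read consumes two fresh timestamps while each promise or reservation consumes one, only $2K$ timestamps are ever needed, justifying the finite domain $\mathsf{Time}=\{0,\dots,2K\}$ and the \gmem array of size $K$. I would define a simulation relation $\approx$ between the consistent machine states appearing along $\rho'$ and the reachable SC states of $\sem{\prog}$: the \gmem array stores precisely the view-switching messages, promises and reservations of the \ps memory $M$ (non-view-altering writes need not be recorded), each process's $\gview$ encodes its \ps view ${\sf VS}(p)$ together with the exact/abstract flag $l$, and the counters record the number of context switches and essential events so far. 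The claim, proved by induction on the length of $\rho'$, is that every \ps step $\MS\xrightarrow[p]{}\MS'$ is matched by a block of SC steps of $\sem{\prog}$ taking an SC state $s\approx\MS$ to some $s'\approx\MS'$ within a single context of $p$. The case analysis follows the process-level rules of Figure~\ref{program_sem}: a read is handled by the two branches of Algorithm~\ref{alg:read} (local versus view-altering, the latter fetching from \gmem and checking the $\rlx$/$\ra$ view conditions); a write by the three branches of Algorithm~\ref{alg:write} (no fresh timestamp, fresh timestamp, or promise fulfilment); and promises, reservations, cancellations and $\fence$ analogously.

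The step I expect to be the main obstacle is faithfully reproducing \ps{} \emph{consistency} inside a bounded SC context, i.e.\ the requirement that from every reached machine state a process can certify all its promises from the capped memory. In the translation this is carried by the \emph{check} mode of Figure~\ref{fig:scrun}: when a process switches out of context the $\langle\textsc{CSO}\rangle$ block of Algorithm~\ref{alg:cso} saves the local state, the process (re)makes its promises and reservations and then tries to certify them before restoring the saved state. The delicate points are (i) that, in check mode, Algorithm~\ref{alg:write} only permits writes either at a fresh maximal timestamp or via splitting an existing interval or using a reserved slot---exactly mirroring the structure of the capped memory $\widehat{M}_{PS(p)}$, so that a promise is marked \emph{certified} in SC iff it is fulfillable from the cap in \ps; and (ii) that a promise certified in one context can indeed be fulfilled in a later context, which is why the $\langle\textsc{CSO}\rangle$ block re-marks all promises as uncertified on exit and the optional \texttt{certDepth} parameter bounds the certification prefix. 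Establishing that this local, bounded certification in SC coincides with \ps consistency, and that restoring the state after check mode leaves the shared \gmem unchanged except for genuinely fulfilled promises, is the crux of the correctness argument.
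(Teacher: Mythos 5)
Your Stage 2 (the $2K$-timestamp accounting, the simulation relation between $\ps$ machine states and SC states of $\sem{\prog}$, the per-rule induction, and the treatment of check mode against the capped memory) is essentially the paper's argument, and the polynomial size bound is fine. The genuine gap is in your justification of Stage 1. You justify the reordering by claiming that maximal blocks of non-essential actions of one process commute with the actions of other processes because such actions are ``never observed by another process.'' That is false: essential events are promises, reservations and view-altering \emph{reads} --- the \emph{writes} that produce the messages those reads consume are in general ordinary, non-essential writes, and they are observed. Concretely, if the first essential event of $\rho$ is a view-altering read by $p_1$ of a message written by a plain (non-promise) write of $p_2$, your schedule ``run the process producing the first essential event up to that event'' deadlocks, because $p_2$'s write is not yet in memory, and no commutation can move that write past the read that reads from it. Note also that a correct reordering cannot in general preserve the order of essential events: in this example $p_2$ may have to run first, performing even its \emph{later} promises, before $p_1$'s \emph{earlier} view-altering read.

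The paper's proof handles exactly this point by scheduling with respect to the reads-from relation rather than the order of essential events. Relative to a configuration, it calls a view-altering read whose source message is not yet in the memory pool a \emph{requesting} event, and the write or promise that publishes that message its \emph{servicing} event; it then proves by induction on $r+n$ (where $r \le K$ is the number of currently requesting events and $n$ the number of processes) that some process can always be run either to completion or up to its first requesting event, in the latter case executing at least one servicing event so that $r$ strictly decreases. This yields a run with the same reads-from relation, reaching the same labels, and using at most $K+n$ contexts. Your Stage 1 needs to be replaced by (or repaired into) this dependency-driven induction; the commutation claim as stated does not hold.
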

\section{Implementation and Experimental Results}
\label{sec:eval}
In order to check the efficiency of the source-to-source translation, we implement a prototype tool, \tool{} which is the first tool to handle \ps. \tool~ takes as input a C program and a bound $K$ and translates it to a program $\mathit{Prog}'$ to be run under SC. We use CBMC version 5.10 as  backend to verify $\mathit{Prog}'$. CBMC takes as input $L$, the loop unrolling parameter for bounded model checking of $\mathit{Prog}'$. 
We supply the bound on \textit{Essential Events}, $K$ as a parameter to \tool. \tool~ then considers the subset of executions respecting the bounds $K$ and $L$ provided as input. If it returns \textit{unsafe}, then the program has an unsafe execution. Conversely, if it returns \textit{safe} then none of the executions within the subset violate any assertion. $K$ may be iteratively incremented to increase the number of executions explored. We provide a functionality with which the user optionally selects a subset of processes for which promises and reservations will be enabled. While in the extreme cases we can run \tool~ in the promise-full (all processes can promise) and promise-free modes, \textit{partial promises} (allowing subsets of processes to promise) turns out to be an effective  technique.

We now report the results of experiments we have performed with \tool. We have two objectives: (1) studying the performance of \tool~ on benchmarks which are unsafe only if promises are enabled and (2) comparing \tool~ with other model checkers when operating in the promise-free mode (since they can not handle promises).
In the first case, we show that \tool~ is able to uncover bugs in examples with low interaction (reads and writes) with the shared memory. When this interaction increases, however, \tool~ does not scale, owing to the huge non-determinism in  \ps. However, with partial promises, \tool~ is once again able to uncover bugs in reasonable amounts of time. 
In the second case, our observations highlight the ability to detect hard to find bugs with small $K$ for unsafe benchmarks, and scalability by altering $K$ as discussed earlier in case of safe benchmarks. We compare \tool with three state-of-the-art stateless model checking tools, $\cdsc$ \cite{cdsc}, $\genmc$ \cite{genmc} and $\rcmc$ \cite{rcmc} that support the promise-free subset of the \ps semantics. 
In the tables that follow we 
provide the value of $K$ (for \tool{} only) and the value of $L$ (for all tools).
We do not consider compilation time for any tool while reporting the results. For \tool, the time reported is the time taken by the CBMC backend for analysis. The timeout used is 1 hour for all benchmarks. All experiments are conducted on a machine  with a 3.00 GHz Intel Core i5-3330 CPU and 8GB RAM running a Ubuntu 16 64-bit operating system. We denote timeout by `TO', and memory limit exceeded by `MLE'.
\subsection{Experimenting with Promises}
In this section we check the efficiency of the source-to-source translation  in handling  promises for \ps (which is the most difficult part due to the non-determinism).

 We first test \tool{} on litmus-tests adapted from \cite{promising,promising2,thinair,jmm}. These examples are small programs that serve as barebones thin-air tests for the C11 memory model. Consistency tests based on the Java Memory Model are proposed in \cite{jmm}. These were also experimented on in \cite{mrder} with the MRDer tool. Like MRDer, \tool{} is able to verify most of these tests within 1 minute which shows its ability to handle typical programming idioms of \ps.

\setlength\intextsep{-6pt}
\captionsetup[table]{font=scriptsize}
\begin{table}[t]
\vspace{-0.3cm}
\scriptsize
\centering
\resizebox{0.25\columnwidth}{!}{
\begin{tabular}{cccc}
\hline
\textbf{testcase} & $K$ & \textbf{\tool}
\\
\hline\hline
ARM\_weak  & 4 & 0.765s  \\ 
Upd-Stuck & 4 & 1.252s \\
split & 4 &  25.737s  \\ \hline
LBd & 3 & 1.481s \\
LBfd & 3 & 1.512s \\ \hline
CYC & 5 & 1.967s \\
Coh-CYC & 5 & 42.67s \\ \hline\hline
Pugh2  &  3 & 13.725s  \\ 
Pugh3  & 3 & 12.920s  \\ 
Pugh8  & 3 & 1.67s  \\ \hline
Pugh5  & 5 & 4.811s  \\ 
Pugh10 & 5 & 3.868s \\
Pugh13 & 5 & 3.345s \\ \hline
\end{tabular}}
\caption{Litmus Tests}
\label{tab:litmus}
\end{table}

\begin{table}[t]
\small
\centering
\resizebox{0.27\columnwidth}{!}{
\begin{tabular}{cccc}
\hline
\textbf{testcase} & $K$ & \textbf{\tool}
\\
\hline\hline
fib\_local\_3  & 4 & 0.742s  \\ 
fib\_local\_4  & 4 & 0.761s \\ \hline
fib\_local\_cas\_3 & 4 & 1.132s  \\ 
fib\_local\_cas\_4 & 4 &  1.147s  \\ \hline
\end{tabular}}
\caption{Performance of \tool~ on cases with local computation}
\label{tab:prom1}
\end{table}
\begin{table}[t]
\resizebox{0.3\columnwidth}{!}{
\begin{tabular}{cccc}
\hline
\textbf{testcase} & $K$ & \textbf{\tool}[1p]
\\
\hline\hline
fib\_global\_2  & 4 & 55.972s \\
fib\_global\_3  & 4 & 2m4s  \\ 
fib\_global\_4 & 4 &  4m20s  \\ \hline
exp\_global\_1  & 4 & 19m37s \\
exp\_global\_2  & 4 & 41m12s  \\  \hline
\end{tabular}}
\caption{Performance of \tool~ on cases with global computation}
\label{tab:prom2}
\end{table}

In Table \ref{tab:prom1} we consider unsafe examples in which a process is required to generate a promise (speculative write) with value as the $i^{\mathit{th}}$ fibonacci number (\texttt{Fibonacci}-based benchmarks for SV-COMP 2019 \cite{beyer2019automatic}).  This promise is certified using computations local to the process. Thus though the parameter $i$ increases the interaction of the promising process with the memory remains constant. The $\cas$ variant requires the process to make use of reservations. We note that \tool~ uncovers the bugs effectively in all these cases.

Now we consider the case where promises  require some interaction between processes. We consider an example  adapted from the \texttt{Fibonacci}-based benchmarks for SV-COMP 2019 \cite{beyer2019automatic}, where two processes compute the $i^{\text{th}}$ fibonacci number in a distributed fashion. Unlike the previous case, here, the amount of interaction increases with $i$. Here however, our tool times out.

\textit{How do we recover tractable analysis in this case?} We tackle this problem by a modular approach of allowing partial-promises, i.e. subsets of processes are allowed to generate promises/reservations. In the experiments, we allowed only a single process to do so. The results obtained are in Table \ref{tab:prom2}, where \tool[1p] denotes that only one process is permitted to perform promises. We then repeat our experiments on two other unsafe benchmarks - \texttt{ExponentialBug} from Fig. 2 of \cite{huang}  and have similar observations. With this modular approach \tool~ uncovers the bug. To summarize,  we note that the source to source approach performs well on programs requiring limited global memory interaction. When this interaction increases, \tool~ times out, owing to the huge non-determinism of \ps. However, the modular approach of partial-promises enables us to recover effectiveness.

\subsection{Comparing Performance with Other Tools}

In this section we compare performance of \tool~ in promise-free mode with $\cdsc$ (\cite{cdsc}), $\genmc$ (\cite{genmc}) and $\rcmc$ (\cite{rcmc}) on safe and unsafe benchmarks. We provide a subset of the experimental results, the remaining can be found in the full version. The results of this section indicate that the source-to-source translation with essential event bounding is effective at uncovering hard to find bugs in non-trivial programs. We will observe that in most examples discussed below, we had $K \leq 10$. Additionally, the bound $K$ allows incremental verification of safe programs in cases where the other tools timeout. 

\paragraph{Parameterized Benchmarks}
\label{para:exp2}

In Table~\ref{tab:parab} we compare the performance of these tools on two parametrized benchmarks: $\texttt{ExponentialBug}$ (from Fig. 2 of \cite{huang}) and $\texttt{Fibonacci}$ 
(from SV-COMP 2019). In $\texttt{ExponentialBug}(N)$
$N$ represents the number of times a process writes to a variable. We note that in $\texttt{ExponentialBug}(N)$ the number of executions grows as $N!$, while the  processes have to follow a specific interleaving to uncover the hard to find bug. 
In $\texttt{Fibonacci}(N)$, two processes compute the value of the $n^{th}$ fibonacci number in a distributed fashion. 
Our tool performs better than the other tools on the $\texttt{ExponentialBug}$ and competes well on $\texttt{Fibonacci}$ for larger values of the parameter. These results show the ability of our tool to uncover bugs with a small value of $K$.

\begin{table}[t]
\resizebox{0.7\columnwidth}{!}{
\begin{tabular}{ccccccc}
\hline
\textbf{benchmark} & $L$ & $K$ & \textbf{\tool} & \textbf{CDSChecker} & \textbf{GenMC} & \textbf{RCMC} \\ \hline\hline
exponential\_10\_unsafe      & 10 & 10 & 1.854s          & 1.921s          & 0.367s & 3m41s           \\
exponential\_25\_unsafe      & 25 & 10 & 3.532s         & 7.239s          & 3.736s & TO                 \\
exponential\_50\_unsafe      & 50 & 10 & 6.128s         &  36.361s            & 39.920s  & TO                 \\
\hline
fibonacci\_2\_unsafe                 & 2  & 20 & 2.746s          & 2.332s     & 0.084s     & 0.086s          \\
fibonacci\_3\_unsafe                 & 3  & 20 & 9.392s         &  46m8s         & 0.462s     & 0.544s          \\
fibonacci\_4\_unsafe                 & 4  & 20 & 34.019s         & TO         & 12.437s     & 18.953s            \\ \hline 
\end{tabular}}
\caption{Comparison  on a set of parameterized benchmarks}
\label{tab:parab}
\end{table}

\paragraph{Concurrent data structures based benchmarks}
\label{para:exp3}
We  compare the tools in Table \ref{tab:ds} on benchmarks based on concurrent data structures. The first of these is a concurrent locking algorithm originating from \citet{hehner}. The second, $\texttt{LinuxLocks(N)}$ is adapted from evaluations of $\cdsc$ \cite{cdsc}. We note that if not completely fenced, it is unsafe. We fence all but one lock access. \texttt{Queue} is a \textit{safe} benchmark adapted from SV-COMP 2018, parameterized by the number of processes. We note the ability of the tool to uncover bugs with a small value of $K$.

\begin{table}[t]
\small
\resizebox{0.7\columnwidth}{!}{
\begin{tabular}{ccccccc}
\hline
\textbf{benchmark} & $L$ & $K$ & \textbf{\tool} & \textbf{CDSChecker} & \textbf{GenMC} & \textbf{RCMC} \\ \hline\hline
hehner2\_unsafe    & 4 & 5 & 7.207s & 0.033s & 0.094s & 0.087s \\ 
hehner3\_unsafe    & 4 & 5 & 28.345s & 0.036s & 2m53s & 1m13s \\ \hline
linuxlocks2\_unsafe    & 2 & 4 & 0.547s & 0.032s & 0.073s & 0.078s \\ 
linuxlocks3\_unsafe    & 2 & 4 & 1.031s & 0.031s & 0.083s & 0.081s \\ \hline
queue\_2\_safe & 4 & 4 & 0.180s & 0.031s & 0.082s & 0.085s \\
queue\_3\_safe & 4 & 4 & 0.347s & 0.037s & 0.090s & 0.092s \\ \hline
\end{tabular}}
\caption{Comparison on concurrent data structures}
\label{tab:ds}
\end{table}

\paragraph{Variations of mutual exclusion protocols}
We now consider safe and unsafe variants of mutual exclusion protocols from  SV-COMP 2018. The fully fenced versions of the protocols are \textit{safe}. We modify these protocols by introducing bugs and comparing the performance of \tool~ for bug detection with the other tools. These benchmarks are parameterized by  the number of processes. 

\begin{table}[t]
\small
\resizebox{0.7\columnwidth}{!}{
\begin{tabular}{ccccccc}
\hline
\textbf{benchmark} & $L$ & $K$ & \textbf{\tool} & \textbf{CDSChecker} & \textbf{GenMC} & \textbf{RCMC} \\ \hline\hline
peterson1U(4)    & 1 & 6 & 1.408s & 0.039s & TO & 9.129s \\ 
peterson1U(8)    & 1 & 6 & 47.786s & TO & TO & TO \\ 
\hline
szymanski1U(4)   & 1 & 2 & 1.015s & 0.043s & MLE & TO \\
szymanski1U(8)   & 1 & 2 & 6.176s & TO & TO & TO \\ 
\hline
\end{tabular}}
\caption{Comparison of performance on mutual exclusion benchmarks with a single unfenced process}
\label{tab:mutex1}
\end{table}

In Table \ref{tab:mutex1}, we  unfence a single process of the \texttt{Peterson} and \texttt{Szymanski} protocols making them \textit{unsafe}. For \tool, the value of $K$ taken is 6 and 2 respectively, asserting that  bugs can be found (even for non-trivial examples) with small $K$. We note that the other tools eventually timeout for larger values of $n$.

\setlength\intextsep{10pt}

In Table~\ref{tab:mutex2} we keep all processes fenced but introduce a bug into the critical section of a process (write a value to a shared variable and read a different value from it). We note that all other tools timeout, while \tool~ is able to detect the bug within one minute, showing that essential event-bounding is an effective technique for bug-finding. Additionally in \texttt{Peterson2C}, we vary the  example by changing the process in which we add the bug. We note that $\cdsc$, can uncover the bug in \texttt{Peterson2C(5)} in around two minutes, while for \texttt{Peterson1C(5)} it timed out. Thus, $\cdsc$ algorithm is sensitive to  changes in the position of the bug due to its DPOR exploration strategy.

\begin{table}[h]
\small
\resizebox{0.7\columnwidth}{!}{
\begin{tabular}{ccccccc}
\hline
\textbf{benchmark} & $L$ & $K$ & \textbf{\tool} & \textbf{CDSChecker} & \textbf{GenMC} & \textbf{RCMC} \\ \hline\hline
peterson1C(3)    & 1 & 2 & 0.487s & 0.053s & 0.083s & 0.087s \\ 
peterson1C(5)    & 1 & 2 & 2.713s & TO & TO & TO \\
peterson1C(7)    & 1 & 2 & 11.008s & TO & TO & TO \\ \hline
peterson2C(3)    & 1 & 2 & 0.481s & 0.032s & 0.099s & 0.091s \\ 
peterson2C(5)    & 1 & 2 & 2.801s & 1m47s & TO & TO \\
peterson2C(7)    & 1 & 2 & 11.030s & TO & TO & TO \\ \hline
\end{tabular}}
\caption{Comparison of performance on completely fenced peterson mutual exclusion benchmarks with a bug introduced in the critical section of a single process}
\label{tab:mutex2}
\vspace{-0.6cm}
\end{table}

We consider in Table~\ref{app-tab:mutex4} completely fenced versions of the mutual exclusion protocols.
In this experiment, we increase the loop unwinding bound and with it, the value of $K$. These examples exhibit the practicality of iterative increments in $K$. The other tools eventually timeout, while \tool~ is able to provide atleast partial guarantees.

\begin{table}[h]
\small
\vspace{0.1cm}
\resizebox{0.7\columnwidth}{!}{
\begin{tabular}{ccccccc}
\hline
\textbf{benchmark} & $L$ & $K$ & \tool & \textbf{CDSChecker} & \textbf{GenMC} & \textbf{RCMC} \\ \hline\hline
peterson(3)    & 1 & 2 & 0.878s & TO & 9.665s & 26.208s \\
peterson(2)    & 1 & 2 & 0.321s & 0.325s & 0.087s & 0.068s \\ \hline
peterson(3)    & 2 & 4 & 1.695s & TO & MLE & TO \\
peterson(2)    & 2 & 4 & 0.539s & 15m22s & 0.039s & 0.428s \\ \hline
peterson(3)    & 4 & 4 & 15.900s & TO & MLE & TO \\
peterson(2)    & 4 & 4 & 3.412s & TO & TO & TO \\ \hline
\end{tabular}}
\caption{Evaluation using safe mutual exclusion protocols}
\label{app-tab:mutex4}
\vspace{-0.6cm}
\end{table}
\section{Conclusion}
In this paper, we investigate decidability 
of the promising semantics, $\ps$ from \citet{promising2}. The release-acquire ($\ra$) fragment of $\ps$ with RMW operations is known to be undecidable \cite{pldi2019}. However, the decidability of the fragment of $\ps$ with only relaxed ($\rlx$) accesses (denoted $\psr$) 
was open. We started with this fragment, and obtained undecidability of the reachability 
 problem, when there is no bound on the number of promises. In the quest for decidability, we 
 considered an underapproximation of $\psr$ 
 where we bound the number of promises in any execution. The fragment of $\psr$ with bounded promises is denoted as $\bps$. 
  We showed that reachability is decidable for $\bps$. 
  Our decidability proof includes the introduction of a new memory model $\lhc$, and 
 proving the equivalence of $\psr$ and $\lhc$. 
 The decidability of $\bps$ is shown using the  theory of well structured transition systems. This also gives non-primitive recursive complexity of $\bps$, with a proof similar to RMW-free fragment of release-acquire \cite{pldi2019}. 
 
 Having explored the decidability landscape of $\ps$ thoroughly, we moved towards practical verification techniques  for $\ps$. Motivated 
 by the success of context bounded reachability in SC \cite{DBLP:conf/tacas/QadeerR05}, and subsequent notions in weak memory models, we
introduced a notion of essential events bounded reachability for $\ps$, which bounds the number  
 of promises and view altering messages in any execution. We provide a source to source translation from a concurrent program   
   under $\ps$ with this bounded notion to 
   a bounded context SC program, and implemented this in a tool \tool{}. \tool{} is the first tool capable of handling the promising framework, $\ps$ from \citet{promising2} and the $\mathsf{PS}$ model from \citet{promising}. \tool{} allows modularity with respect 
   to allowing/disallowing promises on a thread-by-thread basis. We exhibit the efficacy of this modular technique in the face of non-determinism induced by $\ps$. We also compare the performance of \tool{} with existing tools which do not support promises by operating it in the \textit{promise-free} mode (in which no threads are allowed to promise). In this case, we exhibit the effectiveness of the bounding technique in uncovering hard-to find bugs. 

\bibliography{biblio}

\newpage
\appendix

\section{Details for Section \ref{sec:dec}}
In this section, we give details of lemmas from 
Section \ref{sec:dec}.

\subsection{Equivalence of $\psr$ and $\lhc$}
To prove Theorem \ref{thm:eqv},  
we show the following: Given a program $\prog$, 
 starting from the initial machine state $\mathcal{MS}_{\init}
 =((J_{\init}, R_{\init}), \mathsf{V}_{\init}, \mathsf{PS}_{\init}, 
 M_{\init}, G_{\init})$
 in $\psr$, we can  reach in $\psr$ the machine state 
$\mathcal{MS}_n=((J_n, R_n), \mathsf{V}_{n}, \mathsf{PS}_{n}, 
 M_{n}, G_{n})$ with $\mathsf{PS}_{n}(p)=\emptyset$ for all $p \in \procset$ iff, 
 starting from an initial $\lhc$ two phases state $\Ss_{\init}=(\nor, p, \lst_{init}, \lst_{init})$,  
 we reach the  state 
 $(\nor, -, ((J_n, R_n),\chh_n), -)$, such that   $\chh_n(x)$  does not contain any memory type of the form $(\prm,-,p,-)$ or $(\prm,-,p,-,-)$ for all $x \in \varset$.   
 The equivalence of the runs follows from the fact that 
 the  sequence of instructions followed in each phase $\nor$ and   $\cert$ are same in 
 both $\psr$ and $\lhc$ 
 ; $\lhc$ allows lossy transitions which does not affect reachability.  Moreover, the $\lhc$ run satisfies the following invariants.

\noindent{\bf{Invariants for $\chh$}}. 
The following invariants hold good for $\chh(x)$ for all $x \in \varset$. 
We then say that $\chh(x)$ is faithful to the sub memory $M(x)$ and the view mapping. 
 \begin{itemize}
\item[(\textbf{Inv1})] For all $x \in \varset$, 
$\chh(x)$ is {\em well-formed} : for each process $\proc \in \procset$, there is a unique  
 position $i$ in $\chh(x)$ having $p$ in its pointer set;
\item[(\textbf{Inv2})]  For all $i > 
\ptr(p,\chh(x))$,  we have $\chh(x)[i] {\notin} \{(\msgg, -, p, -), (\msgg, -, p, -, -)\}$. 
This says that memory types at positions greater than the pointer of $p$  
cannot correspond to messages added by $p$ to $M(x)$.
\end{itemize} 

\begin{lemma}
The higher order words $\chh(x)$ for all $x \in \varset$  appearing in the states of  
a $\lhc$ run  	satisfy invariants  \textbf{Inv1} and 
\textbf{Inv2}.
\label{lem:inv}
\end{lemma}

Lemma \ref{lem:inv} can be proved by inducting on the length of a $\lhc$ run, starting   
from the initial states, using the following. 
\begin{itemize}
\item For each memory type $(\msgg, v, p, S, -)$ or  
$(\msgg, v, p, S)$ 
in $\chh(x)$, there is a message in $M(x)$ which 
was added by process $p$,  having value $v$.
  Similarly, for each memory type $(\prm, v, p, S, -)$ or  
$(\prm, v, p, S)$ 
in $\chh(x)$, there is a promise in $M(x)$ which 
was added by process $p$, having value $v$.
	\item 
 The order between memory types in $\chh(x)$ 
  and the corresponding 
 messages in $M(x)$ are the same. That is, for $i < j$, the messages or promises $m, m' \in M(x)$ 
 corresponding to $\chh(x)[i]$ and $\chh(x)[j]$ are such that  
  $m.\too < m'.\too$. 
 \item the elements in the pointer set of  
a memory type $m$ in $\chh(x)$ are exactly the set of processes whose local view 
is the $\too$ stamp of the element of $M(x)$ corresponding to $m$.  
\end{itemize}
 
The base case is easy : the initial two-phases $\lhc$ state has the same local process states 
as the initial $\ps$ machine state; moreover, the invariants trivially hold, since all process  
pointers are at the same position. 

For the inductive hypothesis, assume that both invariants 
hold in a $\lhc$ run after $i$ steps. To show that they continue 
to hold good after $i+1$ steps, we have to show that for all 
$\lhc$ transitions that can be taken after $i$ steps,  they are preserved. Assume that the two phases $\lhc$ state 
at the end of $i$ steps is $(\nor, p, \lst, \lst')$. The proof for the case when 
we have a state $(\cert, p, \lst, \lst')$ after $i$ steps of the $\lhc$ run is similar.
\begin{itemize}
	\item Assume that we have the transition $\xrightarrow[p]{\rd(x,v)}$. Then $\ptr(p, \chh(x))$ 
	is updated in the resultant state, and so are $(J,R)$, Clearly, the higher order word in the resultant state satisfies both invariants  
	 since the starting state does.
	 \item Assume that we have the transition 
	$\xrightarrow[p]{\wt(x,v)}$. Then 
	we remove $p$ from the pointer set at position $i=\ptr(p, \chh(x))$. A new simple word is added at a position $>i$, or a memory type $(\msgg, v, p, \{p\})$ is added at a position $j>i$, right next to a $\#$, by moving 
	the memory type at $j$ to position $j-2$. 
		 In either case, the 
	resultant higher order word satisfies both invariants, since 
	the starting state does.  
	\item The update rule $\xrightarrow[p]{\upd(x, v_r, v_w)}$ 
	combines the above two cases, by first performing a read and then atomically the write. From the above two cases, the invariants can be seen to hold good in the higher order words in the state obtained after the transition.
	\item Consider the Promise rule. In this case, we do not remove $p$ from its pointer set, and only 
	 add the memory type $(\prm, v, p, \{\})$ ahead of $\ptr(p, \chh(x))$.  Note that 
	\textbf{Inv2} only requires that there are no memory types of the form $(\msgg, v, p, S)$
	or $(\msgg, v, p, S,-)$ ahead of $\ptr(p, \chh(x))$. Clearly, both invariants continue to hold. 
	\item Consider a fulfil rule obtained as a write. 
	In this case, $p$ is deleted from the position $\ptr(p, \chh(x))$; and the memory type $(\prm, v, p, S)$ (or $(\prm, v, p, S.-)$) is replaced with $(\msgg, v, p, S \cup \{p\})$ (or $(\msgg, v, p, S \cup \{p\})$). It is easy to see both invariants holding good.
	\item Consider the  reservation rule. This 
	does not affect the invariants since we only tag the last component of a memory type with the process making the reservation. 
	\item Consider the SC fence rule. If $\ptr(p, \chh(x)) > \ptr(g, \chh(x))$, then, 
	in the resultant word, $p$ is moved to $\ptr(g, \chh(x))$. The case when $\ptr(p, \chh(x)) < \ptr(g, \chh(x))$, 
	is handled by moving $g$ to $\ptr(p, \chh(x))$. Since this is the only change  in the resultant higher 
	order words, clearly, both invariants hold good.  
		\end{itemize} 
 
   Notice that the arguments above hold good for both modes $a \in \{\stdd, \cert\}$.

   To prove Theorem \ref{thm:eqv},  
we show the following: Given a program $\prog$, 
 starting from the initial machine state $\mathcal{MS}_{\init}
 =((J_{\init}, R_{\init}), \mathsf{V}_{\init}, \mathsf{PS}_{\init}, 
 M_{\init}, G_{\init})$
 in $\psr$, we can  reach the machine state 
$\mathcal{MS}_n$=$((J_n, R_n), \mathsf{V}_{n}, \mathsf{PS}_{n}, 
 M_{n}, G_{n})$ with $\mathsf{PS}_{n}(p)=\emptyset$ for all $p \in \procset$ iff, 
 starting from an initial $\lhc$ two phases state $\Ss_{\init}=(\nor, p, \lst_{init}, \lst_{init})$,  
 we reach the  state 
 $(\nor, -, ((J_n, R_n),\chh_n), -)$, such that   $\chh_n(x)$  does not contain any memory type of the form $(\prm,-,p,-)$ or $(\prm,-,p,-,-)$ for all $x \in \varset$.   
 The equivalence of the runs follows from the fact that 
 the  sequence of instructions followed in each phase $\nor$ and   $\cert$ are same in 
 both $\psr$ and $\lhc$ 
 ; $\lhc$ allows lossy transitions which does not affect reachability.  Moreover, the $\lhc$ run satisfies the following invariants.

\noindent{\bf{Invariants for $\chh$}}. 
The following invariants hold good for $\chh(x)$ for all $x \in \varset$. 
We then say that $\chh(x)$ is faithful to the sub memory $M(x)$ and the view mapping. 
 \begin{itemize}
\item[(\textbf{Inv1})] For all $x \in \varset$, 
$\chh(x)$ is {\em well-formed} : for each process $\proc \in \procset$, there is a unique  
 position $i$ in $\chh(x)$ having $p$ in its pointer set;
\item[(\textbf{Inv2})]  For all $i > 
\ptr(p,\chh(x))$,  we have $\chh(x)[i] {\notin} \{(\msgg, -, p, -), (\msgg, -, p, -, -)\}$. 
This says that memory types at positions greater than the pointer of $p$  
cannot correspond to messages added by $p$ to $M(x)$.
\end{itemize} 
\subsection*{All $\chh(x)$ respect Invariants \textbf{Inv1} and 
\textbf{Inv2}}
\begin{lemma}
The higher order words $\chh(x)$ for all $x \in \varset$  appearing in the states of  
a $\lhc$ run  	satisfy invariants  \textbf{Inv1} and 
\textbf{Inv2}.
\label{lem:inv}
\end{lemma}

Lemma \ref{lem:inv} can be proved by inducting on the length of a $\lhc$ run, starting   
from the initial states, using the following. 
\begin{itemize}
\item For each memory type $(\msgg, v, p, S, -)$ or  
$(\msgg, v, p, S)$ 
in $\chh(x)$, there is a message in $M(x)$ which 
was added by process $p$,  having value $v$.
  Similarly, for each memory type $(\prm, v, p, S, -)$ or  
$(\prm, v, p, S)$ 
in $\chh(x)$, there is a promise in $M(x)$ which 
was added by process $p$, having value $v$.
	\item 
 The order between memory types in $\chh(x)$ 
  and the corresponding 
 messages in $M(x)$ are the same. That is, for $i < j$, the messages or promises $m, m' \in M(x)$ 
 corresponding to $\chh(x)[i]$ and $\chh(x)[j]$ are such that  
  $m.\too < m'.\too$. 
 \item the elements in the pointer set of  
a memory type $m$ in $\chh(x)$ are exactly the set of processes whose local view 
is the $\too$ stamp of the element of $M(x)$ corresponding to $m$.  
\end{itemize}
 
The base case is easy : the initial two-phases $\lhc$ state has the same local process states 
as the initial $\ps$ machine state; moreover, the invariants trivially hold, since all process  
pointers are at the same position. 

For the inductive hypothesis, assume that both invariants 
hold in a $\lhc$ run after $i$ steps. To show that they continue 
to hold good after $i+1$ steps, we have to show that for all 
$\lhc$ transitions that can be taken after $i$ steps,  they are preserved. Assume that the two phases $\lhc$ state 
at the end of $i$ steps is $(\nor, p, \lst, \lst')$. The proof for the case when 
we have a state $(\cert, p, \lst, \lst')$ after $i$ steps of the $\lhc$ run is similar.
\begin{itemize}
	\item Assume that we have the transition $\xrightarrow[p]{\rd(x,v)}$. Then $\ptr(p, \chh(x))$ 
	is updated in the resultant state, and so are $(J,R)$, Clearly, the higher order word in the resultant state satisfies both invariants  
	 since the starting state does.
	 \item Assume that we have the transition 
	$\xrightarrow[p]{\wt(x,v)}$. Then 
	we remove $p$ from the pointer set at position $i=\ptr(p, \chh(x))$. A new simple word is added at a position $>i$, or a memory type $(\msgg, v, p, \{p\})$ is added at a position $j>i$, right next to a $\#$, by moving 
	the memory type at $j$ to position $j-2$. 
		 In either case, the 
	resultant higher order word satisfies both invariants, since 
	the starting state does.  
	\item The update rule $\xrightarrow[p]{\upd(x, v_r, v_w)}$ 
	combines the above two cases, by first performing a read and then atomically the write. From the above two cases, the invariants can be seen to hold good in the higher order words in the state obtained after the transition.
	\item Consider the Promise rule. In this case, we do not remove $p$ from its pointer set, and only 
	 add the memory type $(\prm, v, p, \{\})$ ahead of $\ptr(p, \chh(x))$.  Note that 
	\textbf{Inv2} only requires that there are no memory types of the form $(\msgg, v, p, S)$
	or $(\msgg, v, p, S,-)$ ahead of $\ptr(p, \chh(x))$. Clearly, both invariants continue to hold. 
	\item Consider a fulfil rule obtained as a write. 
	In this case, $p$ is deleted from the position $\ptr(p, \chh(x))$; and the memory type $(\prm, v, p, S)$ (or $(\prm, v, p, S.-)$) is replaced with $(\msgg, v, p, S \cup \{p\})$ (or $(\msgg, v, p, S \cup \{p\})$). It is easy to see both invariants holding good.
	\item Consider the  reservation rule. This 
	does not affect the invariants since we only tag the last component of a memory type with the process making the reservation. 
	\item Consider the SC fence rule. If $\ptr(p, \chh(x)) > \ptr(g, \chh(x))$, then, 
	in the resultant word, $p$ is moved to $\ptr(g, \chh(x))$. The case when $\ptr(p, \chh(x)) < \ptr(g, \chh(x))$, 
	is handled by moving $g$ to $\ptr(p, \chh(x))$. Since this is the only change  in the resultant higher 
	order words, clearly, both invariants hold good.  
		\end{itemize} 
 Notice that the arguments above hold good in both modes $a \in \{\stdd, \cert\}$.

\subsection*{Proof of Theorem \ref{thm:eqv}}
To show the equivalence of $\psr$ and $\lhc$ we show that the transitions 
in each phase  of $\psr$ (standard, certification) is handled 
in $\lhc$ by an appropriate state $(\nor, -,-, -)$ 
or $(\cert, -, -, -)$, and conversely. The first direction we consider is from $\psr$ to $\lhc$.

To see the proof, we consider the four kinds of transitions between phases.

\begin{itemize}
	\item  Switching from \emph{certification} phase 
to the \emph{standard} phase is possible in $\psr$ only when the 
promise set of the process in the certification phase 
becomes empty.  
Any process can non deterministically 
begin the standard phase when the certification 
of one process ends successfully. 
These conditions are the simulated in $\lhc$ by
 allowing a transition from a two phases state $(\cert, p, ((J,R), \chh), ((J',R'), \chh'))$ to 
  $(\nor, q, ((J,R), \chh), ((J',R'), \chh'))$ only when there are no memory types 
 $(\prm, -, p, -)$ in $\chh'$.  
\item The switch from \emph{standard} phase to \emph{certification} phase happens in $\psr$ 
from a capped memory.  This is simulated in $\lhc$ as follows. When entering the certification phase,  $\lhc$ duplicates the higher 
order words. 
When the last memory type in any $\chh(x)$ is not tagged by the reservation of a process $q \neq p$,  the duplicated higher order word 
accounts for the capped memory, since we do not allow insertions in between during certification.  
When the last memory type in $\chh(x)$ is tagged by a reservation 
of process $q \neq p$, then we add a new simple word $\# (\msgg, -, q, \{\})$ at the end of the duplicated higher order  
word. This respects the semantics of reservation by a process $q \neq p$. 
Thus, the capped memory during certification of $\psr$ is simulated in $\lhc$ by disallowing insertions inside a higher order word, and making explicit the reservations of a process.

\item Once we are in a phase an continue in that phase,   
the proof in both directions 
is done by showing that each instruction simulated 
in $\psr$ can be simulated by the corresponding rule 
in $\lhc$ preserving the invariants, and conversely.  
\end{itemize}

The first direction from $\psr$ to $\lhc$ is done as follows.  For each transition by a process $p$ on an instruction in $\psr$, we show that we can simulate  the same instruction in $\lhc$.

\begin{enumerate}
\item Consider the read $\rd(x,v)$ rule in $\psr$. 
In $\lhc$, the read rule updates $\ptr(p,\chh(x))$ in such a way that 
$\chh(x)$ is faithful to $M(x)$ and the view $\mathsf{V}$. In case the read operation 
in $\psr$ uses a message whose $\too$ time stamp is not the local view of any process, the corresponding memory type may or may not be 
present in $\chh(x)$ due to lossiness. Considering the case when this memory type is not lost, it is used exactly in the same manner as the respective message in $\psr$. {\bf{Rule 1}} from Figure \ref{ps-program_sem} handles this.  \\

\item Consider the $\wt(x,v)$ rule in $\psr$. In $\lhc$, the write rule either appends  memory types or adds simple words 
to $\chh(x)$ in the \emph{standard} phase, and appends 
the memory type at the end of $\chh_x$ in a \emph{certification} phase
due to the capping of memory. 
$\chh(x)$ is faithful to $M(x)$ and $\mathsf{V}$ in these simulations.  Mapping memory types in $\chh(x)$ to $M(x)$, 
the relative ordering of the new memory type which gets added with respect to existing memory types in $\chh(x)$  is exactly same  as the order  
the newly added message has, with respect to others in $M(x)$ in either phase.  
{\bf{Rules 3,4}} in 
Figure \ref{ps-program_sem} handles this.  

A $\wt(x,v)$ rule can be done in $\psr$ during 
a \emph{certification} phase by splitting a promise, or 
in \emph{standard} phase for the fulfilment of a promise. These cases are handled respectively in $\lhc$ by (1) inserting a new message immediately  preceding a promise  in $\chh_x$, and (2)replacing 
a promise memory type  $(\prm, -, -, -)$ with a message memory type 
$(\msg, -, -, -)$ and updating the pointer of $p$ in each case. 
{\bf{Rule 2}} in  Figure \ref{ps-program_sem} handles these cases.\\

\item Consider the $\upd(x,v_r, v_w)$ rule in $\psr$. In $\lhc$, 
the RMW rule appends memory types to 
a simple word. The memory type corresponding to the 
message $m$ in $M(x)$ on which RMW is done, if available in $\chh(x)$, will be the   rightmost in a simple word (right to a $\#$) in the 
\emph{standard} phase, ahead of $\ptr(p, \chh_x)$, while 
in the  \emph{certification} phase, this will be the rightmost 
symbol in $\chh_x$ due to the implementation capped memory. 
 The memory type which is appended to $\#$ after moving 
 $m$ to the left of $\#$, corresponds to the new addition,  right adjacent 
 to $m$ in $M(x)$.  The append 
operation captures the adjacency of the new message added 
to $M(x)$ with respect to the one on which RMW is performed. 
This results in $\chh(x)$ being faithful to $M(x)$ and  view $\mathsf{V}$. 
{\bf{Rules 8, 9}} in Figure \ref{ps-program_sem} handle these cases. 

An $\upd(x,v_r, v_w)$ rule can be done in $\psr$ during 
a \emph{certification} phase by splitting a promise, or 
in \emph{standard} phase for the fulfilment of a promise. These cases are handled respectively in $\lhc$ by (1) inserting a new message immediately  preceding a promise  in $\chh_x$, and (2)replacing 
a promise memory type  $(\prm, -, -, -)$ with a message memory type 
$(\msg, -, -, -)$ and updating the pointer of $p$ in each case. 
{\bf{Rule 10}} in  Figure \ref{ps-program_sem} handles these cases.\\

\item Next consider the promise rule in $\psr$ by a process $p$. Promises take place only in the \emph{standard} phase. 
The simulation in $\lhc$  is similar to the write rule. A new memory type  $(\prm, v,p, \{\})$ is added to $\chh(x)$
at a position $>\ptr(p, \chh(x))$ with 
  an empty pointer set. This  corresponds to  the fact that 
 the process $p$ which makes the promise has its local view smaller than the $\too$ time stamp of the promise. Promise memory types 
 are not lost from $\chh(x)$.  {\bf{Rule 12}} in Figure \ref{ps-program_sem} handles this. 
 
 Notice that When the promise is fulfilled,  $p$ is added to the pointer set  
 of $(\prm, v, p,S)$ and the $\prm$ memory type 
is replaced with the  $\msgg$ memory type. This corresponds to removing a promise 
from the promise set of $P$. As already explained above, {\bf{rules 2, 10}} 
in Figure \ref{ps-program_sem} handle this. 
Thus, $\chh(x)$ is faithful 
also to the promise set. If there is a promise which cannot be fulfilled in $\psr$, the corresponding promise memory type 
will stay in $\chh(x)$, disallowing to reach a  
 state $(\nor, -, -, -)$ in $\lhc$. \\

 \item  Let us now look at reservations in $\psr$. These are done 
 in the \emph{standard} phase.  
The reserve rule done by a process $p$ 
 reserves a timestamp interval 
  adjacent to an existing message $m$ in $M(x)$. 
  To simulate this in $\lhc$, if the memory type corresponding to $m$ is available 
  in $\chh(x)$, then it will be the rightmost in a simple word 
  of $\chh(x)$. The reservation is done by tagging this memory type as a reservation by $p$, thereby blocking this memory type   
 from participating in any RMW.  {\bf{Rule 6}} in Figure \ref{ps-program_sem} handles this.

Similar to splitting promise intervals in a \emph{certification} phase 
in $\psr$, reservation intervals are also allowed to be split 
 in $\psr$ during certification.This can happen as part of a write 
 or an update in $\psr$. 
   To simulate this in $\lhc$, we allow a process $p$ to make use of its reservation. 
   
  \noindent
$\bullet$ {\it Splitting a reservation.}
$\ch \underset{j}{\stackrel{SR}{\hookleftarrow}} m$ is defined only if 
$\ch[j]$ is of the form $(r',v',q,S,p)$. Let $\ch'$ be the higher order word defined as $\delete(\ch,p)$.  Then,  the extended higher order   $\ch \underset{j}{\stackrel{SR}{\hookleftarrow}} m$ is defined as    $\ch'[1,j-2] \cdot    (r',v',q,S)\cdot \#  (r,v,p,\{p\},p) \cdot \ch'[j+1,|\ch|]$. Observe that the new message $ (r,v,p,\{p\},p)$ is added to the right of the position $j$ which corresponds to the slot that has been reserved by $p$. This special splitting rule will be used during the certification phase. This will allow the process $p$ to  use   the reserved slots. Recall that it is not allowed to add memory types in the middle of the higher order words (other than the reserved ones) during the certification phase.

   This is achieved by removing $p$ from its pointer set and replacing $\#(r', v', 	q, S, p)$
 in $\chh_x$ with $(r',v',q,S)\#(r,v,p, \{p\},p)$.  {\bf{Rules 5, 11}} 
 in Figure \ref{ps-program_sem} handle these. \\
 
\item Cancelling a reservation in $\psr$ frees up the reserved timestamp interval in $M(x)$. 
To simulate this in $\lhc$,  if the corresponding tagged memory type is available in $\chh(x)$, then it is unblocked from doing RMW by removing the reserve tag of $p$ from it. {\bf{Rule   7}} 
in Figure \ref{ps-program_sem} handles this. \\

 \item Finally, SC fence rules in $\psr$ updates the views 
  of the performing process to the most recent one.  
  To simulate this in $\lhc$, a dummy process $g$ simulating the global view is added. We 
    update the pointer sets of $p$ (or $g$) depending on
 which one is ahead.  {\bf{Rule 13}} in Figure \ref{ps-program_sem} 
 handles this. 
    
 \end{enumerate}
Thus, for every run that reaches a consistent state in $\psr$ with local process states $(J,R)$, there is a run in $\lhc$ that reaches
a two phases state $(\nor, -, ((J,R),\lst), -)$ following the same sequence of instructions.  Note that {\bf{rules 1- 13}}  in Figure \ref{ps-program_sem} are mutually non interfering since they apply 
to distinct rules and phases. Thus, for each rule in $\psr$ we have a unique rule in $\lhc$ from Figure \ref{ps-program_sem} which simulates that while the $\psr$ is any of the phases, \emph{standard} or 
\emph{certification}.

The converse argument from $\lhc$ to $\psr$ is similar. 
The crucial argument is the memory types  in each $\chh(x)$ form a  
subset of $M(x)$, which has all the ``necessary'' messages (promises, non empty memory types in non redundant simple words).    
Lossiness of empty memory types/redundant simple words  in $\chh(x)$ can be interpreted as messages 
which are skipped over, or which have already been used in 
$M(x)$. It is easy to see that any sequence of transitions of instructions 
in $\lhc$ can be simulated by exactly the same instruction sequence  
in $\psr$.

\subsection{Proof of Lemma \ref{computing-pre}}
\label{app:pre}
Recall that $\tt{minpre}(c)$ is defined as $\mathtt{min}(\mathtt{Pre}(\upclos{\{c\}}) \cup \upclos{\{c\}})$. 
In the following, we show  the set $\tt{minpre}(c)$ is effectively computable for any two-phases K-$\lhc$ state $c$. 
To do that, we will use  a transducer based approach. Lemma  \ref{computing-pre} is an immediate consequence of Lemma \ref{min-reg}, Lemma \ref{reg-up}, Lemma \ref{product-trans}, and Lemma 
\ref{trans-qs}.

Lemma \ref{reg-up} shows the regularity of $\upclos{\{c\}}$, 
Lemma \ref{trans-qs} and \ref{product-trans} show the regularity of 
$\mathtt{Pre}(\upclos{\{c\}})$, while Lemma \ref{min-reg} shows the effective computability of $\mathtt{min}(\mathtt{Pre}(\upclos{\{c\}}) \cup \upclos{\{c\}})$.
\smallskip

\noindent
{\bf Finite-state automata}. A finite state automaton $A$ is a tuple $A=(\Sigma_1,P,I,E,F)$, where $\Sigma_1$ is the finite input  alphabet, $P$ is a finite set of states,
$I,F\subseteq P$ are subsets of initial and final states, and $E\subseteq P\times\Sigma_1\times P$
is a finite set of transition rules. A word $u=a_1\dots a_n$ is accepted by $A$ if there is a run $p_0 \act{a_1} p_1 \act{a_2} \dots p_{n-1} \act{a_n}p_n$ such that $p_0\in I$, $p_n\in F$ and $(p_{i-1},a_i,p_i)\in E$. We use $L(A)$ to denote the set of words accepted by $A$.

\smallskip
 
 \noindent
 {\bf Regular set of two-phases K-$\lhc$-states} 
We use an encoding of two-phases K-$\lhc$ states as words over a finite alphabet, and 
use this encoding to define a regular set of two-phases K-$\lhc$ states. 
Let $\lst$ denote $((J,R), \chh))$.  Consider a two-phases K-$\lhc$ state $c=(\nor, p, \lst, \lst')$ or 
 $(\cert, p, \lst, \lst')$.   
  Recall that $(J, R)$ gives the local instruction 
   labels of all processes and the local register values. Assuming 
 we have locations $x_1, \dots, x_m$, $\chh=(\ch_{x_i})_{1 \leq i \leq m}$. 
  The state $c$ is encoded 
by the word $w=\nor \$ p \$ J \$R \$_0 \ch_{x_1}\$_1\dots \ch_{x_m}\$_m \ddagger J'\$'_0 \ch'_{x_1} \$' \ch'_{x_2} \dots 
\ch'_{x_m}\$'_m$ or 
$\cert \$ p \$ J \$R \$_0 \ch_{x_1}\$_1\dots \ch_{x_m}\$_m\ddagger J'\$'_0 \ch'_{x_1} \$' \ch'_{x_2} \dots 
\ch'_{x_m}\$'_m$
where 
$J$  defines the local state of each process, and 
the $\ddagger, \$_i, \$'_i$'s act as delimiters between the contents of the higher order words.  
$w$ is denoted $Enc(c)$.  
$w$ is a correct encoding, if, on ``decoding'' $w$, we obtain a 
unique $decode(w)=(\nor, p, \lst, \lst')$ or 
$(\cert, p, \lst, \lst')$
 where,  each $\ch_x \in (\Sigma^* \# (\Sigma \cup \Gamma))^+$  
 appearing in $\lst$ 
  satisfies the invariants $({\bf{Inv1}})$ and $({\bf{Inv2}})$.
      Given a set $R$ of two-phases K-$\lhc$ states, 
 let $Enc(R)$ represent the set of its word encodings.   
We say that a set  $R$ of two-phases K-$\lhc$ states is regular if and only if there is a finite state automaton that accepts $Enc(R)$. 
 
 \begin{lemma}
 Given a regular set $R$ of two-phases K-$\lhc$ states, we can effectively compute $\min{(R)}$. 
\label{min-reg}
 \end{lemma}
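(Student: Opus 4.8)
The plan is to carry out the computation of $\mini(R)$ by a purely automata-theoretic argument, working throughout on the word encoding $Enc(\cdot)$. Since $R$ is regular, fix a finite automaton $A_R$ with $L(A_R)=Enc(R)$. An element $c\in R$ is \emph{non-minimal} exactly when there is some $c'\in R$ with $c'\sqsubsetneq c$. I would (i) build a finite transducer $T_{\sqsubset}$ recognising the pairs $\bigl(Enc(c'),Enc(c)\bigr)$ with $c'\sqsubsetneq c$; (ii) intersect its first tape with $Enc(R)$ and project onto the second tape to obtain the regular set of non-minimal encodings; (iii) take $Enc(\mini(R))=Enc(R)\setminus(\text{non-minimal})$, which is regular by closure of regular languages under difference; and (iv) exploit that $\mini(R)$ is \emph{finite} (because $\sqsubseteq$ is a wqo, Lemma~\ref{lem:wqo}) to extract its members effectively from the resulting automaton.

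\textbf{The transducer for the order.} The technical heart is the construction of $T_{\sqsubset}$, and this is where the structure of $\sqsubseteq$ on two-phases $K$-$\lhc$ states must be unpacked. Reading the two encodings synchronously, $T_{\sqsubset}$ first enforces, letter by letter on the prefix before the higher-order-word blocks, the equalities required by the definition of $\sqsubseteq$ on states: identical phase flag ($\nor$ or $\cert$), identical active process, and identical $(J,R)$ and $(J',R')$ components. On each higher-order-word block $\ch_{x}$ versus $\ch'_{x}$ it implements the embedding relation of the paper: it guesses the strictly increasing matching $f$ on simple words (the unmatched simple words of the larger word being the redundant ones that are dropped), and for each matched pair it checks the subword relation $u_i\sqsubseteq v_{f(i)}$ on the $\Sigma^{*}$ part and the equality $a_i=b_{f(i)}$ of the final symbol; the subword relation on $\Sigma^{*}$ is rational and hence realisable by a finite transducer. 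Crucially, condition~(3) of the definition of $\sqsubseteq$ demands that the two words carry the \emph{same number} of promise memory types $(\prm,-,-,-)/(\prm,-,-,-,-)$; since we are in $K$-$\lhc$, this count never exceeds $K$, so it can be tracked by two bounded counters up to $K$ and tested for equality while staying finite-state. Finally, to capture $\sqsubsetneq$ rather than $\sqsubseteq$, $T_{\sqsubset}$ insists that at least one genuine drop occurs (a symbol or a simple word present only in the larger word); this is legitimate because subword embedding is a partial order, so strictness coincides with non-equality.

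\textbf{Projection, difference and extraction.} Rational relations are closed under composition with a regular language on one tape and under projection, so the non-minimal set $\{\,Enc(c)\mid \exists\,w'\in Enc(R),\ (w',Enc(c))\in L(T_{\sqsubset})\,\}$ is regular and effectively computable from $A_R$ and $T_{\sqsubset}$. Subtracting it from $Enc(R)$ yields a finite automaton for $Enc(\mini(R))$. By Lemma~\ref{lem:wqo} the relation $\sqsubseteq$ is a wqo on two-phases $K$-$\lhc$ states, hence $\mini(R)$ is finite and $Enc(\mini(R))$ is a finite regular language. Finiteness of a regular language is decidable (trim the automaton and test for the absence of a cycle on a path from an initial to a final state), and once known to be finite its members can be enumerated by a bounded search; decoding each accepted word via $decode$ returns the elements of $\mini(R)$. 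This gives effective computability.

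\textbf{Main obstacle.} The delicate step is the faithful realisation of the two-level \emph{word-of-words} embedding $\sqsubseteq$ as a single finite transducer over the flat delimiter-based encoding: one must reconcile the guessed simple-word matching $f$, the per-block subword checks, the equalities of the non-$\chh$ components, and the global promise-preservation constraint. The last of these would \emph{not} be finite-state in general, and it is precisely the bound $K$ on the number of promises that collapses it to a bounded counter and makes the whole construction go through; everything else reduces to standard closure properties of rational relations and regular languages.
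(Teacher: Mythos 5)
Your proposal is correct, but it takes a genuinely different route from the paper's own proof. The paper argues combinatorially: using the well-formedness invariant (\textbf{Inv1}) it observes that any cycle of the automaton $A$ accepting $Enc(R)$ can only be labelled by empty memory types, and it cuts such cycles to show that every accepted word dominates (w.r.t.\ $\sqsubseteq$) another accepted word whose length is polynomially bounded in the size of $A$; since the encoding order is antisymmetric, every $\sqsubseteq$-minimal element of $R$ must itself have an encoding of that bounded size, so $\mini(R)$ is obtained by enumerating the boundedly many candidates, testing membership in $L(A)$, and keeping the minimal ones. You instead stay entirely within closure properties of rational relations: a transducer for the strict order, the image of $Enc(R)$ under it, a difference of regular languages, and an appeal to Lemma~\ref{lem:wqo} for finiteness of the result. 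Your technical points check out: the two-level embedding is rational over the flat encoding (nondeterministically skipping whole simple words, subword-checking inside matched ones, enforcing equality of the distinguished last symbols and of the non-$\chh$ components); the promise-count condition is finite-state via your bounded counters (legitimate in K-$\lhc$), and in fact could be handled even without the bound by simply forbidding the transducer to drop promise symbols, which is equivalent to condition $(3)$ in the presence of conditions $(1)$ and $(2)$; and strictness does reduce to ``at least one dropped symbol'' precisely because the order is antisymmetric. As for what each approach buys: the paper's proof yields an explicit polynomial bound on the size of minimal elements and avoids complementation with its exponential blow-up, whereas yours is more modular---it requires no structural analysis of cycles in $A$ (so \textbf{Inv1} is not needed) and reuses the transducer machinery the paper already deploys for predecessor computation (Lemmas~\ref{product-trans} and~\ref{trans-qs}), at the price of a determinization step that is harmless for effectiveness but costly in principle.
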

 \begin{proof}
 Let $A=(\Sigma_1,P,I,E,F)$ be the finite state automaton that accepts  $Enc(R)$. The main idea 
 to effectively compute $\min{(R)}$ 
 is to bound the size of the words accepted by $A$ that encode   minimal two-phases K-$\lhc$ states. Observe that the cycles in $A$ can be only labeled by the empty memory type.  Otherwise there will be a violation of invariant $({\bf{Inv1}})$. Now consider a word $w$ accepted by $A$. We will first construct another word $w'$ from $w$ such $ decode(w') \sqsubseteq decode(w)$ and the number of $\#e$ where 
 $e$ is an empty memory type from the subset  
  $(\msgg, -, \procset, \{\})$ of $\Sigma$ 
    or $(\msgg, -, \procset, \{\},-)$ of $\Gamma$ 
  occurring in $w'$ is polynomially bounded by the size of $A$. 
    In the following, for convenience, we use macro transitions 
  on $\#a$ rather than two separate transitions 
  on $\#$ followed by a transition for $a$. 
   
  Let us assume that $w$ is accepted by $A$ using the following  run $p_0 \act{\#a_1} p_1 \act{\#a_2} \dots p_{k-1} \act{\#a_k}p_k$. 
    Let $i_1 <i_2 < \cdots< i_b$ be the maximal sequence of indices such that $a_{i_j}$ is an empty memory type 
  $\in (\msgg, -, \procset, \{\})$ or  $(\msgg, -, \procset, \{\},-)$.
   Now if $b > |P| \cdot |\Sigma_1|$, then there are two indices $i_j$ and $i_\ell$ such that $i_j <i_\ell$, $a_{i_j}=a_{i_{\ell}}$ and $p_{i_{j}-1}=p_{i_{\ell}-1}$. Furthermore, all the symbols occurring between $i_j$ and $i_\ell$ are empty memory types 
   (from $({\bf{Inv1}})$). This means that  $p_0 \act{\#a_1} p_1 \act{\#a_2} \dots p_{i_{j}-1} \act{\#a_{i_j}} p_{i_{\ell}} \cdots p_{k-1} \act{\#a_k}p_k$ is an accepting run of  $A$ (accepting the word $w_1$). Furthermore, $ decode(w_1) \sqsubseteq decode(w)$. We can now proceed iteratively on $w_1$ in order to obtain the word $w'$ that is accepted by $A$, $ decode(w') \sqsubseteq decode(w)$, s.t.  the number of $\#e$, with $e$ an empty memory type 
   from $\Sigma \cup \Gamma$ 
   occurring in $w'$ is bounded by $|P| \cdot |\Sigma_1|$. Observe that the 
      number of $\#b$ where $b$ is a non empty memory type from $\Sigma \cup \Gamma$
   occurring in $w'$ is also bounded by $|\procset|$+K+1 : these are 
   either K promise memory types $(\prm, -, -, -)$ or 
   $(\prm, -, -, -,-)$  
      or those of the form 
   $(\msgg, -, -, S)$ or 
   $(\msgg, -, -, S,-)$
    where $S \neq \emptyset$). For the latter, we have a  bound of $|\procset|+1$. This  
   comes  from ${\bf{Inv1}}$ since each process in $\procset \cup \{g\}$ appears in a unique pointer set.  
            Thus, the number of $\#e$ where $e \in \Sigma \cup \Gamma$ 
   occurring in $w'$ is polynomially bounded by the size of $A$.
 
 Now from $w'$ we will construct another word $w''$ accepted by $A$ and such that $ decode(w'') \sqsubseteq decode(w')$ and $|w''|$ is  polynomially bounded by the size of $A$. Let $\rho:= g_0 \act{\#b_1} g_1 \act{\#b_2} \dots g_{t-1} \act{\#b_t}g_t$ be the run of $A$ accepting $w'$. Let $i_1 <i_2 < \cdots< i_r$ be the maximal sequence of indices such that $b_{i_j} \in \Sigma \cup \Gamma$.
  Observe that $r$ is polynomially bounded by the size of $A$ as we have shown previously. Assume $i_0=1$ and $i_{r+1}=t$. Now we can  iteratively remove any cycle between two indices $i_{f}$ and $i_{f+1}$ in $\rho$ that is only labeled by empty memory types  from 
  $\Sigma$
  to obtain $w''$ satisfying the previous conditions.
 \end{proof}

 \begin{lemma}
 Given a regular set $R$ of K-$\lhc$ states, the set $R \uparrow$ is also regular.
 \label{reg-up}
 \end{lemma}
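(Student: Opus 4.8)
The plan is to give an explicit automaton-theoretic construction, working throughout with the word encoding $Enc(\cdot)$ of two-phases K-$\lhc$ states and exploiting the fact that regular languages are effectively closed under images of finite-state transductions. Fix a finite automaton $A$ with $L(A)=Enc(R)$. First I would unwind what $c\sqsubseteq c'$ means at the level of encodings. By the componentwise definition of $\sqsubseteq$ on two-phases states, $c$ and $c'$ must share the same mode flag $\pi\in\{\nor,\cert\}$, the same active process $p\in\procset$, and the same $(J,R),(J',R')$; only the higher order words may grow. On each coordinate, writing $\ch=u_1\#a_1\cdots u_k\#a_k$ and $\ch'=v_1\#b_1\cdots v_m\#b_m$ with $u_i,v_j\in\Sigma^*$ and $a_i,b_j\in\Sigma\cup\Gamma$, the relation $\ch\sqsubseteq\ch'$ requires a strictly increasing $f$ with $u_i\sqsubseteq v_{f(i)}$ (scattered subword on $\Sigma^*$), $a_i=b_{f(i)}$ (exact match of the last symbol of each retained simple word), and the same number of $\prm$ memory types in $\ch$ and $\ch'$. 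Moving up from $c$ to $c'$ thus amounts to (i) inserting extra symbols into the prefix parts $u_i$, and (ii) inserting whole new simple words among the $v_j$ outside the image of $f$.

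Then I would build a finite-state transducer $T$ over the encoding alphabet realizing exactly this relation. On the header of the encoding (the flag $\pi$, the process $p$, the local states $J$, the registers $R$, and the delimiters $\$$, $\ddagger$, and the $\$_i,\$'_i$), $T$ acts as the identity, since $\sqsubseteq$ forces these to coincide. On the higher-order-word coordinates $T$ copies every input symbol verbatim (each retained $u_i$ symbol, each $\#$ separator, and each last symbol $a_i$), and may, at any point inside a simple word, output extra empty memory types from $\Sigma$, or, between simple words, output a complete redundant simple word (a fresh $\#$ followed by empty memory types terminated by an empty last symbol). Because every inserted piece consists only of non-promise (empty) memory types — of the form $(\msgg,-,-,\{\})$ or $(\msgg,-,-,\{\},-)$ — the transducer neither creates nor destroys a $\prm$-symbol, so condition (3) is preserved automatically and no counter is required. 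Hence $T$ has finitely many control states and $Enc(\upclos{R})=T(L(A))=\{w':\exists w\in L(A),\ (w,w')\in T\}$.

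Finally, since the image of a regular language under a finite-state transduction is effectively regular, $Enc(\upclos{R})$ is regular, i.e.\ $\upclos{R}$ is a regular set of two-phases K-$\lhc$ states. As a sanity check one may note, via Lemma~\ref{lem:wqo}, that $\sqsubseteq$ is a well-quasi order, so $\upclos{R}$ has a finite minor and equals a finite union $\bigcup_i \upclos{\{c_i\}}$; each singleton upward closure is the language of an insertion automaton, which gives an alternative, non-constructive-looking proof of regularity but does not exploit the automaton $A$ directly.

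I expect the main obstacle to be certifying that $T$ really is finite-state — specifically, arguing that the promise-count constraint (3) does not force an unbounded counter. The resolution is the observation above: the up-direction can only add empty memory types and redundant simple words, all of which are promise-free, so promises are copied one-for-one and the constraint holds by construction rather than by counting (the bound $K$ on promises is not even needed here). A secondary point is to check that $T$ emits only encodings of \emph{well-formed} higher order words, respecting \textbf{Inv1}/\textbf{Inv2} and the $\Sigma^*\#(\Sigma\cup\Gamma)$ shape of simple words; this is ensured by restricting inserted material to empty memory types and placing the $\#$ separators correctly, and, if desired, by intersecting $T(L(A))$ with the regular set of valid encodings.
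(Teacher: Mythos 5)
Your proposal is correct and is essentially the paper's own proof: the paper constructs the automaton for $Enc(\upclos{R})$ by replacing each (macro) transition of $A$ with one that additionally allows empty memory types of $\Sigma$ around the copied symbols and redundant simple words of the form $(w\#b)^*$ after each simple-word terminator (plus a loop of empty memory types at the initial state), which is precisely the effect of composing $A$ with your insertion transducer. The difference is purely presentational---in-place macro-transitions versus taking the image under a rational transduction---and both arguments rest on the same key observation that, for well-formed states, going up in $\sqsubseteq$ can only insert promise-free, empty-pointer-set material, so no counting is needed.
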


\begin{proof}
 Let $A=(\Sigma_1,P,I,E,F)$ be the finite state automaton that accepts  $Enc(R)$. To construct a finite state automaton $A'$ that accept $Enc(R\uparrow)$, we proceed as follows: The automaton $A'$ is constructed by replacing  each macro transition $(p,ba,p') \in E$ labeled by the letter $a \in \Sigma$, $b \neq \#$ 
 by the following macro-transition $(p, e^* \cdot ba \cdot e^* ,p')$ in $A'$, where $e$ is over the empty memory types of $\Sigma$.
   Furthermore, any macro transition  $(p,\#a,p') \in E$ labeled by the letter $a \in \Sigma \cup \Gamma$ 
   is replaced in $A'$ by the macro-transition $(p,  \#a \cdot (w \#b)^*,
p')$, where $w \in \Sigma^*$ is over the empty memory types of $\Sigma$ and $b \in \Sigma \cup \Gamma$ is an empty memory type in $\Sigma \cup \Gamma$.  We can also have a loop on 
empty memory types of $\Sigma$ on the initial state.  
 Observe that any macro-transition can be easily translated to a sequence of simple transitions  by using extra-intermediary states.
\end{proof}

\smallskip

\noindent{\bf {Rational Transducers}}.  A rational transducer $T$ is a non-deterministic finite state automaton which  outputs words on each transition. 
Formally, a \emph{rational transducer} is a tuple $T=(\Sigma_1,\Sigma_2,Q,I,E,\eta,F)$, where $\Sigma_1,\Sigma_2$ are finite input and output alphabets, $Q$ is a finite set of states,
$I,F\subseteq Q$ are subsets of initial and final states, $E\subseteq Q\times\Sigma_1\times Q$
is a finite set of transition rules, and $\eta:E \rightarrow 2^{\Sigma_2^*}$ is a 
function 
specifying a regular 
language of partial outputs for each 
transition rule (i.e., $\eta(e)$ is a regular language for all $e \in E$).  
The relation defined by $T$ contains 
pairs $(u,v)$ of input and output words,
where $u=a_1\dots a_n$ and $v=v_1\dots v_n$,
for which there is a run
$q_0 \act{a_1 \:|\: v_1} q_1 \act{a_2 \:|\: v_2} \dots q_{n-1} \act{a_n \:|\: v_{n}}q_n$
such that
$q_0\in I$, $q_n\in F$, $(q_{i-1},a_i,q_i)\in E$, 
$v_i\in \eta(q_{i-1},a_i,q_i)$.
 The set of pairs  $(u,v)$ defined by $T$  is denoted $L(T)$.

 \begin{lemma}
 Given a regular language $R$ (described by a finite-state automaton), we can easily compute a finite state automaton $A$ such that $L(A)= \{u \,|\, (u,v) \in L(T) \,\wedge\, v \in R\}$.
 \label{product-trans}
 \end{lemma}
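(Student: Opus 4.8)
The plan is to give a synchronized product construction between the transducer $T=(\Sigma_1,\Sigma_2,Q,I,E,\eta,F)$ and an automaton recognizing $R$, in the spirit of the classical fact that the inverse image of a regular language under a rational relation is regular. First I would fix a finite-state automaton $B=(\Sigma_2,P,I_B,E_B,F_B)$ with $L(B)=R$ (this exists since $R$ is regular over the output alphabet $\Sigma_2$). The key observation is that along any accepting run of $T$ on an input word $u=a_1\cdots a_n$, the output $v=v_1\cdots v_n$ is produced in blocks, one block $v_i\in\eta(q_{i-1},a_i,q_i)$ per input letter; to enforce $v\in R$ it then suffices to thread a single accepting computation of $B$ through these output blocks.

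Concretely, I would build $A$ over the input alphabet $\Sigma_1$ with state set $Q\times P$, initial states $I\times I_B$, and final states $F\times F_B$. For each transition rule $(q,a,q')\in E$ of $T$ and each pair $(p,p')\in P\times P$, I would place a transition $((q,p),a,(q',p'))$ in $A$ precisely when there is some word $w\in\eta(q,a,q')$ that drives $B$ from $p$ to $p'$. Intuitively, reading the single input letter $a$ in $A$ simultaneously advances the $T$-component from $q$ to $q'$ and consumes, in the $B$-component, an entire output block $w$ associated with that transition.

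The central step is to argue that this membership test on transitions is effective. Writing $B_{p\to p'}$ for the automaton obtained from $B$ by taking $p$ as the unique initial state and $p'$ as the unique final state, the condition ``there is $w\in\eta(q,a,q')$ driving $B$ from $p$ to $p'$'' is exactly non-emptiness of the intersection $\eta(q,a,q')\cap L(B_{p\to p'})$. Since $\eta(q,a,q')$ is a regular language by the very definition of a rational transducer, and $L(B_{p\to p'})$ is regular, this intersection is regular and its emptiness is decidable. Hence each candidate transition of $A$ can be decided individually, and $A$ is effectively computable.

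Correctness then follows by matching runs in both directions. An accepting run of $A$ on $u$ projects onto an accepting run of $T$ producing output blocks $v_i$ together with an accepting run of $B$ on $v=v_1\cdots v_n$, witnessing both $(u,v)\in L(T)$ and $v\in R$. Conversely, any pair $(u,v)\in L(T)$ with $v\in R$ yields an accepting run of $T$ whose output blocks $v_i$ can be synchronized with an accepting run of $B$ on $v$, split at the block boundaries, giving the intermediate states $p_0,\ldots,p_n$ needed for an accepting run of $A$ on $u$. I do not anticipate a serious obstacle; the only point requiring care is that $\eta$ returns regular \emph{languages} of partial outputs rather than single words, and this is precisely what the intersection-emptiness test above handles.
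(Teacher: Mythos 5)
Your proposal is correct: the product construction over $Q\times P$, with a transition $((q,p),a,(q',p'))$ whenever $\eta(q,a,q')\cap L(B_{p\to p'})\neq\emptyset$, together with the decidability of that emptiness test, establishes exactly the claimed closure property, and both directions of your run-matching argument go through (including the splitting of an accepting $B$-run at the output-block boundaries). The paper itself offers no argument at all---its proof reads simply ``Trivial''---so your write-up supplies precisely the standard construction the authors evidently had in mind, with no divergence in approach to report.
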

 
 \begin{proof}
 Trivial.
 \end{proof}
 
\begin{lemma}
It is possible to construct a transducer $T$ that accepts any pair $(Enc(s),Enc(s'))$, with $s$ and $s'$ are two two-phases K-$\lhc$-states, such that $s'$ is reachable from $s$ in one step. 
\label{trans-qs}
\end{lemma}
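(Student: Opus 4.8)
The plan is to exhibit $T$ as a finite union of rational transducers, one per kind of one-step transition, and to invoke closure of rational relations under union. Recall that $s \to s'$ holds exactly when one of the four cases in the definition of the two-phases relation applies: a step inside the standard phase ($\xrightarrow[p]{\nor}$), a step inside the certification phase ($\xrightarrow[p]{\cert}$), the standard-to-certification switch, or the certification-to-standard switch. Each in-phase step is itself a finite disjunction over the rules of Figure \ref{ps-program_sem} together with the embedding rule $\chh \sqsubseteq \chh'$, so it suffices to build one transducer per rule and let the master transducer guess which rule fires and branch accordingly. In each branch the transducer copies verbatim the components not touched by the rule: the phase flag and active process (subject to the appropriate update, with $p=p'$ in cases $1$--$3$ and $p'$ arbitrary in case $4$), the frozen sibling state $\lst_{\cert}$ in a standard step or $\lst_{\nor}$ in a certification step, and every higher-order word $\ch_y$ for locations $y$ other than the one being edited.

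The heart of the construction is the observation that every higher-order-word operation used by the rules -- $\underset{j}{\stackrel{N}{\hookleftarrow}}$, $\underset{j}{\stackrel{E}{\hookleftarrow}}$, $\underset{j}{\stackrel{SP}{\hookleftarrow}}$, $\underset{j}{\stackrel{FP}{\hookleftarrow}}$, $\underset{j}{\stackrel{SR}{\hookleftarrow}}$, $Make$, $Cancel$, and $move$ -- is a rational relation on the word encoding of $\ch_x$. Each touches $\ch_x$ at only a bounded number of positions: the transducer scans $\ch_x$ left to right, nondeterministically guesses the edit position $j$ (and, for $move$ and the update rules, the read position as well), and emits the rewritten symbol(s) while echoing the rest. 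The side conditions ($\ch_x[j-1]=\#$, $\ch_x[j]\in\Sigma$, $\ch_x[j]$ of the form $(\prm,-,p,-)$, $\ch_x[j]$ tagged $(r',v',q,S,p)$, or a matched value $v=\ch_x[j].value$) are local and hence checkable in the finite control. The ordering constraints $i \geq \ptr(p,\ch_x)$ and $i > \ptr(p,\ch_x)$ are enforced by a single Boolean flag recording whether the scan has already passed the unique position carrying $p$ in its pointer set, and the value consumed by a $\rd$ or $\upd$ is one symbol of $\Sigma\cup\Gamma$, carried in the control from the read position to the write position. The update of $(J,R)$ is the finite-state process relation $\sigma \xrightarrow[p]{\cdot}\sigma'$ and is realized while the $J\$R$ fields are rewritten. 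The embedding rule $\chh\sqsubseteq\chh'$ is likewise rational: the transducer copies each input symbol and optionally inserts further symbols into the output, using the control to guarantee that only non-promise memory types are inserted and that the post-$\#$ anchor symbols are preserved, which is precisely clauses $(1)$--$(3)$ of the definition of $\sqsubseteq$. Finally, the certification-to-standard switch needs no word editing: it echoes both halves, flips the flag from $\cert$ to $\nor$, emits an arbitrary new active process, and checks in passing that the certification half contains no $(\prm,-,p,-)$ or $(\prm,-,p,-,-)$.

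The main obstacle is the standard-to-certification switch, which overwrites the certification half $\lst_{\cert}$ with a capped copy of the standard half $\lst_{\nor}$. A single left-to-right pass cannot duplicate $\lst_{\nor}$ across the delimiter $\ddagger$, since this would require storing an unbounded prefix of the input. Here I rely on the remark following the definition of this switch: the bulk copy is realized as a sequence of transitions, each copying a single symbol from $\chh$ into $\chh'$. Because a single symbol is finite-alphabet data, each micro-step carries across $\ddagger$ only that one symbol, held in the control, together with a bounded cursor marking how far the copy has progressed, and is therefore rational; the capping adjustment -- appending $\#(\msgg,v,q,\{\})$ exactly when the rightmost type of $\chh(x)$ is a reservation $(-,v,-,-,q)$ of some $q\neq p$ -- is a local edit at the end of each word. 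Taking the union of the transducers for all rules and all micro-steps yields $T$, and since each is a finite-state scan with bounded guessing and bounded finite memory, $T$ accepts exactly the pairs $(Enc(s),Enc(s'))$ with $s\to s'$. The construction leans on Lemma \ref{lem:inv}, which ensures that valid encodings are closed under $\to$, so $T$ never needs to emit an ill-formed $Enc(s')$.
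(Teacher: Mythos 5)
Your proposal is correct and takes essentially the same route as the paper's proof: a finite union of per-rule transducers performing local, finitely-controlled edits (with well-formedness making the $\ptr(p,\cdot)$ comparisons expressible by a Boolean flag), plus the decisive move of handling the standard-to-certification switch not as one rational relation (which it is not, being an unbounded copy across $\ddagger$) but via the symbol-by-symbol copy decomposition, exactly as the paper does with its intermediate copy phase and overline progress markers. The one point to state carefully is that your ``cursor'' recording copy progress must be stored in the word encoding of the intermediate states (as the paper's overlined symbols are), not in the transducer's finite control, since the copy position is unbounded; with that reading, your construction coincides with the paper's, and in fact you are slightly more explicit than the paper in covering the lossy rule $\chh \sqsubseteq \chh'$ and the no-outstanding-promise check on the certification-to-standard switch.
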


\begin{proof}
 Observe that the class of rational transducers are closed under union and therefore it is sufficient to construct the transducer $T$ for each transition rule. Furthermore, we always assume that the input and output tape of the transducer $T$ satisfy the two invariants  $({\bf{Inv1}})$ and $({\bf{Inv2}})$ (these can be easily specified as a regular language). The proof  is about simulating the rules in the transition system in $\lhc$ as defined in Section \ref{sec:formal}. We reproduce the rules for easy reference. 
 \subsection*{The global transition rules in $\lhc$}
 
 Given  $\Ss=(\pi, p, \lst_{\nor}, \lst_{\cert})$ and $\Ss'=(\pi', p', \lst'_{\nor}, \lst'_{\cert})$, we have $\Ss \rightarrow \Ss'$ iff one of the following cases hold:
  \begin{itemize}
 \item[(a)] {\bf During  the standard phase.} $\pi=\pi'=\nor$, $p=p'$, $\lst_{\cert}=\lst'_{\cert}$ and $\lst_{\nor} \xrightarrow[p]{\nor}\lst'_{\nor}$. This corresponds to a simulation of a standard step  of the process $p$.
 
  \item[(b)] {\bf During the certification phase.} $\pi=\pi'=\cert$, $p=p'$, $\lst_{\nor}=\lst'_{\nor}$ and $\lst_{\cert} \xrightarrow[p]{\cert}\lst'_{\cert}$. This corresponds to a simulation of a certification step  of the process $p$.
 
    \item[(c)] {\bf From the standard  phase to the certification phase.} $\pi=\nor$, $\pi'=\cert$, $p=p'$, $\lst_{\nor}=\lst'_{\nor}= (({\sf J}, {\sf R}), \chh)$,  and $\lst'_{\cert}$ is of the form $(({\sf J}, {\sf R}), \chh')$ where for every $x \in \varset$, $\chh'(x)=\chh(x) \# (\msgg,v,q,\{\})$ if  $\chh(x)$ is of the form $w \cdot \# (-,v,-,-,q)$ with $q \neq p$, and $\chh'(x)=\chh(x)$ otherwise. This corresponds to the copying of the standard $\lhc$  state to the certification $\lhc$ state in order to check if the set of promises made by the process $p$ can be fulfilled. The  higher order word $\chh'(x)$ (at the beginning of the certification phase) is almost the same as $\chh(x)$ (at  the end of the standard phase) except when the rightmost memory type $(-,v,-,-,q)$ of $\chh(x)$  is tagged by a reservation of a process $q \neq p$. In that case, we append the memory type $ (\msgg,v,q,\{\})$  at the end of $\chh(x)$ to obtain $\chh'(x)$. Note that this is in accordance 
     to the definition of capping memory before going into certification: to cite, (item 2 in capped memory of  \cite{promising2}), a cap message 
     is added for each location unless it is a reservation made by the process going in for certification.

  \subsection*{Copying $\chh$ to $\chh'$  symbol by symbol}   
  We can implement copying of $\chh$ to $\chh'$ by copying symbol by symbol 
  as follows. Consider any $\chh(x)$.  Let $\chh=(a_x W_x)_{x \in \varset}$ where $\chh(x)=a_xW_x \in (\Sigma^* \# (\Sigma \cup \Gamma))^*$, $|a_x|=1$. Define the function $\mathsf{copy}$ on the two phases $\lhc$ state $(\nor, p, ((J,R), (a_xW_x)_{x \in \varset}), -)$, and then recursively  to subsequent states 
  until we end up in $(\cert, p, ((J,R),\chh), ((J,R), \chh))$.

      The $\mathsf{copy}$ function is defined recursively as follows. 
  \begin{itemize}
  \item[(Base)] $\mathsf{copy}(\nor, p, ((J,R), (a_xW_x)_{x \in \varset}), -)= 
  (cc, p, ((J,R), (\overline{a}_x W_x)_{x \in \varset}), ((J,R),(a_x)_{x \in \varset}))$. This is copying the  first 
  symbol of each $\chh(x)$. $cc$ is an intermediate phase used only in copying. Notice that the over lined symbol shows the progress of copying, one symbol each time. 
  \item[(Inter)] Next, we copy subsequent symbols.  $\mathsf{copy}(cc, p, ((J,R), (\overline{\alpha} a_x U_x)_{x \in \varset}), ((J, R), (W_x)_{x \in \varset}))$ is defined as  
  $(cc, p, ((J,R), (\overline{\alpha a}_x U_x)_{x \in \varset}), ((J, R), (W_xa_x)_{x \in \varset}))$. 
  \item[(Last)] Finally, when all higher order words have been copied, we move from $cc$ to $\cert$. When a higher word has been completely copied, it has the form $\overline{\alpha}$, where $\alpha \in (\Sigma^* \# \Gamma)^+$. Then we define 
    $\mathsf{copy}(cc, p, ((J,R), (\overline{\alpha}_x)_{x \in \varset}), ((J, R), (W_x)_{x \in \varset}))$  as $(\cert, p, ((J,R), (\alpha_x)_{x \in \varset}), ((J, R), (W_x)_{x \in \varset}))$, by removing the overline, and having the phase $\cert$. 	
 \end{itemize} 
     If the last symbol $a_x$ in $\chh(x)$ is of the form $(-,v,-,-,q)$,      for $q \neq p$, then $\mathsf{copy}$ appends 
     $a_x\#(\msgg, v, q, \{\})$ instead of just $a_x$ in (Inter).

   \item[(d)] {\bf From the certification phase to standard phase.} $\pi=\cert$, $\pi'=\nor$,  $\lst_{\nor}=\lst'_{\nor}$,   $\lst_{\cert}=\lst'_{\cert}$, and  $\lst_{\cert}$ is of the form $(({\sf J}, {\sf R}), \chh)$ with $\chh(x)$  does not contain any memory type of the form $(\prm,-,p,-)$/$(\prm,-,p,-,-)$ for all $x \in \varset$ (i.e., all  promises made by  $p$ are  fulfilled).

 \end{itemize}

\paragraph{Description of the Transducer} 
   We consider 4 cases based on the 4 cases we have 
   in the transition rules (a)-(d) as above. 
\begin{enumerate}
	\item We first consider the case when $s$ and $s'$ have the same phase ($\nor$ or $\cert$).  If the location involved in the instruction is $x_i$, then 
   the transducer copies all $\ch_{x_j}$, $j \neq i$ as is. 
   For $\ch_{x_i}$,    if the phase we have 
 in $Enc(s)$ is $\nor$, then the transducer copies $\ddagger$ 
 as well as all symbols after that in the output, while 
 if the phase we have in $Enc(s)$ is $\cert$, the the transducer copies $\ddagger$ 
 as well as all symbols before that in the output.  This is common 
 to all items below and we will not mention it separately.

\begin{enumerate}
\item 	Consider  a  ${\tt{Read}}$ instruction of the form $\lambda: \$r=x_i$ of the process $p$. Then the transducer will first guess the value $v$ that will be read and update the local states of the processes (as an output). The only change that the transducer will do concerns the $i$-th higher order word $\ch_{x_i}$.
 For each symbol that the transducer reads on the input tape of $\ch_{x_i}$ before $\ddagger$, it outputs the same symbol. Once the symbol pointed by the process $p$ is read on the input tape, the transducer will check the value of each symbol read on the input tape and if it  corresponds to $v$, the transducer will non-deterministically add $p$ to its  pointer set, otherwise it will output the same read symbol (while removing $p$ from its pointer set, which has bee read, if needed).  

\item 	Consider  a  ${\tt{Write}}$ instruction of the form $\lambda: x_i= \$r$ of the process $p$. Then the transducer will first update the local states of the processes (as an  output). The only change that the transducer will do concerns the $i$-th higher order word $\ch_{x_i}$. 
 For each symbol that the transducer reads on the input tape of $\ch_{x_i}$,  it outputs the same symbol. Once the symbol having $p$ in its pointer set is read on the input tape, the transducer will  output the same read symbol (while removing $p$ from the pointer set). When the transducer reads a symbol after $\#$,  it can decide to output the new message corresponding to the write instruction and after that, go on by outputting any read symbol.

\item 	The case of RMW is very similar to the case of a ${\tt{Write}}$ instruction of the process $p$. 
\item The case of a promise rule is similar to the ${\tt{Write}}$. 
The main difference is that  when the transducer reads the symbol pointed by the process $p$  on the input tape, the transducer will  output the same read symbol (without removing $p$ from the pointer set). When the transducer reads a symbol right after $\#$, 
it can decide to output the new promise message, such that the pointer set is empty. After that, it goes on by outputting any read symbol.

\item The case of a reservation rule is similar to RMW. 
\item The case of a cancel rule by a process $p$ is as follows. 
The transducer reads on symbols and outputs the same, till it finds the symbol $(-, -, -, -, p)$. 
 On reading this, it outputs $\epsilon$. After that, it goes on by outputting any read symbol.
\item The case of a fulfil rule is as follows. The transducer 
outputs what it reads till it finds a symbol having $p$ in its pointer set. 
It outputs the same symbol removing $p$ from the pointer set. Then it continues outputting the read symbol till it reads a symbol 
$(\prm, v, p, S)$. It outputs $(\msgg, v, p, S \cup \{p\})$ by adding $p$ 
to the pointer set. After that, it goes on by outputting any read symbol.

\item Consider a  ${\fence}$ instruction. In this case the transducer will output  any read symbol except the ones that have $g$ or ${p}$ in its pointer set.  If $p$ and $g$ are in the same pointer set, then  the transducer will continue outputting any read symbol. If the transducer reads the first encountered symbol that contains only $p$ or $g$ in its pointer set, then the transducer will output the same symbol without the pointer set containing either $g$ or $p$. Once  the transducer reads the second encountered symbol whose pointer set contains only $p$ or $g$ then the transducer will output the same symbol with the pointer set containing both $g$ and $p$. This is done for each $\ch_{x_i}$.
\end{enumerate}
	
\item If the phase in $Enc(s)$ is $\cert$ and that of $Enc(s')$ is $\nor$, 
then the transducer simply replaces $\cert$ by $\nor$, and 
the process $p$ by any process $q$, and copies the rest as is in the output. 
\item If the phase in $Enc(s)$ is $\nor$ and that of $Enc(s')$ is $\cert$, 
then the transducer implements the $\mathsf{copy}$ function described 
above. Each $\mathsf{copy}$ is implemented by a transducer, and 
the final result is obtained by composing all these transducers. 
Note that rational transducers are closed under composition, 
so it is possible to obtain one rational transducer 
 that achieves the effect of all the $\mathsf{copy}$ functions, starting 
 with the $\nor$ phase and ending in the $\cert$ phase. 
  Note that this is easily done, since 
in each step, the transducer progressively marks a symbol before $\ddagger$ with overline, and copies the same at the end.  

\end{enumerate}

\end{proof}

\subsection{Proof of Lemma \ref{lem:mon}}
\label{app:mono}

Consider K-$\lhc$ states $c_1, c_2$ s.t. $c_1 \rightarrow c_2$, 
	and let $c_3$ be a state s.t. $c_1 \sqsubseteq c_3$. 
	We make a case analysis based on the transition chosen. 
	
Let $c_1=(\nor, p, ((J_1,R_1), \chh_1), ((J_2,R_2), \chh_2))$, 
$c_2=(\pi, q, ((J_3,R_3), \chh_3), ((J_4,R_4), \chh_4))$, 
$c_3=(\nor, p, ((J_1,R_1), \chh_5), ((J_2,R_2), \chh_6))$, and  
$c_4=(\pi, q, ((J_7,R_7), \chh_7), (J_8, R_8), \chh_8))$. 
The case when $c_1=(\cert, p, -, -)$ is similar to the case we discuss here.   

\begin{enumerate}
\item  Consider the transition $c_1 \xrightarrow[\proc]{{\lambda:  \$r=x}} c_2$
by a read instruction  $\$r=x$ in process $p$. 
Then $\exists k \leq j$, $k=\ptr(p,\chh_1(x))$, and 
the memory type at $\chh_1(x)[j]$ has the form $(-, v, -, S)$, $v=R(\$r)$. 
$\chh_3(x)$ is obtained by updating $\ptr(p,\chh_1(x))$ to $j$, 
so that $p$ is in the pointer set $S$. Since 
 $c_1 \sqsubseteq c_3$,  there is an increasing function 
 $f$ from the positions of $\chh_1(x)$ 
 to that of $\chh_5(x)$ such that 
 $f(k) \leq f(j)$, $\ptr(p,\chh_5(x))=f(k)$ in $\chh_5(x)$ and 
 the memory type at $f(j)$ has the form 
 $(-,v,-,S'')$, $v=R(\$r)$. Indeed, one can update $\ptr(p,\chh_5(x))$ 
 to $f(j)$, obtaining a state $c_4$ from $c_3$. The local process 
 states of $c_4$ is same as that of $c_2$.    
  All higher order words $\chh_5(y)$, $y \neq x $ of $c_3$ remain unchanged in $c_4$ (and 
  all higher order words $\chh_1(y)$, $y \neq x $ of $c_1$ remain unchanged in $c_2$), hence 
  the $\sqsubseteq$ relation holds for these higher order words in $c_2, c_4$. 
      The same   function $f$ between positions of $\chh_1(x)$ and 
      $\chh_5(x)$ can be used on positions of $\chh_3(x)$   
    of $c_2$ and  $\chh_7(x)$ of $c_4$ to see that $c_2 \sqsubseteq c_4$ and $c_3 \xrightarrow[\proc]{\lambda: \$r=x} c_4$.

\item Consider the transition $c_1 \xrightarrow[\proc]{{\lambda: x = \$r}} c_2$. Then, there is a position $k$ in $\chh_1(x)$ such that 
$k=\ptr(p,\chh_1(x))$. Let the memory type at $\chh(x)[k]$ be $(-, v_1, -, S_1 \cup \{p\})$. 
After the transition, we obtain 
$\chh_3(x)$ such that $\ptr(p,\chh_3(x)) = j-1 > k$. There are 2 possibilities. 
\begin{itemize}
	\item[(a)]  $j-1, j$ form  the positions of the 2 symbols $\#, (\msgg, R(\$r), p, \{p\})$ in the newly added 
simple word in $\chh_1(x)$.  Figure \ref{wt-order} depicts this case. 
Notice that in $\chh_1(x)$, $k=\ptr(p, \chh_1(x))$, and positions $j-3, j-2$ represent 
the last two positions of a simple word. The new simple word 
is added right after this in $\chh_3(x)$, at positions $j-1, j$.

\begin{figure}[h]
\includegraphics[scale=.13]{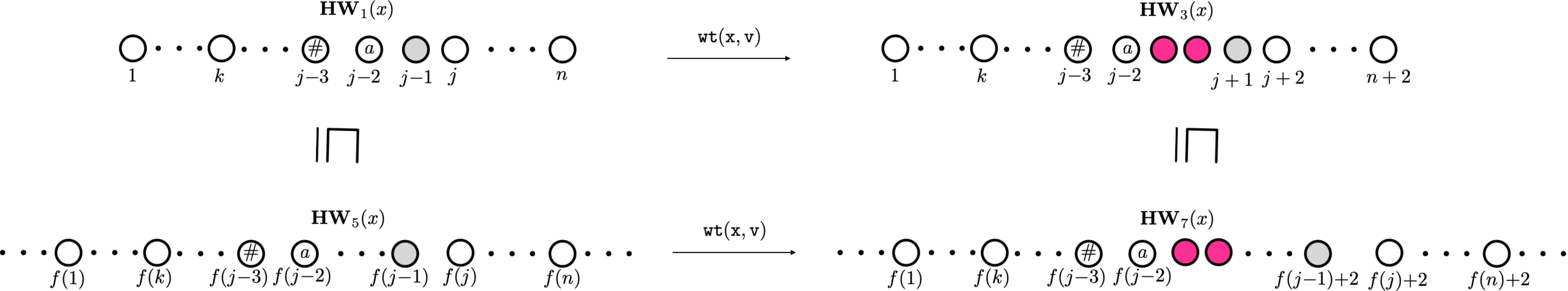}
\caption{The higher order words in $c_1, c_2, c_3, c_4$ in case(a). The two pink positions correspond to the newly added simple word. 
The positions $j-3, j-2$ have $\#$ and $a \in \Sigma \cup \Gamma$ 
denoting the end of a simple word in $\chh_1(x)$, so that a new simple word can be inserted right after. The position $k$ 
in $\chh_1(x)$ is $\ptr(p, \chh_1(x))$. 
$\chh_1(x) \sqsubseteq \chh_5(x)$ witnessed by the increasing function $f$.
$\chh_3(x), \chh_7(x) $ respectively are  obtained from $\chh_1(x), \chh_5(x)$ by the $\wt(x,v)$ transition.   
}	
\label{wt-order}
\end{figure}

 Since $c_1 \sqsubseteq c_3$, let $f$ be an 
increasing function from the positions of $\chh_1(x)$ to those 
of $\chh_5(x)$. $\chh_7(x)$ is obtained from $\chh_5(x)$ by inserting the new simple word right after position $f(j-2)$, at positions $f(j-2)+1, f(j-2)+2$. The position $f(j-1)$ 
in $\chh_5(x)$ is shifted to the right by two positions in $\chh_7(x)$. Thus, we can define an increasing function from positions 
of $\chh_3(x)$ and $\chh_7(x)$ as follows.

\begin{itemize}
\item For $i \in \{1, \dots, j-2\}$, $g(i)=f(i)$,
\item $g(j-1)=f(j-2)+1, g(j)=f(j-2)+2$, (note that $g(j-1),g(j)$ are the  two new positions in $\chh_7(x)$ corresponding to the new positions $j-1, j$ in $\chh_3(x)$), 
\item For $i \in \{j+1, \dots, n+2\}$, $g(i)=f(i-2)+2$
\end{itemize}

It is easy to see that $g$ is an increasing function between 
the positions of $\chh_3(x)$ and $\chh_7(x)$ : we know that $f(j-2) < f(j-1)$. Hence, $g(j) =f(j-2)+2 < f(j-1)+2 = g(j+1)$. 
This also gives $\chh_3(x) \sqsubseteq \chh_7(x)$.

\item[(b)]  $j-1$ is the position obtained by appending to a simple word  in $\chh_1(x)$.

\begin{figure}[h]
\includegraphics[scale=.13]{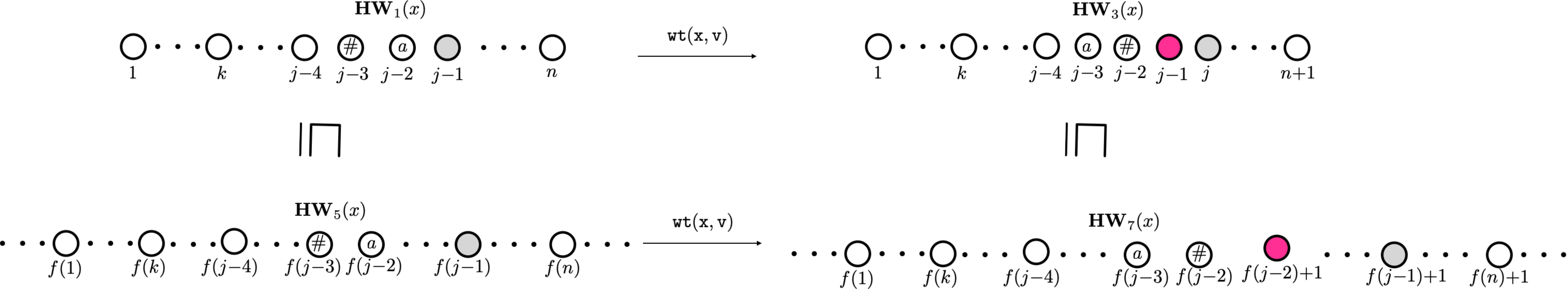}
\caption{The higher order words in $c_1, c_2, c_3, c_4$ in case(b). The  pink position in $\chh_3(x)$ corresponds to the newly added memory type, right after $\#$ at position $j-3$ in $\chh_1(x)$.  
$a \in \Sigma$ at position $j-2$ in $\chh_1(x)$ is shifted to the 
left of $\#$ in $\chh_3(x)$. The position $k$ 
in $\chh_1(x)$ is $\ptr(p, \chh_1(x))$. 
$\chh_1(x) \sqsubseteq \chh_5(x)$ witnessed by the increasing function $f$.
$\chh_3(x), \chh_7(x) $ respectively are  obtained from $\chh_1(x), \chh_5(x)$ by the $\wt(x,v)$ transition.   
}	
\label{wt-orderb}
\end{figure}

\end{itemize}

Figure \ref{wt-orderb} illustrates this case. $\chh_1(x) \sqsubseteq \chh_5(x)$ is witnessed 
by the increasing function $f$. The new memory type is added at position $f(j-2)+1$ (right next to $\#$), and all subsequent symbols 
are shifted right by one position. 
It is easy to see that $\chh_7(x)$ is obtained from $\chh_5(x)$ 
by the $\wt(x,v)$ transition. The increasing function $g$ from the 
positions of $\chh_3(x)$ to that of $\chh_7(x)$ is defined as follows.

\begin{itemize}
	\item For $i \in \{1, \dots, j-2\}, g(i)=f(i)$,
	\item $g(j-1)=f(j-2)+1$,
	\item For $i \in \{j, \dots, n+1\}$, $g(i)=f(i-1)+1$
\end{itemize}

Notice that $g$ is an increasing function: $g(j-2)=f(j-2) < f(j-2)+1=g(j-1)$, 
$g(j)=f(j-1)+1 > f(j-2)+1=g(j-1)$, and the same relationship holds for subsequent indices.

\item The case of $c_1 \xrightarrow[\proc]{\lambda: \arw(x_i, \$r_1, \$r_2)} c_2$  is similar to the write.
\item The case of a promise rule is exactly same as the write rule, as 
far as monotonicity is concerned.
\item The case of promise fulfilment is trivial for monotonicity, 
since we only shift the pointer of $p$, and update $\prm$ to $\msgg$ 
in the memory type.
\item The case of reservation follows exactly like case (b) 
of the write rule. 
\item The case of cancellation is trivial for monotonicity
since the operation does not change the length of the word. 
\item The case of $\xrightarrow[\proc]{\fence}$ is trivial
by using the observation that   
the relative ordering of the pointers $p$ and  $g$ are same 
 in $\chh_1(x)$ and $\chh_5(x)$.  $\chh_3$ and $\chh_7$ are obtained respectively 
 by moving the pointers of $p, g$ to the rightmost one (whichever it is). So the same 
 increasing function that was used for $\chh_1 \sqsubseteq \chh_5$ will work for 
 $\chh_3 \sqsubseteq \chh_7$.
 \end{enumerate}

\newpage	

\section{Source to Source  Translation and Proof of Correctness}

\subsection{Intuition for the Translation}

\paragraph{$2K$ Timestamps} We bound the number of essential events by $K$. Why do $2K$ timestamps suffice?. Intuitively timestamps are used to determine relative order between the events. We track timestamps of the view-switching messages (messages read by other processes), promises and reservations. For each view-switch there are two timestamps of consequence. The timestamp of the reading process before the read and the timestamp of the message to be read. Hence for each view switch, the comparison operation requires us to maintain two timestamps. For a promises (reservation) we maintain the timestamp of the promise (reservation). We do not explicitly store timestamps of messages that will not view switch. These messages however may be read by the same process that generated them. We keep track of whether the latest write can be read by the same process by using some thread-local state.

\paragraph{$K+n$ Contexts} It suffices to have $K+n$ contexts since we can run the processes in the order in which they generate view-switching messages. In each context, the process only depends on the essential messages generated in previous contexts. If this were not the case we would get a deadlock. We require $n$ additional contexts to initiallize each process.

\subsection{Glossary of Global and Local Variables used in the SC Program}
\label{app:ctc}

We first give a glossary of all the variables used in the code. The list contains variables global to all processes or local to a process. A small description of their role is also mentioned, which serve as invariants.  

\newcommand{\appvar}[1]{\lstinline[style=customc,basicstyle=\normalsize\ttfamily]{#1}}

\begin{enumerate}
  \item \appvar{numEE} : a global variable, initialized to 0, keeps track of the number of essential events (promises, reservations and view switches) so far. Each time an essenial event occurs, \appvar{numEE} is incremented.
  \item \appvar{numContexts} : a global variable, initialized to 0, keeps track of the number of context  switches so far. This is used in the translation to SC.
  \item \appvar{view[x].v} : a local variable, stores the value of  $x \in \varset$ in the local view of the process
  \item \appvar{view[x].t} : local variable, stores the time stamp $\in \mathsf{Time}$ of  $x \in \varset$ in the local view of the process.
  \item \appvar{view[x].l} : local variable, boolean, which is set to true when \appvar{view[x].t} is a valid timestamp, and can be used in comparisons with timestamps of other messages. 
  \item \appvar{view[x].f} : local variable, boolean. A true value indicates that \appvar{view[x].v} is recent, and can be used for reading locally.
  \item \appvar{view[x].u} : local variable, boolean. A true value indicates that the sequence of events starting from the one that resulted in the timestamp \appvar{view[x].t} till the most recent, form a chain of $\arw$ operations on $x$. Whenever a write is published, \appvar{view[x].u} is set to true. \appvar{view[x].u} is set to false on an unpublished write. On a sequence of $\arw$ operations,  \appvar{view[x].u} is left unchanged.
  \item \appvar{checkMode} : local variable, boolean. Set to true when the process is in certification phase, which means the process is making and certifying promises. 
  \item \appvar{liveChain[x]} : local variable, for each $x \in \varset$, boolean. Can be true only when  \appvar{checkMode} is true.  A true value represents that the last write done while the process is in certification phase is not a  published promise message. 
  \item \appvar{extView[x]} : local variable, for each $x \in \varset$, boolean. A true value represents that the local value \appvar{view[x].v} of the process comes from a message generated external to the certification phase.
  \item \appvar{blockPromise[x]} : a global boolean array, which for each $x \in \varset$ stores whether promises should be blocked on variable $x$. This is used in the case of $\ra$ writes when we cannot have promises on the same variable later (refer to \ps, $\ra$ accesses).
  \item \appvar{avail[x][t]} : for each $x \in \varset$, a global boolean array of length $2K+1$ corresponding to  the $2K+1$ time stamps, checks availability of a time stamp on a fresh write.
  \item \appvar{usedReservations[x][t]} : denotes whether the reservation on variable $x$ with timestamp $t$ has been used by the process during the certification check. If this not true, the reservation will be cancelled.
  \item \appvar{reserv[x][t]} : denotes whether the reservation following timestamp t on variable $x$ has been claimed, and if so which process has claimed it. 
  \item \appvar{upd[x][t]}  : for each $x \in \varset$, a global boolean array of length 
  $2K+1$ corresponding to  the $2K+1$ time stamps, checks whether a certain timestamp has been used to read in a $\arw$. 
  \item \appvar{globalTimeMap[x]}
   : global variable, for each $x \in \varset$, stores 
  a time stamp $\in \mathsf{Time}$. This is used for simulating SC Fences where this functions as the $G$ timemap from \ps.
  \item \appvar{messageStore} : This is an array of messages, where each message is of type $\mathsf{Message}$ as described in the main paper. The length of the array is $K$, the bound on the number of promises + view switches. 
  \item \appvar{messagesUsed} : a number from 0 to $K$ which keeps track of the number of populated messages in $\mathit{messageStore}$. 
  \item \appvar{messageNum} : a number from  0 to $K$ which chooses a number from the available free cells in \appvar{messageStore}.
\end{enumerate}

In addition, the \appvar{message} object stores the following data:

\begin{enumerate}
  \item \appvar{mess.var} is the shared variable on which the message has been generated
  \item \appvar{mess.t[x]} stores for each $x \in \varset$ the timestamp of $x$ in the view object stored in the message
  \item \appvar{mess.l[x]} stores for each variable $x \in \varset$, a boolean signifying whether the corresponding timestamp stored in \appvar{mess.t[x]} was one of the exact timestamps $\in \{0 ... K\}$ or an abstract timestamp.
  \item \appvar{mess.val} stores the value of the message
  \item \appvar{mess.flag} stores the promise state of the message, that is whether (1) it is has been fulfilled/is not a promise (2) if it is a promise then the process that it belongs to. \appvar{mess.flag} takes values from \appvar{0, -1, PIDs}. If it is a simple message (not a promise), \appvar{mess.flag = 0}. If it is a promise, \appvar{mess.flag} is set to the \appvar{PID} of the process which has made the promise. \appvar{mess.flag} is set to \appvar{-1} when the process has temporarily certified it in the current certification phase but will be reset tp \appvar{PID} after exiting the certification phase.  
\end{enumerate}
Next we discuss the context switching modules.
\subsection{Context Switching Modules}

\paragraph{$\mathsf{CSI}$ Context-Switch-In}
The $\mathsf{CSI}$ module  switches the process into context by setting \appvar{active} to true and incrementing \appvar{numContexts}. Finally we check \appvar{numContexts} does not exceed the context switch bound.
 
\lstinputlisting[caption=$\mathsf{CSI}$,style=customc,mathescape]{ps-csi.c}

\paragraph{$\mathsf{CSO}$ Context-Switch-Out}
The $\mathsf{CSO}$ module has two functions- (1) moving the process from normal to check mode and (2) switching the process out of context. When a process enters the CSO block, with \appvar{checkMode} set to false, it enters the `if' branch on line 2, sets \appvar{checkMode} to true and saves the return label (of the current instruction pointer) in \appvar{retAddr} and saves the process state before entering check mode (lines 9-10). This ensures that the process returns to the current instruction after the consistency check. Now after the consistency check phase the process switches out of context. At this point, \appvar{checkMode} is \appvar{true}, and hence the process enters the `else' branch on line 13. Consequently, we check whether there are no outstanding uncertified promises for the process (line 15). All the promises that have been certified are reset to belong to the process by setting \appvar{mess.flag} to the PID (lines 16-18). Then it is checked that there are no uncertified splitting insertions, by ensuring that \appvar{liveChain[x]} is not true (lines 20-22). Finally we check for unused reservations during ceritification and cancel them (lines 23-30). Once these checks for cnsistent configuration are complete, we reload the saved state from before the consistency check phase and reload the return address from \appvar{retAddr}. Then we move control to the instruction label in \appvar{retAddr}. After returning control to \appvar{label}, we set \appvar{checkMode} and \appvar{active} to false and exit context.

\lstinputlisting[caption=$\mathsf{CSO}$,multicols=2,style=customc,mathescape]{ps-cso.c}

\paragraph{\appvar{loadState} and \appvar{saveState} subroutines}
The \appvar{saveState} subroutine copies the local state of the calling process and the global state into a what we refer to as `copy' variables. We note that it does not however copy \appvar{numEE}, \appvar{reserv[x][t]} and contents of \appvar{messageStore}. The reason for this being, the promises the process makes in check mode are retained even after exiting check mode is made false. Hence the increments made to \appvar{numEE} and the messages added to \appvar{messageStore} should be maintained even after exiting check mode. This is even true for reservations, which are marked in \appvar{reserv[x][t]}, which are maintained evef after the process exits check mode.

Analogously in \appvar{loadState}, we load the contents of the (saved) `copy variables' into their original counterparts. Another subtle point to be noted is that when the process publishes a message (as a promise) when \appvar{checkMode} is true, we also update the `copy' variables corresponding to \appvar{avail[x][t]}. This is done so that when the process returns to normal mode, the changes are reflected in their original counterparts (which is essential since promise messages are maintained beyond the time \appvar{checkMode} is false and hence their timestamps must be unavailable).

\subsection{Reads}

We provide the translation codes for reads of both access types, $\rlx$ and $\ra$. We will first explain with respect to $\rlx$ access reads. 

\paragraph{$\rlx$ reads}

The read can be one of two types, view switching, in which a message from \appvar{messageStore} is acquired or a non view-switching (local) read. We guess non-deterministically, one amongst these.

In case of a local read (line 2), the process checks that the local value is usable (line 3) by checking \appvar{view[x].f} which denotes whether \appvar{view[x].v} is a valid value which can be read. It then loads its local value \appvar{view[x].v} into $\reg$. The local value may become unusable if the process crosses an SC-fence which increases its \appvar{view[x].t} (see $\fence$).

In the case of a view-switching read (line 6), we check that we have not reached the essential-event bound $K$ (line 7). We ensure that \appvar{liveChain[x]} is false before the read in order to forbid additive insertions when checking consistency. Recall from the \appvar{liveChain} invariant that \appvar{liveChain[x]} is true only when the process is in certification mode and the last write on $x$ was neither published as a promise message nor was it certified with a reservation. Reading a message from the memory when $\mathit{liveChain[x]}$ is true implies additive insertion during certification, as illustrated by the following example.

\textit{liveChain}
Assume the process is in the promise certification mode, with $\mathit{view[x].t}$ set to $t_1$, and let the first write use a timestamp $t_2 > t_1$ with the message not published as promise, with $\mathit{liveChain}[x]$ as true. Now the instruction a:=x uses a message in the memory with a timestamp $t_3 \geq  t_2$.

\setlength{\columnsep}{7pt}
\setlength{\intextsep}{7pt}
\begin{wrapfigure}{r}{3cm}
\fcolorbox{black}{yellow!20}{
\begin{tabular}{ll} 
x:=1; & // $t_2$  \\
a:=x; & // $t_3$ \\
x:=2; & // $t_3+1$\\
\end{tabular}}
\end{wrapfigure}

If the next write certifies a promise message, the interval in the message will be $t_3+1$, since \appvar{liveChain[x]} is true. This results in two writes during the certification, with non-adjacent timestamps  $t_2, t_3+1$, with \textit{only} the latter being promised. This behaviour is forbidded in \ps due to capped memories. Notice that if the earlier write also resulted in a promise message then we do not have additive insertion (since both are promised) and the read with timestamp $t_2$ is allowed since \appvar{liveChain[x]} is false.

Finally a new message is fetched from \appvar{messageStore} with a larger timestamp that the one in the current view (lines 8-11), the process view is updated to include that new message. Whenever a process makes a global read during check mode, it must reads from a message which has been created outside its current certification phase. Hence, \appvar{extView[x]} will be set to true (see \appvar{extView} invariant in the glossary).  

\lstinputlisting[caption=$\texttt{read}_\rlx$,multicols=2,style=customc]{ps-read-rlx.c}

\paragraph{$\ra$ reads}. This case is almost similar to the earlier and hence only state the point of difference. The main difference is that due to $\ra$ access, we merge (take the join of) all the timestamps rather than just $x$ as we did for $\rlx$.

\lstinputlisting[caption=$\texttt{read}_\ra$,multicols=2,style=customc]{ps-read-ra.c}

\subsection{Writes}
We now provied the translation of a write instruction $x=\reg$ of process. Once again we simulate two access modes, $\rlx$ and $\ra$. we first describe the relaxed mode and then discuss the changes for the $\ra$ mode.

\paragraph{$\rlx$ writes} \textit{When in normal mode} \newline
Let us first consider execution in the normal phase (i.e., when \appvar{checkMode} is false). The value of $\mathit{val}(\reg)$ is recorded in the local view, \appvar{view[x].v} and \appvar{view[x].f} is set to true meaning that the value in \appvar{view[x].v} is a valid value and can be read from.
Then, we non-deterministically choose one of three possibilities for the write:
it either (i) is not assigned a fresh timestamp,
(ii) is assigned a fresh timestamp,
(iii) fulfils some outstanding promise. These nondeterministic branches are given on lines 5, 24 and 60 of the code.

\lstinputlisting[caption=$\texttt{write}_\rlx$,multicols=2,style=customc]{ps-write-rlx.c}

In case (i), no message is created, and \appvar{view[x].l} is set to false, signifying that the timestamp recorded in the view does not correspond to the most recent write to $x$ and should therefore not be used in the comparisons. The `if' branch on line 7 is not taken \appvar{checkMode} is false.

In case (ii), since in this case, the timestamp in the view is by definition valid, we set \appvar{view[x].l} to true (line 25). Since the write is relaxed, the message generated will only store the timestamp on the variable written to (i.e. $x$) and 0 for all other variables (line 27-30). Now we allocate a new timestamp to the write. Since we are in normal mode, \appvar{liveChain[x]} is false (see \appvar{liveChain} invariant in glossary). Thus we choose a timestamp nondeterministically (line 36) and store it into \appvar{view[x].t}. We use the \appvar{avail[x][.]} array to ensure that allocated timestamps are unique: (1) we check that the selected timestamp is available (i.e., not allocated) on line 40, and remove it from the array of available stamps (line 41).
Now this message can either be published (for cnsumption by another thread) or not. In the former case, the appropriate message is constructed with \appvar{newView}, \appvar{newViewL}. Note that the last component of the message stores the flag \appvar{mess.flag}. This flag is set to false since the message is not a promise (see \appvar{mess.flag} invariant in glossary). In the latter case non of this is done (`else' branch on line 55). The \appvar{assume(!checkMode)} is satisfied.

In case (iii) Finally, if the process decides to fulfill a promise, a message is fetched from \appvar{messageStore} and checked to be an unfulfilled promise by the current process (checking \appvar{flag == p} on line 68), and \appvar{mess.flag} is set to $0$ and message reinserted into \appvar{messageStore}. Additionally we set \appvar{extView[x]} to true maintaining the \appvar{extView} invariant.

\paragraph{$\rlx$ writes} \textit{When in check mode} \newline

Let us now consider a write executing in the certification phase (i.e., when \appvar{checkMode} is true). We will only highlight differences between the normal and certification phase writes. 

In case (i), that is when a fresh timestamp is not assigned, the write is certified either by deferring certification to a promise by using splitting insertion (line 9) or by the a presence of a reservation (line 15). In the case where, \appvar{liveChain[x]} is already true (line 7), certification for the current sequence of writes is already deferredand hence we do none of the two. While certification by either of splitting/reservation we nondeterministically choose an timestamp \appvar{t} after which the current write occurs (line 12). We note that this is not the timestamp of the write itself, but specifies between which two timestamps from $\mathsf{Time}$ the write occurs. If we rely on splitting insertion (line 9), we set \appvar{liveChain[x]} to true, and  In case of certification by reservation we reserve an interval adjacent to the timestamp \appvar{t} (line 19) after ensuring that it is available (line 18). Finally since this reservation has been used in some certification, we mark this fact (line 20).  

In cases (ii), the write is assigned a timestamp from $\mathsf{Time}$ and hence consequently published as a promise. We allocate a fresh timestamp and store it into \appvar{view[x].t}. The most important point to note is that we maintain and use the \appvar{liveChain} invariant whenever a fresh timestamp is assigned.
Indeed, if \appvar{liveChain} is true, the process must assign consecutive timestamps, otherwise it can non-deterministically choose any timestamp greater than \appvar{view[x].t} (line 32-37). Additionally, when generating a message, the \appvar{mess.flag} is set to \appvar{-1} denoting that the message is promise but has been certified and publish the message. We also increment \appvar{numEE} (line 48) as a promise is an essential event.

In case (iii) we fulfill an older promise, and thus first retrieve an uncertified promise belonging to the current process (\appvar{mess.flag == PID}) from \appvar{messageStore} (line 68). The main difference with the normal mode is that we set \appvar{mess.flag} to \appvar{-1} signifying that the promise is (temporarily) certified but not fulfilled. We set the \appvar{extView[x]} to false signifying that the processes' view has come from \appvar{checkMode} and hence is not external.

\lstinputlisting[caption=$\texttt{write}_\ra$,multicols=2,style=customc]{ps-write-ra.c}

\paragraph{$\ra$ writes}

The $\ra$ writes have some minor differences w.r.t $\rlx$. Firstly, the timestamps for all variables \appvar{view[x][t]} are added to the published messages, (lines 27-30). Next we set \appvar{blockPromise[x]} to true signifying that henceforth there cannot be any promises on $x$ (refer to \ps, $\ra$ accesses). This also implies that cases (ii) and (iii) (generating new promises and certifying earlier promises) is not possible for $\ra$ writes as enforced on (line 23). Note that \appvar{blockPromise[x]} is also assumed to be false in $\rlx$ writes when either generating new promises (ii) or certifying earlier ones (iii). 

\subsection{$\cas$ operations}

We only provide code for the $\cas(\rlx,\rlx)$ variant since the others are implemented similarly, carrying over the access dependent changes from the corresponding \texttt{read} and \texttt{write} codes. $\cas$ is bootstrapping a $\texttt{read}$ and $\texttt{write}$, additioanlly enforcing that the timestamps are consecutive.   

\lstinputlisting[caption=$\cas$,multicols=2,style=customc]{ps-write-rlx.c}

\subsection{Fences}

\paragraph{$\fence$} The $\fence$ command essentially merges the thread local view with the globally stored view in \appvar{globalTimeMap}. For each shared variable $x$ we do the following. On line 3 we check whether the globally stored view \appvar{globalTimeMap[x]} is greater than the process local view. if that is the case, we increase the process-local view \appvar{view[x].t} to the globally stored view. Additionally, we set \appvar{view[x].f} to false since, the value in \appvar{view[x].val} is no more valid (cannot be read from again, since the process timestamp has increased). In the order case, (line 8), we raise the \appvar{globalTimeMap[x]} either to \appvar{view[x].t} (if it is valid, checked by line 9) or to the next higher timestamp, \appvar{view[x].t + 1}.

\lstinputlisting[caption=$\fence$,multicols=2,style=customc,mathescape]{ps-fence.c}

\newcommand{\psrlx}{\ps}
\newcommand{\cso}{CSO}
\newcommand{\scfence}{\texttt{sc-fence}}

\subsection{Correctness of Translation (Proof of Theorem \ref{thm:s2s})}
\label{app:bv}
The proof is in two parts. In the first part, we show that that every $K+n$ context bounded run of $\prog'$ in SC corresponds to a $K$-bounded run of $\prog$ under $\psrlx$, and in the second part, we show that for every $K$-bounded run in $\psrlx$, there is a $K+n$ context bounded run in SC.  

At the outset we review a high level description of the translation. We denote by \emph{normal} 
and $\mathit{checkMode}$, the two phases 
respectively where  
$\mathit{checkMode}$ is false and $\mathit{checkMode}$ is true. These are the two phases  in which a process functions. Each process executes instructions in the \emph{normal} phase by skipping over the $\cso$ blocks of code. When a process needs to switch out, it enters the $\cso$ block following the most recent instruction executed and sets $\mathit{checkMode}$ to true. Now, it makes a ``ghost'' run in $\mathit{checkMode}$, a terminology to indicate that this phase of the run does not change the the global state and local state of the process permanently (this is facilitated by the saveState and loadState functions). One exception to this is the writes that the process makes as reservations, and published promises which are maintained permanently. Hence, this part of the run is equivalent to the process making fresh promises after a \emph{normal} execution; providing a witness for consistency and then switching out of context. The run then is a sequence of interleaved \emph{normal} and $\mathit{checkMode}$ phases. Moreover, the local states of the process is identical at the start and end of any given $\mathit{checkMode}$ phase.

We request the reader to refer to the glossary [\ref{app:ctc}] of the variables used which will aid in better understanding of the translation.

We give the proof of correctness of the translation through two 
sections. 

\textbf{Intuition} The translation relies on the fact that in a run of the $K$-bounded $\ps$ program, it suffices to store the relative order only between $K$ totally ordered timestamps for each variable. Additionally, these $K$-timestamps are precisely those corresponding to the $K$ essential events - promises, reservations, view-altering reads. While we maintain an exact ordering between essential events, those of non-essential events (which are none of  view-altering reads, reservations or promises) are abstracted in the SC run. Thus in the original run under $\psrlx$, all timestamps are exact, while in the run under SC, the non-essential timestamps are abstracted away. 

The correctness of the translation then relies on being able to faithfully concretize the abstract timestamps from the SC run. We account for these concretizations by separating the essential timestamps by sufficiently large intervals, so that, the non-essential timestamps can be inserted in between, respecting their order.

\subsection*{SC to $\psrlx$}
\textbf{Details} We start from SC to $\psrlx$. We show that every $K+n$ context bounded run of $\prog'$ under SC corresponds to a $K$-bounded run of $\prog$ under $\psr$. Keeping in mind the description above, we split this proof into two parts. 
\begin{enumerate}
	\item First, we consider only runs in \emph{normal} mode and prove that they have an analog in $\psrlx$.
	\item Second, we prove that any run in $\mathit{checkMode}$  is indeed an analog of a process making fresh promises and reservations and certifying them along with previous unfulfilled promises, before switching out of context. 
\end{enumerate}
Combining these two, indeed, we will have a run under $\psrlx$.

We  begin by defining some terminology. Consider a run $\tau$ of program $\prog'$. Each event of the run $\tau$ is an execution of either a read, write, $\arw$ or $\fence$. 
A read in this run is called \emph{global} (and otherwise \emph{local}) if the process decides to read from the global array $\mathit{messageStore}$. Only global reads can be view-altering in the corresponding run under $\psrlx$. A write can be of three types - publishedS, publishedF and local. These represent, `simple published', `fulfilling published', and `timestamp not assigned writes' respectively. Note that each of these types can be performed in \emph{normal} as well $\mathit{checkMode}$. A $\arw$ can therefore be of 6 types since it involves a read and write. At a high level this translation is facilitated by the following two \textbf{key observations}:
\begin{itemize}
\item The number of publishedS, publishedF writes are bounded due to the bound $K$, and hence the requisite data-structure for these can be maintained using bounded space.
\item Local writes are unbounded, however, these writes are only used (read-from) locally by the writing process and need not be stored permanently by the algorithm. 
\end{itemize}

Let $w_1$ be the number of $write$ events in the \emph{normal} mode of run $\tau$, $w_2$ be the maximum number of $write$ events, maximum being taken over all $\mathit{checkMode}$ phases of the run, $u-1$ be the number of $\arw$ events in the run, and let $\ell = w_1 + w_2 + u$. Let $\mathsf{M_x}$, for each shared variable $x$, be an increasing function from $[2K]$ to $\mathsf{N}$ representing a mapping from the notion of time-stamps in SC to time-stamps in $\psrlx$. For each  variable $x$, and each process $p$, let $\mathsf{View_{SC}(x)} = \mathit{view}[x].t$ (defined above) and $\mathsf{View_{\psrlx}(x)}$ be the time stamp of $x$ in the view of $p$ in $\rho$. Given a run $\tau$, we will construct a $K$ bounded run $\rho$ of $\prog$ which reaches the same set of labels after $i$ events, for any $i$. 

We will first treat the \emph{normal} (non-$\mathit{checkMode}$) part of the run. While going through the steps, we will also construct the increasing functions $\mathsf{M_x}$. In addition to the invariants in $\ref{app:ctc}$, we maintain the following timestamp-based invariants for all processes $p$ and variables $x$. 

\begin{enumerate}
    \item If $\mathit{view}[x].l$ is true for a process in $\tau$, then $\mathsf{M_x}(\mathsf{View_{SC}(x))} = \mathsf{View_{\psrlx}(x)}$.
    \item If $\mathit{view}[x].l$ is true and the time-stamp $\mathit{view}[x].t$ corresponds to a write message instead of a message added due to a $\arw$, then $\mathsf{M_x}$($\mathit{view}[x].t$) =  $\mathit{view}[x].t \cdot \ell  \cdot u$
    \item If $\mathit{view}[x].l$ is false, then $\mathsf{M_x}(\mathit{view}[x].t) < \mathsf{View_{\psrlx}(x)} < (\mathit{view}[x].t+1) \cdot \ell \cdot u$. Moreover, if the last event to assign false to $\mathit{view}[x].l$ was a write,  then $\mathsf{View_{\psrlx}(x)}$ is a multiple of $u$.
    \item If a message is of type $\arw$, then its time-stamp $t$ in $\rho$ satisfies $t \not \equiv  0 \mod u$
    \item The sum of view-switch points and promises is $\leq K$ in $\rho$.
    \item The time-stamps of  essential messages in $\tau$ and the corresponding message in $\rho$ are related by $\mathsf{M_x}$. That is, $\mathsf{M_x}(\mathsf{View_{SC}(x)}) = \mathsf{View_{\psrlx}(x)}$.
\end{enumerate}

The base case, that is, after 0 events ($i=0$) is trivial since the configurations are semantically equivalent and we define $\mathsf{M_x}(0) = 0$ for all variables, which satisfies the invariants. We make the following three cases depending on the $i^{th}$ event of $\tau$. \\
\begin{itemize}
    \item Case 1. $e_i$ is an execution of a write for process $p$, variable $x$ and value $v$. 
    \begin{itemize}
    \item 
    If the write is of publishedS or publishedF type, then $\mathit{view}[x].t$ is updated from $t$ to a new time-stamp $t'$ (which in the case of publishedF is the timestamp of the retrieved message) and $\mathit{view}[x].l$ is assigned true. In $\rho$, if we can make $\mathsf{View_{\psrlx}(x)}$ = $t'' = t' \cdot \ell  \cdot u$ then the invariants are satisfied. It is not possible for $t''$ to have been assigned already to some write message in $\rho$ since $t'$ was not  assigned to some message in $\tau$ (checked using $\mathit{avail}[x][t']$). A $\arw$ message could not have been assigned $t''$ either, by the fourth invariant. Since $t<t'$, $\mathsf{View_{\psrlx}(x)} < t''$ (by invariants 2 and 3). Hence, $\mathsf{View_{\psrlx}(x)}$ can be updated to $t''$ since it is available and is greater than the current view. If the write is published, then the message is added to $\mathit{messageStore}$. This is done to maintain invariant (6). Note how, if the write is of publishedF type, the message flag is set to 0, effectively removing it from the promise bag and maintaining the $\mathit{flag}$ invariant (refer to [\ref{app:ctc}]).
    \item If the write is local, then we pick the smallest available multiple of $u$ between $\mathsf{M_x}(\mathit{view}[x].t)$ and $(\mathit{view}[x].t+1) \cdot \ell  \cdot u$. This can always be done since there are $\ell-1$ multiples of $u$ between $\mathit{view}[x].t \cdot \ell  \cdot u$ and $(\mathit{view}[x].t+1) \cdot \ell  \cdot u$ and there are $\leq (\ell-1)$ messages (even considering those produced in $\mathit{checkMode}$) in total. Notice that multiples of $u$ have been reserved for writes by invariant 4.
    \end{itemize}
    \item Case 2. $e_i$ is an execution of a read for process $p$, variable $x$. 
    \begin{itemize}
    \item If the read is local in $\tau$, then the process is either reading a local message written by itself or a useful message (a useful message is one which is read by a process, but does not create a change of view).      In either case, this read can be performed in $\rho$ without any change in time-stamps. Note that this cannot be a view-switching event. Moreover note that the local value in $\mathit{view}[x].v$ has been ascertained to be usable. 
    \item If the read is global, then $numEE < K$ before the read and therefore $numEE \leq K$ afterwards. In this case, a message is fetched from $\mathit{messageStore}$ and the process view is updated according to this message. Since $\mathsf{M_x}$ is an increasing function, the results of comparisons in SC will be the same as in $\psrlx$ and the read operation has the same effect on values and time-stamps of the variables. Moreover $\mathit{view}[x].f$ is set to true maintaining the $\mathit{view}[x].f$ invariant [\ref{app:ctc}].
    \end{itemize}
    \item Case 3. $e_i$ is an execution of a $\arw$  for process $p$, variable $x$ and values $v$, $v'$. 
    \begin{itemize}
    \item If the read here is local, and $\mathit{view}[x].u$ is true then we need to ensure that the timestamp chosen for the write immediately follows $\mathsf{M_x}(\mathit{view}[x].t)$. It is first checked if $\mathit{view}[x].t$ has been used for an update earlier or not. If it has not been, then the time-stamp $\mathsf{M_x}(\mathit{view}[x].t) + 1$ is available in $\psrlx$ since all messages that come from writes have time-stamps in multiples of $u$ and $\mathsf{M_x}(\mathit{view}[x].t)$ is a multiple of $u$. Note, that we also ensure that $\mathit{view}[x].f$ is true in this case, which implies that the local value is usable.
    \item If the read here is local and $\mathit{view}[x].u$ is false (and hence so is $\mathit{view}[x].l$), then it definitely has not been used for an update ($\arw$) in $\tau$ since the process reading the message is the only one that knows of its existence. Now, if this message was a result of a local write, then its time-stamp $t$ in $\psrlx$ is a multiple of $u$ and $t+1$ is available for the update message. Otherwise, this message was a result of a $\arw$ whose write was local and has a time-stamp of the form $a \cdot u + b$ where $b<u$. Note that this implies $b-1$ consecutive $\arw$s were made to get here since all the messages that are a result of (non-$\arw$) write operations get time-stamps that are multiples of $u$. Since $u-1$ is the total number of $\arw$s in $\tau$, $b < u-1$ (at most $u-2$ $\arw$s have taken place before this one). This implies $a \cdot u+b+1$ is available and can be used for the write.
    \item If the read is global, then it is done correctly as explained in Case 2. The write part of the $\arw$ goes through as explained above.
\end{itemize}
    \item Case 4: $e_i$ is an $\fence$
    \begin{itemize}
        \item We iterate over the variables, updating $\mathit{globalTimeMap}[x]$ and $\mathit{view}[x].t$ to the maximum of the two.
        \item In case the former was greater, we set $\mathit{view}[x].l$ to true,  signifying that $\mathit{view}[x].t$ is valid and maintaining invariant (1) above. Moreover we set $\mathit{view}[x].f$ to false. This is necessary since, the timestamp of the message corresponding to $\mathit{view}[x].v$ is now less than $\mathit{view}[x].t$ and hence the locally stored value is unusable.
        \item If the latter is greater, we check whether $\mathit{view}[x].l$ is true (which signifies that $\mathit{view}[x].t$ is valid). If it is we can set $\mathit{globalTimeMap}[x]$ to it. If not, then the $\mathsf{M_x}(\mathit{view}[x].t) < \mathsf{View_{\psrlx}(x)}$ (by invariant (6)), and hence we set it to $\mathit{view}[x].t + 1$. Finally we note that $\mathsf{View_{\psrlx}(x)} < (\mathit{view}[x].t + 1)\cdot \ell \cdot u$ and hence $\mathsf{M_x}(\mathit{globalTimeMap}[x])$ now matches the essential event immediately following the event with timestamp $\mathit{view}[x].t$.
    \end{itemize}
\end{itemize}

We now briefly justify the $\mathit{checkMode}$ phase of the run. For any such phase, we need to ascertain that the run has analogous run in $\psrlx$ which respects the notion of consistency. The management of timestamps is identical to the \emph{normal} phase explained above so we only highlight the special aspects. First we recall some invariants:
\begin{enumerate}
    \item $\mathit{liveChain}[x]$ is true only when the most recent write made in the \textit{current} $\mathit{checkMode}$ phase was unpublished (was not a promise) and neither was it certified using a reservation.  
    \item $\mathit{extView}[x]$ is true if $\mathit{view}[x].v$ corresponds to a message from outside $\mathit{checkMode}$.
    \item For the process $p$ currently in $\mathit{checkMode}$, $message\_flag$ is -1 for temporarily (only within current $\mathit{checkMode}$ phase) certified promises and is $p$ for as yet uncertified promises. If it is $p' \neq p$, then the message is in the promise bag of some other process. Additionally if it is 0, it is not in the promise bag of any process. Note how this is maintained in the write, $\arw$ sections above.
\end{enumerate}
We review how these invariants are maintained and used throughout the code. When entering $\mathit{checkMode}$, $\mathit{liveChain}[x]$ is false. For any write happening in \emph{normal} phase we set $\mathit{extView}[x]$ to true. Otherwise we set it to false. Once again we consider cases for a particular event $e_i$:
\begin{itemize}
    \item Case 1. $e_i$ is a write event.
    \begin{itemize}
        \item In the case, the process performs a local write, the process can either set $\mathit{liveChain}[x]$ is set to true, maintaining the invariant or it can generate a reservation which will be used to certify the write. In this case the reservation is marked as used.
        \item In the case the process decides to publish a write it must publish it as a promise, incrementing $numEE$ (after checking that the bound of $K$ has not been crossed), setting the promise flag to -1, maintaining invariant (3) above (leading to a publishedS write). Also, if it decides to certify a previous promise , it does so, similar to the \emph{normal} phase, though it now sets the timestamp to -1, indicating that the certification is local to the current phase and must be reset when normal phase resumes. Moreover (publishedF write) note that $\mathit{liveChain}[x]$ is set to false  maintaining invariant (1).
        \item Also, note that $\mathit{extView}[x]$ is set to true maintaining invariant (2).
    \end{itemize}
    \item Case 2. $e_i$ is a read event.
    \begin{itemize}
        \item The main highlight of read events in $\mathit{checkMode}$, is that we ascertain that $\mathit{liveChain}[x]$ is false while making a global read. This is to ensure that we forbid additive insertion. Indeed, following invariant (1) above, if $\mathit{liveChain}[x]$ were true during a global read, it would mean that the interval corresponding to the previous message (which caused $\mathit{liveChain}[x]$ to be true) is additively.
    \end{itemize}
    \item Case 3. $e_i$ is a $\arw$ event.
    \begin{itemize}
        \item Once again similar to \emph{normal} phase we guess whether we make a local or a global read. Crucially however, we note that we forbid making a local write for a $\arw$ when $\mathit{extView}[x]$ is true. Considering the invariant (2) above, this is done precisely to forbid $\arw$ where, the promised interval containing the write is non-adjacent to the message being read from. The remainder book keeping is identical to previous cases.
    \end{itemize}
    \item Case 4. $e_i$ is a $\fence$ event. This case does not arise since a process in $\mathit{checkMode}$ may not execute a $\fence$ instruction,  as otherwise  the run will not be consistent \cite{promising,promising2}.
\end{itemize}

To conclude, note due to loadState and saveState functions, only used reservations and promises are retained after the $\mathit{checkMode}$ phase. Moreover due to the check of message flags after termination of a $\mathit{checkMode}$ phase, it is ensured that the process is in a consistent state while switching contexts. Noting that we keep track of promises as well as view-switches using $numEE$ we may only generate a run in which the sum of the two is bounded by $K$.

Next, we consider the converse direction from $\psr$ to SC.  
\subsection*{$\psrlx$ to SC}
We now prove the second part, from $\psrlx$ to SC. We prove that for every $K$-bounded run $\rho$ in $\psrlx$, there is a $K+n$ context bounded run $\tau$ in SC. We will show this in two steps.
\begin{itemize}
\item  Given the $K$-bounded $\rho$, first we will construct a run $\rho''$ which is $K$-bounded and $K+n$ context bounded that reaches the same configuration as $\rho$. 
\item We will then construct a run $\tau$ of SC using $\rho''$. 	
\end{itemize}
\textbf{Intuition} While we concretized the abstract (non-essential) timestamps when going from SC to $\psrlx$ earlier now we do the opposite. However, we will additionally show that $K+n$ SC contexts suffice for the translation. The way we account for the $K+n$ contexts is as follows - $n$ contexts for the process initializations and (atmost) one context for each essential event.    

Hence, we ensure that atleast one essential event occurs in each context. This is possible for the following reason. Consider a run with $K$ essential events occuring in some order executed by processes $p_1$ to $p_K$. If we schedule the processes $p_i$ in the run under SC in the same order, then we will get a valid run under SC. Since view-switches account for all the external reads-from dependencies, the runw which we obtain is also valid.

More concretely, we ensure that each process only switches out of context only when it is awaiting a message for an external read from another process or when it has made atleast one promise or reservation. Since the total number of such essential events along a \emph{normal} phase + additional messages in all $\mathit{checkMode}$ phases is bounded above by $K$, we need at most $K+n$ context switches. We add $n$ for the concluding contexts required to reach the $\terminated$ configurations.
\newline\textbf{Details}
Let $\mathit{rf}$ (called $reads$-$from$) be a binary relation on events such that $(e_a, e_b) \in \mathit{rf}$ iff $e_b$ reads from a message $\textit{published}$ by $e_a$. Note that every run under $\psrlx$ semantics defines a $\mathit{rf}$ relation as the reads are executed. For construction of $\rho''$, the intuition is that a context switch is required only when the current process has reached $\terminated$ or it needs a message that is yet to be published by some other process. At a configuration $\conf_i$ of $\rho$, we say that an event of $\rho$ is a \emph{requesting} event if it is a view-altering event in $\rho$ and it reads a message that is not in the message pool at $\conf_i$. Also, we call the events that publish messages for these events as \emph{servicing} events ($\mathfrak{write}$ or $\arw$, either simple or promises). Note that the set of servicing and requesting events is dependent on the configuration $\conf_i$. The two sets change along the run $\rho$. Specifically, an event is removed from the requesting event set as soon as the servicing event corresponding to it is executed. Let the size of the set of requesting events be $r$. At $\initconf$, $r = K$. We will prove by induction that given a set of processes ($n$), the $\mathit{rf}$ relation, and a run $\rho$ in $\psrlx$ that maintains the $\mathit{rf}$ relation, there is a run which uses at most $r+n$ context switches and defines the same $\mathit{rf}$ relation.

\noindent{\bf{The Base Case}}.
For $r+n=1$, there is only one process so the number of context switches is $0$ and the run $\rho$ itself uses 0 context switches. 

\noindent{\bf{The Inductive Step}}.
Assume the hypothesis for $r+n=\ell$ and we prove the claim for $r+n=\ell+1$. Clearly at $\initconf$, there is at least one process which either has no requesting events, or has a servicing event before any requesting events in its instruction sequence. Otherwise, the run $\rho$ will not be able to execute all the events since no process will be able to move past its requesting event.
If we have a process that can reach termination directly, then in $\rho''$, we run that process and reduce $r+n$. Otherwise, consider the instructions of the process ($p_j$) that has a servicing event before any of its requesting events.
The instructions of $p_j$, till the first requesting event, can be executed since all the messages they need are already in the pool and hence we can create a new run $\rho_t$ in which these instructions are executed first and the remaining ones follow the same order as $\rho$. Note that $\rho_t$ reduces $r$ by at least $1$ while executing the instructions of $p_j$. By applying the hypothesis on the remaining sequence of instructions, we have a run that uses $r-1+n$ context switches and that maintains $\mathit{rf}$ of the remaining instructions. This can now be combined by the instructions of $p_j$ that have already been executed to give $\rho''$.  \\
We now construct the run $\tau$ from $\rho''$. As explained in the text above, at most $2K$ time-stamps are needed to simulate the $\rho''$. Let the set of such time-stamps be $U\_x$ for each variable $x$. Let $\mathsf{M_x}$ be an increasing (mapping) function for each variable from $U\_x \cup \{0\}$ to $\{0, \dots 2K\}$ such that $\mathsf{M_x}(0)=0$. 

We will construct the run $\tau$ in SC from $\rho''$, event by event, while maintaining the following invariants
\begin{enumerate}
    \item All the time-stamps, in a particular message in $\mathit{messageStore}$, are related to the time-stamps in the corresponding essential messages in $\psrlx$ by $M_x$.
    \item  For a process $p$, $\mathsf{View_{\psrlx}(x)} \in U\_x$ iff $\mathit{view}[x].l$ is true at that point in SC and $\mathit{view}[x].t$ = $\mathsf{M_x}(\mathsf{View_{\psrlx}(x)}))$ 
\end{enumerate}
The $i^{th}$ event of $\rho''$ can be one of the following: 
\begin{itemize}
    \item Case 1. $e_i$ is a write to variable $x$ with value $v$.
    \begin{itemize}
    \item If the time-stamp $t$ of this write belongs to $U\_x$, then we first allocate $M_x(t)$ in SC to this write and make $\mathit{view}[x].l$ true. This maintains invariant (2).
    \item If the event is a servicing event, then the time-stamp of this message satisfies the requirements of invariant (1) and hence it can be added to $\mathit{messageStore}$. 
    Otherwise, we do not update the $\mathsf{View_{SC}(x)}$ of the process and make $\mathit{view}[x].l$ false.
    \end{itemize}  
    \item Case 2. $e_i$ is a read of variable $x$. \\
         If this event is a view-altering event, then the current timestamp in the 
         $\mathsf{View_{\psrlx}}$ will be used for comparison. The effect of the read in SC will be same as in $\psrlx$ since $V\_x$ is an increasing function. All the invariants will still hold after this, since all the messages in $\mathit{messageStore}$ satisfy the invariants.
    \item Case 3. $e_i$ is a $\arw$ to variable $x$ with values $v, v'$.
    If this event is not view-altering, then the process either reads some other process's message again or reads its own. If it reads its own message, then no change to the $\mathsf{View_{SC}(x)}$ has to be done for the read part and the new message is added to $\mathit{messageStore}$ if $e_i's$ message is essential. If it reads some other processes' message again, then $\mathit{view}[x].l$ is true, and since this message has not been used for a $\arw$ yet, the check of $upd\_x[\mathit{view}[x].t]$ will go through in $Prog'$. Now, it needs to be decided if the new message is essential. If the read is view-altering, then it is similar to Case 2 followed by the decision of adding the new message to $\mathit{messageStore}$.
    \item Case 4. $e_i$ is an $\fence$
        If $\mathit{globalTimeMap}[x]$ is greater than $\mathit{view}[x].t$, we maintain invariants (2) by setting $\_\mathit{view}[x].l$ to true and the $\mathit{view}[x].f$ invariant [\ref{app:ctc}] by setting it to $\mathit{view}[x].f$. On the other hand, if $\mathit{view}[x].t$ is greater, we set $\mathit{globalTimeMap}[x]$ to the smallest member $t \in \mathsf{Time}$, which satisfies $t \geq \mathsf{M_x}(\mathsf{View_{\psrlx}}(x))$. In case $\mathit{view}[x].l$ is true, $t$ is $\mathit{view}[x].t$ itself by invariant (2). If not, then we set it to $\mathit{view}[x].t + 1$, since we note that $\mathit{view}[x].t$ is the largest member of $\mathsf{Time}$, that $p$ has had as $\mathsf{View_{\psrlx}}(x)$, and currently the former is lower than $\mathsf{M_x}(\mathsf{View_{\psrlx}}(x))$.
\end{itemize}

\newpage
\section{Complete Experimental Results}

We report the results of experiments we have performed with \tool. We have two objectives: (1) studying the performance of \tool~ on benchmarks which are unsafe only with promises and (2) comparing \tool~ with other model checkers when operating in the promise free mode.
In the first case, we show that \tool~ is able to uncover bugs in examples with low interaction with the shared memory. When this interaction increases, however, \tool~ performs poorly, owing to the huge non-determinism required by \ps. However, with partial promises, \tool~ is once again able to uncover bugs in reasonable amounts of time. 
In the second case, our observations highlight the ability to detect hard to find bugs with small $K$ for unsafe benchmarks, and scalability by altering $K$ as discussed earlier in case of safe benchmarks. We compare \tool with three state-of-the-art stateless model checking tools, $\cdsc$ \cite{cdsc}, $\genmc$ \cite{genmc} and $\rcmc$ \cite{rcmc} that support the promise-free subset of the \ps semantics. 

We now report results of all the experiments we have performed with \tool. In the tables that follow we 
provide the value of $K$ used (for our tool only). We also specify the value of $L$ used (for all tools).

We do not consider compilation time for any tool while reporting the results. For our tool, the time reported is the time taken by the CBMC backend for analysis. The timeout used is 1 hour for all benchmarks. All experiments are conducted on a machine equipped with a 3.00 GHz Intel Core i5-3330 CPU and 8GB RAM running a Ubuntu 16 64-bit operating system. We denote timeout by `TO', and memory limit exceeded `MLE'.

\subsection{Experimenting with Promises}

In this section we experiment with \tool~ in the \textit{promise-enabled} mode. 

\paragraph{Litmus Tests}
We first test the tool on a number of litmus tests obtained from various sources. This has two objectives: (a) to perform sanity checks on the correctness of the tool (b) to gain an understanding of the causes of performance bottlenecks when handling promises. The results of these tests are summarized in Table \ref{tab:prom0} below. We tested \tool~ on many litmus tests from \cite{promising,promising2,weakestmo,Svendsen:2018}. In these \tool~ terminated with the correct result within one minute, with the value of $K$
used for the unsafe trace being atmost 5. We also tested \tool~ on the Java Causality Tests of Pugh \cite{jmm}, which were also experimented on in \citet{mrder}. In these too we were able to verify most examples within one minute. However, \tool~ timed out (TO = 30 mins) on two tests.
\begin{table}[!htb]
\small
\begin{minipage}{0.5\textwidth}\centering
\begin{tabular}{cccc}
\hline
\textbf{testcase} & $K$ & \textbf{\tool}
\\
\hline\hline
ARM\_weak  & 4 & 0.765s  \\ 
Upd-Stuck & 4 & 1.252s \\
split & 4 &  25.737s  \\ \hline
LB & 3 & 1.469s \\
LBd & 3 & 1.481s \\
LBfd & 3 & 1.512s \\
LBcu & 4 &  5.253s \\ 
LB2cu  & 4 & 5.748s \\ \hline
CYC & 5 & 1.967s \\
Coh-CYC & 5 & 42.67s
\end{tabular}
\end{minipage}\begin{minipage}{0.5\textwidth}\centering
\begin{tabular}{cccc}
\hline
\textbf{testcase} & Testcase-Safety & $K$ & \textbf{\tool}
\\
\hline\hline
Pugh2  & Unsafe &  3 & 13.725s  \\ 
Pugh3  & Unsafe & 3 & 12.920s  \\ 
Pugh6  & Unsafe & 3 & 0.360s  \\ 
Pugh8  & Unsafe & 3 & 1.67s  \\ \hline
Pugh4  & Safe & 5 & 3.244s  \\ 
Pugh5  & Safe & 5 & 4.811s  \\ 
Pugh10 & Safe & 5 & 3.868s \\
Pugh13 & Safe & 5 & 3.345s \\ \hline
Pugh14 & - & 3 & TO \\
Pugh15 & - & 3 & TO \\
\end{tabular}
\end{minipage}
\caption{Performance of \tool~ on \ps~ idioms}
\label{tab:prom0}
\end{table}
\vspace{-0.6cm}

\paragraph{Modular Promises}

In this section we ask whether the source-to-source translation technique can effectively scale while handling promises for \ps. In conclusion, we note that our approach performs well on programs requiring limited global memory interaction. When this interaction increases \tool~ times out, owing to the huge non-determinism of \ps. However, the modular approach of partial-promises enables us to recover effective verification.

\begin{table}[h]
\small
\begin{tabular}{cccc}
\hline
\textbf{testcase} & $K$ & \textbf{\tool}[1p]
\\
\hline\hline
fib\_global\_2  & 4 & 55.972s \\
fib\_global\_3  & 4 & 2m4s  \\ 
fib\_global\_4 & 4 &  4m20s  \\ \hline
exp\_global\_1  & 4 & 19m37s \\
exp\_global\_2  & 4 & 41m12s  \\  \hline
tri\_global\_2  & 4 & 52.973s \\
tri\_global\_3  & 4 & 1m57s  \\  
tri\_global\_4  & 4 & 3m58s \\ \hline
\end{tabular}
\caption{Performance of \tool~ on cases with global update}
\end{table}
\vspace{-0.6cm}

\subsection{Comparing Performance with Other Tools}

\begin{table}[h!]
\small
\begin{tabular}{ccccccc}
\hline
\textbf{benchmark} & $L$ & $K$ & \textbf{\tool} & \textbf{CDSChecker} & \textbf{GenMC} & \textbf{RCMC} \\ \hline\hline
exponential\_5\_unsafe      & 10 & 10 & 1.312s          & 0.900s          & 0.135s & 6.692s           \\
exponential\_10\_unsafe      & 10 & 10 & 1.854s          & 1.921s          & 0.367s & 3m41s           \\
exponential\_25\_unsafe      & 25 & 10 & 3.532s         & 7.239s          & 3.736s & TO                 \\
exponential\_50\_unsafe      & 50 & 10 & 6.128s         &  36.361s            & 39.920s  & TO                 \\
exponential\_70\_unsafe      & 10 & 10 & 9.509s          & 1m33s          & 2m29s & TO           \\
\hline
fibonacci\_2\_unsafe                 & 2  & 20 & 2.746s          & 2.332s     & 0.084s     & 0.086s          \\
fibonacci\_3\_unsafe                 & 3  & 20 & 9.392s         &  46m8s         & 0.462s     & 0.544s          \\
fibonacci\_4\_unsafe                 & 4  & 20 & 34.019s         & TO         & 12.437s     & 18.953s            \\ \hline 
fibonacci\_2\_safe                 & 2  & 20 & 6.454s          & 8.900s     & 0.096s     & 0.162s          \\
fibonacci\_3\_safe                 & 3  & 20 & 30.936s         &  TO         & 0.910s     & 3.884s          \\
fibonacci\_4\_safe                 & 4  & 20 & 2m16s         & TO         & 1.140s     & 2m36s            \\ \hline 
\end{tabular}
\caption{Comparison of performance on a set of parameterized benchmarks}
\vspace{-0.6cm}
\end{table}

\begin{table}[h]
\vspace{0.6cm}
\small
\begin{tabular}{ccccccc}
\hline
\textbf{benchmark} & $L$ & $K$ & \textbf{\tool} & \textbf{CDSChecker} & \textbf{GenMC} & \textbf{RCMC} \\ \hline\hline
hehner2\_unsafe    & 4 & 5 & 7.207s & 0.033s & 0.094s & 0.087s \\ 
hehner3\_unsafe    & 4 & 5 & 28.345s & 0.036s & 2m53s & 1m13s \\ \hline
linuxlocks2\_unsafe    & 2 & 4 & 0.547s & 0.032s & 0.073s & 0.078s \\ 
linuxlocks3\_unsafe    & 2 & 4 & 1.031s & 0.031s & 0.083s & 0.081s \\ \hline
queue\_2\_safe & 4 & 4 & 0.180s & 0.031s & 0.082s & 0.085s \\
queue\_3\_safe & 4 & 4 & 0.347s & 0.037s & 0.090s & 0.092s \\ \hline
\end{tabular}
\caption{Comparison of performance on concurrent data structures based benchmarks}
\vspace{-0.6cm}
\end{table}

\begin{table}[h]
\small
\begin{tabular}{ccccccc}
\hline
\textbf{benchmark} & $L$ & $K$ & \tool & \textbf{CDSChecker} & \textbf{GenMC}& \textbf{RCMC}  \\ \hline\hline
readerwriter\_7              & 0 & 5 &  0.719s     & 0.005s & 0.057s & 0.690s        \\ 
readerwriter\_8              & 0 & 5 & 0.839s     & 0.006s & 0.056s & 7.425s        \\ 
readerwriter\_9              & 0 & 5 & 1.068s     & 0.007s & 0.053s & 1m17s         \\ 
readerwriter\_10             & 0 & 5 & 1.393s
    & 0.007s & 0.056s & 14m49s        \\ \hline
redundant\_co\_10  & 10 & 5 & 0.470s          & 0.114s              & 0.087s & 38m12s        \\ 
redundant\_co\_20  & 20 & 5 & 1.031s          & 0.548s              & 0.218s & TO            \\ 
redundant\_co\_50  & 50 & 5 & 3.219s          & 8.965s              & 4.143s & TO            \\
redundant\_co\_70  & 70 & 5 & 6.093s          & 13.843s             & 18.185s & TO            \\\hline
\end{tabular}
\caption{Evaluation using two synthetic safe benchmarks. We note that the value of $K$ is chosen to be large enough to consider all executions.}
\end{table}

\begin{table}[h]
\small
\begin{tabular}{ccccccc}
\hline
\textbf{benchmark} & $L$ & $K$ & \textbf{\tool} & \textbf{CDSChecker} & \textbf{GenMC} & \textbf{RCMC} \\ \hline\hline
peterson1U(4)    & 1 & 6 & 1.408s & 0.039s & TO & 9.129s \\ 
peterson1U(6)    & 1 & 6 & 7.286s & 0.010s & TO & TO \\
peterson1U(8)    & 1 & 6 & 47.786s & TO & TO & TO \\ 
peterson1U(10)   & 1 & 6 & 4m19s & TO & TO & TO \\ 
\hline
szymanski1U(4)   & 1 & 2 & 1.015s & 0.043s & MLE & TO \\
szymanski1U(6)   & 1 & 2 & 2.771s & TO & MLE & TO \\ 
szymanski1U(8)   & 1 & 2 & 6.176s & TO & TO & TO \\ 
szymanski1U(10)  & 1 & 2 & 12.203s & TO & TO & TO \\
\hline
\end{tabular}
\caption{Comparison of performance on mutual exclusion benchmarks with a single unfenced process}
\vspace{-0.5cm}
\end{table}

\begin{table}[h]
\vspace{0.5cm}
\small
\begin{tabular}{ccccccc}
\hline
\textbf{benchmark} & $L$ & $K$ & \textbf{\tool} & \textbf{CDSChecker} & \textbf{GenMC} & \textbf{RCMC} \\ \hline\hline
peterson1C(3)    & 1 & 2 & 0.487s & 0.053s & 0.083s & 0.087s \\ 
peterson1C(4)    & 1 & 2 & 1.193s & 3.500s & TO & 3.360s \\ 
peterson1C(5)    & 1 & 2 & 2.713s & TO & TO & TO \\
peterson1C(6)    & 1 & 2 & 6.045s & TO & TO & TO \\ 
peterson1C(7)    & 1 & 2 & 11.008s & TO & TO & TO \\ \hline
peterson2C(3)    & 1 & 2 & 0.481s & 0.032s & 0.099s & 0.091s \\ 
peterson2C(4)    & 1 & 2 & 1.241s & 0.037s & TO & 9.162s \\ 
peterson2C(5)    & 1 & 2 & 2.801s & 1m47s & TO & TO \\
peterson2C(6)    & 1 & 2 & 6.528s & TO & TO & TO \\ 
peterson2C(7)    & 1 & 2 & 11.030s & TO & TO & TO \\ \hline
\end{tabular}
\caption{Comparison of performance on completely fenced peterson mutual exclusion benchmarks with a bug introduced in the critical section of a single process}
\end{table}
\vspace{-0.5cm}

\begin{table}[h]
\small
\begin{tabular}{ccccccc}
\hline
\textbf{benchmark} & $L$ & $K$ & \tool & \textbf{CDSChecker} & \textbf{GenMC} & \textbf{RCMC} \\ \hline\hline
peterson(3)    & 1 & 2 & 0.878s & TO & 9.665s & 26.208s \\
peterson(2)    & 1 & 2 & 0.321s & 0.325s & 0.087s & 0.068s \\ \hline
peterson(3)    & 2 & 4 & 1.695s & TO & MLE & TO \\
peterson(2)    & 2 & 4 & 0.539s & 15m22s & 0.039s & 0.428s \\ \hline
peterson(3)    & 4 & 4 & 15.900s & TO & MLE & TO \\
peterson(2)    & 4 & 4 & 3.412s & TO & TO & TO \\ \hline
\end{tabular}
\caption{Evaluation using safe mutual exclusion protocols}
\vspace{-0.5cm}
\end{table}
\vfill

\end{document}